\newcommand{\keywords}[1]{\par\addvspace\baselineskip
\noindent\keywordname\enspace\ignorespaces#1}
\newcommand{\ie}{i.e.\@\xspace}
\newcommand{\wrt}{w.r.t.\@\xspace}
\newcommand{\eg}{e.g.\@\xspace}
\DeclareMathAlphabet{\mathpzc}{OT1}{pzc}{m}{it}
\newcommand{\pProgs}{\textnormal{\textsf{pProgs}}\xspace}   
\newcommand{\Vars}{\ensuremath{\mathsf{Var}}\xspace}   
\newcommand{\Vals}{\ensuremath{\mathsf{Val}}\xspace}    
\newcommand{\DExprs}{\ensuremath{\mathsf{DExp}}\xspace}    
\newcommand{\Dists}{\ensuremath{\mathcal{D}}}
\newcommand{\appProgs}{\Stmt}
\newcommand{\E}{\ensuremath{\mathbb{T}}}
\newcommand{\States}{\ensuremath{\Sigma}}
\newcommand{\Rpos}{\ensuremath{\mathbb{R}_{{}\geq 0}}}
\newcommand{\Rposinf}{\ensuremath{\Rpos^{\infty}}}
\newcommand{\Nats}{\ensuremath{\mathbb N}\xspace}
\newcommand{\PReals}{\ensuremath{\mathbb R_{\geq 0}}\xspace}
\newcommand{\ZO}{[0,\,\! 1]}                             
\newcommand{\lrule}[1]{\textnormal{\small \textsf{[#1]}}\xspace}
\newcommand{\SKIP}{\ensuremath{\textnormal{\texttt{skip}}}}
\newcommand{\EMPTY}{\ensuremath{\textnormal{\texttt{empty}}}}
\newcommand{\HALT}{\ensuremath{\textnormal{\texttt{halt}}}}
\newcommand{\AssignSymbol}{\mathrel{\textnormal{\texttt{:=}}}}
\newcommand{\ASSIGN}[2]{\ensuremath{#1 \AssignSymbol #2}}
\newcommand{\AppAssignSymbol}{\mathrel{\textnormal{\texttt{:}}\hspace{-.1em}{\approx}}}
\newcommand{\APPASSIGN}[2]{\ensuremath{#1 \AppAssignSymbol\hspace{.1em} #2}}
\newcommand{\COMPOSE}[2]{\ensuremath{#1\textnormal{\texttt{;}}\: #2}}
\newcommand{\NDCHOICE}[2]{\ensuremath{\left\{{#1}\right\}\mathrel{\Box}\left\{{#2}\right\}}}
\newcommand{\ITE}[3]{\ensuremath{\textnormal{\texttt{if}}\:(#1)\:\{#2\}\:\textnormal{\texttt{else}}\:\{#3\}}}
\newcommand{\If}{\ensuremath{\textnormal{\texttt{if}}}}
\newcommand{\Else}{\ensuremath{\textnormal{\texttt{else}}}}
\newcommand{\WHILEDO}[2]{\ensuremath{\textnormal{\texttt{while}}\:(#1)\:\{#2\}}}
\newcommand{\WHILE}{\ensuremath{\textnormal{\texttt{while}}}}
\newcommand{\Do}{\ensuremath{\textnormal{\texttt{do}}}}
\newcommand{\BOUNDEDWHILE}[3]{\ensuremath{\WHILE^{<{#1}}\:(#2)\:\{#3\}}}
\newcommand{\eetsymbol}{\ensuremath{\textnormal{\textsf{ert}}}}
\newcommand{\boldeetsymbol}{\ensuremath{\textnormal{\textsf{\textbf{ert}}}}}
\newcommand{\eet}[2]{\ensuremath{\eetsymbol\left[{#1}\right]\left({#2}\right)}}
\newcommand{\Exp}[2]{\ensuremath{\textnormal{\textsf{E}}_{#1}\left({#2}\right)}}
\newcommand{\subst}[2]{\ensuremath{\left[{#1}/{#2}\right]}}
\newcommand{\ind}[1]{\ensuremath{{\chi_{#1}}}}
\newcommand{\zero}{\ensuremath{\boldsymbol{0}}}
\newcommand{\one}{\ensuremath{\boldsymbol{1}}}
\newcommand{\infinity}{\ensuremath{\boldsymbol{\infty}}}
\newcommand{\lfp}{\ensuremath{\textnormal{\textsf{lfp}}\,}}
\newcommand{\ExpRew}[2]{\ensuremath{\textnormal{\textsf{ExpRew}}^{#1}\left(#2\right)}}
\newcommand{\eval}[1]{\ensuremath{\llbracket {#1} \rrbracket}}
\newcommand{\probof}[2]{\ensuremath{\llbracket {#1} \colon {#2} \rrbracket}}
\newcommand{\probofInstance}[3]{\ensuremath{\probof{#1}{#2}({#3})}}
\newcommand{\pto}{\ensuremath{\rightharpoonup}}
\newcommand{\ctert}[1]{\ensuremath{\mathbf{#1}}}
\newcommand{\pexprdeno}[1]{\llbracket #1 \rrbracket}
\newcommand{\ExpToFun}[1]{\eval{#1}}
\newcommand{\true}{\ensuremath{\mathsf{true}}\xspace}
\newcommand{\false}{\ensuremath{\mathsf{false}}\xspace}
\providecommand{\eqdef}{\raisebox{-.2ex}[.2ex]{$\:\stackrel{\scriptscriptstyle \vartriangle}{=}\:$}}
\newcommand{\To}{\rightarrow}                          
\newcommand{\ev}[2]{\mathbb{E}_{#1}\left(#2\right)}  
\newcommand{\supn}{\sup\nolimits_{n}}
\newcommand{\rmdp}{\ensuremath{\mathfrak{M}}}
\newcommand{\mdpStates}{\ensuremath{\mathcal S}\xspace}
\newcommand{\mdpTPF}{\ensuremath{\textnormal{\textbf{P}}}\xspace} 
\newcommand{\mdpAct}{\ensuremath{\mathit{Act}}\xspace}
\newcommand{\mdpInit}{\ensuremath{s_0}\xspace}
\newcommand{\mdpRew}{\ensuremath{\mathit{rew}}\xspace}
\newcommand{\mdpSched}{\ensuremath{\mathfrak{S}}\xspace}
\newcommand{\rmdpMc}{\ensuremath{\rmdp_{\mdpSched}}\xspace}
\newcommand{\rmdpStmt}[3]{\ensuremath{\rmdp^{#1}_{#2}\sem{#3}}}
\newcommand{\ipath}{\pi}
\newcommand{\fpath}{\hat\ipath}
\newcommand{\mdpPaths}[1]{\ensuremath{\textrm{Paths}^{#1}}\xspace}
\newcommand{\mdpFPaths}[1]{\ensuremath{\textrm{Paths}_{\mathit{fin}}^{#1}}\xspace}
\newcommand{\mdpSPaths}[2]{\ensuremath{\Pi(#1,#2)}\xspace}
\newcommand{\mdpPr}[2]{\mathrm{Pr}^{#1}\{#2\}}
\newcommand{\mdpEventually}{\Diamond}
\newcommand{\mdpRewSink}{\ensuremath{f}\xspace}
\newcommand{\mdpSink}{\ensuremath{\,\mathpzc{sink}\,}\xspace}
\newcommand{\mdpSinkState}{\ensuremath{\langle \mdpSink \rangle}\xspace}
\newcommand{\sem}[1]{\llbracket #1 \rrbracket}
\newcommand{\Stmt}{\ensuremath{\textnormal{\textsf{pProgs}}}\xspace}
\newcommand{\DetStmt}{\ensuremath{\mathnormal{\textsf{dProgs}}}\xspace}
\newcommand{\stmt}{\ensuremath{C}\xspace}
\newcommand{\opSem}[2]{\ensuremath{\mathbb{C}\sem{#1}({#2})}\xspace}
\newcommand{\Terminated}{\ensuremath{\downarrow}\xspace}
\newcommand{\actL}{\ensuremath{\mathit{L}}\xspace}
\newcommand{\actN}{\ensuremath{\tau}\xspace}
\newcommand{\actR}{\ensuremath{\mathit{R}}\xspace}
\newcommand{\mdpState}[2]{\ensuremath{\langle {#1},\, {#2} \rangle}\xspace}
\newcommand{\ProgramStates}{\Sigma}
\newcommand{\ps}{\sigma}
\newcommand{\transb}[4]{{#1} \xrightarrow{#2}  {#3} \vdash {#4}}
\newcommand{\trans}[6]{\transb{\mdpState{#1}{#2}}{#3}{\mdpState{#4}{#5}}{#6}}
\newcommand{\eetF}{F_{\mdpRewSink}}
\newcommand{\pguard}{\ensuremath{\xi}\xspace}
\newcommand{\pexpr}{\ensuremath{\mu}\xspace}
\newcommand{\mydot}{                         
  \raisebox{-0.5pt}{${\scriptscriptstyle \bullet}~$}}
\newcommand{\htriple}[4]{\ensuremath{\{~{#1}~\}~{#2}~\{~{#3}~\Downarrow~{#4}~\}}\xspace}
\newcommand{\hrtriple}[3]{\ensuremath{\{~{#1}~\}~{#2}~\{~\Downarrow~{#3}~\}}\xspace}
\newcommand{\nvalid}{\ensuremath{\models_{E}}}
\newcommand{\nprove}{\ensuremath{\vdash_{E}}}
\newcommand{\hprove}{\ensuremath{\vdash}}
\newcommand{\linen}[1]{{\scriptstyle \mathtt{#1}:}\ \ }
\newcommand{\nocol}{\ensuremath{\#\hspace{-.25ex}\mathit{col}}\xspace}
\newcommand{\cpsymbol}{\ensuremath{\mathit{cp}}\xspace}
\newcommand{\cp}[1]{\ensuremath{\cpsymbol[#1]}\xspace}
\newcommand{\cond}[1]{\ensuremath{[#1]}\xspace}
\begin{document}

\mainmatter  

\title{Weakest Precondition Reasoning for\\ Expected Run--Times of Probabilistic Programs\footnote{This work was supported by the Excellence Initiative of the German federal and state government.}}

\titlerunning{\textsf{wp}--Reasoning for Expected Run--Times of Prob.\ Programs}

\author{Benjamin Lucien Kaminski \and Joost-Pieter Katoen \and\\ Christoph Matheja \and Federico Olmedo}
\authorrunning{Kaminski, Katoen, Matheja, Olmedo}

\institute{Software Modeling and Verification Group, RWTH Aachen University\\
Ahornstra\ss{}e 55, 52074 Aachen, Germany\\
\mailsa\\
\mailsc}

\toctitle{Weakest Precondition Reasoning for Expected Run--Times of Probabilistic Programs}
\tocauthor{Kaminski, Katoen, Matheja, Olmedo}

\maketitle

\begin{abstract}
  This paper presents a \textsf{wp}--style calculus for obtaining bounds on the
  expected run--time of probabilistic programs.  Its application includes
  determining the (possibly infinite) expected termination time of a
  probabilistic program and proving positive almost--sure termination---does a
  program terminate with probability one in finite expected time?  We provide
  several proof rules for bounding the run--time of loops, and prove the
  soundness of the approach with respect to a simple operational model.  We show
  that our approach is a conservative extension of Nielson's approach for
  reasoning about the run--time of deterministic programs.  We analyze the
  expected run--time of some example programs including a one--dimensional
  random walk and the coupon collector problem.

\keywords{probabilistic programs $\:\cdot\:$ expected run--time $\:\cdot\:$ positive almost--sure termination $\:\cdot\:$ weakest precondition $\:\cdot\:$ program verification.}
\end{abstract}

\section{Introduction}
\label{sec:intro}

Since the early days of computing, randomization has been an important tool for the construction of algorithms.
It is typically used to convert a deterministic program with bad worst--case behavior into an efficient randomized algorithm that yields a correct output with high probability.
The Rabin--Miller primality test, Freivalds' matrix multiplication, and the random pivot selection in Hoare's quicksort algorithm are prime examples.
Randomized algorithms are conveniently described by probabilistic programs.
On top of the usual language constructs, probabilistic programming languages offer the possibility of sampling values from a probability distribution.
Sampling can be used in assignments as well as in Boolean guards.

The interest in probabilistic programs has recently been rapidly growing. 
This is mainly due to their wide applicability~\cite{DBLP:conf/icse/GordonHNR14}. 
Probabilistic programs are for instance used in security to describe cryptographic constructions and security experiments.
In machine learning they are used to describe distribution functions that are analyzed using Bayesian inference.
The sample program
\[ 
C_\mathit{geo} \boldsymbol{\colon} \;\; 
\ASSIGN{b}{1};\; \WHILEDO{b = 1}{\APPASSIGN{b}{\nicefrac 1 2 \cdot \! \langle 0 \rangle +
    \nicefrac 1 2 \cdot \! \langle 1 \rangle}}
\]
for instance flips a fair coin until observing the first heads (i.e. 0). 
It describes a geometric distribution with parameter $\nicefrac 1 2$.

The run--time of probabilistic programs is affected by the outcome of their coin tosses.  Technically speaking, the run--time is a random variable, \ie it is $t_1$ with probability $p_1$, $t_2$ with probability $p_2$ and so on.  An important measure that we consider over probabilistic programs is then their \emph{average} or \emph{expected} run--time (over all inputs).  Reasoning about the expected run--time of probabilistic programs is surprisingly subtle and full of nuances.  In classical sequential programs, a single diverging program run yields the program to have an infinite run--time.  This is not true for probabilistic programs.  They may admit arbitrarily long runs while having a finite expected run--time.  The program $C_\mathit{geo}$, for instance, does admit arbitrarily long runs as for any $n$, the probability of not seeing a heads in the first $n$ trials is always positive.  The expected run--time of $C_\mathit{geo}$ is, however, finite.

In the classical setting, programs with finite run--times can be sequentially composed yielding a new program again with finite run--time.
For probabilistic programs this does not hold in general.
Consider the pair of programs
\begin{align*}
C_1 \boldsymbol{\colon} \;\; 
& \ASSIGN{x}{1};\; \ASSIGN{b}{1};\; \WHILEDO{b = 1}{\APPASSIGN{b}{\nicefrac 1 2 \cdot \! \langle 0 \rangle + \nicefrac 1 2 \cdot \! \langle 1 \rangle};\, \ASSIGN{x}{2x}} \quad \mbox{and} \\
C_2 \boldsymbol{\colon} \;\; 
 & \WHILEDO{x > 0}{\ASSIGN{x}{x-1}}~.
\end{align*}
The loop in $C_1$ terminates on average in two iterations; it thus has a finite expected run--time. 
From any initial state in which $x$ is non--negative, $C_2$ makes $x$ iterations, and thus its expected run--time is finite, too. 
However, the program $C_1 ; C_2$ has an \emph{infinite} expected run--time---even though it almost--surely terminates, i.e.~it terminates with probability one.
Other subtleties can occur as program run--times are very sensitive to variations in the probabilities occurring in the program. 

Bounds on the expected run--time of randomized algorithms are typically obtained using a detailed analysis exploiting classical probability theory (on expectations or martingales)~\cite{Frandsen:1998,DBLP:books/cu/MotwaniR95}.
This paper presents an alternative approach, based on formal program development and verification techniques.
We propose a \textsf{wp}--style calculus \`a la Dijkstra for obtaining bounds on the expected run--time of probabilistic programs. 
The core of our calculus is the transformer $\eetsymbol$, a quantitative variant of Dijkstra's ${\sf wp}$--transformer.
For a program $C$, $\eet{C}{f}(\sigma)$ gives the expected run--time of $C$ started in initial state $\sigma$ under the assumption that $f$ captures the run--time of the computation following $C$. 
In particular, $\eet{\stmt}{\ctert{0}}(\sigma)$ gives the expected run--time of program $\stmt$ on input $\sigma$ (where $\ctert{0}$ is the constantly zero run--time). 
Transformer $\eetsymbol$ is defined inductively on the program structure.
We prove that our transformer conservatively extends Nielson's approach~\cite{Nielson:SCP:87} for reasoning about the run--time of deterministic programs.
In addition we show that $\eet{C}{f}(\sigma)$ corresponds to the expected run--time in a simple operational model for our probabilistic programs based on Markov Decision Processes (MDPs).
The main contribution is a set of proof rules for obtaining (upper and lower) bounds on the expected run--time of loops.
We apply our approach for analyzing the expected run--time of some example programs including a one--dimensional random walk and the coupon collector problem~\cite{Mitzenmacher:2005}.

We finally point out that our technique enables determining the (possibly infinite) expected time until termination of a probabilistic program and proving (universal) \emph{positive almost--sure termination}---does a program terminate with probability one in finite expected time (on all inputs)?
It has been recently shown~\cite{DBLP:conf/mfcs/KaminskiK15} that the universal positive almost--sure termination problem is $\Pi^0_3$--complete, and thus strictly harder to solve than the universal halting problem for deterministic programs.
To the best of our knowledge, the formal verification framework in this paper is the first one that is proved sound and can handle both positive almost--sure termination and infinite expected run--times.


\paragraph{Related work.}
Several works apply \textsf{wp}--style-- or Floyd--Hoare--style reasoning to study quantitative aspects of classical programs.
Nielson~\cite{Nielson:SCP:87,Nielson:UTCS:07} provides a Hoare logic for determining upper bounds on the run--time of deterministic programs.
Our approach applied to such programs yields the tightest upper bound on the run--time that can be derived using Nielson's approach.
Arthan \emph{et al.}~\cite{DBLP:journals/tocl/ArthanMMO09} provide a general framework for sound and complete Hoare--style logics, and show that an instance of their theory can be used to obtain upper bounds on the run--time of while programs.
Hickey and Cohen~\cite{Hickey:1988} automate the average--case analysis of deterministic programs by generating a system of recurrence equations derived from a program whose efficiency is to be analyzed.
They build on top of Kozen's seminal work~\cite{DBLP:journals/jcss/Kozen81} on semantics of probabilistic programs.
Berghammer and M\"uller--Olm~\cite{DBLP:conf/lopstr/BerghammerM03} show how Hoare--style reasoning can be extended to obtain bounds on the closeness of results obtained using approximate algorithms to the optimal solution.
Deriving space and time consumption of deterministic programs has also been considered by Hehner~\cite{Hehner:FAC:1998}.
Formal reasoning about probabilistic programs goes back to Kozen~\cite{DBLP:journals/jcss/Kozen81}, and has been developed further by Hehner~\cite{Hehner:FAC:2011} and McIver and Morgan~\cite{mciver}.
The work by Celiku and McIver~\cite{McIver:FM:2005} is perhaps the closest to our paper.
They provide a \textsf{wp}--calculus for obtaining performance properties of probabilistic programs, including upper bounds on expected run--times.
Their focus is on refinement.
They do neither provide a soundness result of their approach nor consider lower bounds.
We believe that our transformer is simpler to work with in practice, too.
Monniaux~\cite{DBLP:conf/sas/Monniaux01} exploits abstract interpretation to automatically prove the probabilistic termination of programs using exponential bounds on the tail of the distribution.
His analysis can be used to prove the soundness of experimental statistical methods to determine the average run--time of probabilistic programs.
Brazdil \emph{et al.}~\cite{DBLP:journals/jcss/BrazdilKKV15} study the run--time of probabilistic programs with unbounded recursion by considering probabilistic pushdown automata (pPDAs).
They show (using martingale theory) that for every pPDA the probability of performing a long run decreases exponentially (polynomially) in the length of the run, iff the pPDA has a finite (infinite) expected runtime. 
As opposed to our program verification technique, \cite{DBLP:journals/jcss/BrazdilKKV15} considers reasoning at the operational level.
Fioriti and Hermanns~\cite{luis} recently proposed a typing scheme for deciding almost-sure termination.
They showed, amongst others, that if a program is well-typed, then it almost surely terminates.
This result does not cover positive almost-sure-termination.

%
%
%

\paragraph{Organization of the paper.}
Section~\ref{sec:language} defines our probabilistic programming language.
Section~\ref{sec:eet} presents the transformer $\eetsymbol$ and studies its elementary properties such as continuity.
Section~\ref{sec:operational} shows that the $\eetsymbol$ transformer coincides with the expected run--time in an MDP that acts as operational model of our programs.
Section~\ref{sec:loopy-rules} presents two sets of proof rules for obtaining upper and lower bounds on the expected run--time of loops.
In Section~\ref{sec:eet-vs-nielson}, we show that the $\eetsymbol$ transformer is a conservative extension of Nielson's approach for obtaining upper bounds on deterministic programs.
Section~\ref{sec:applications} discusses two case studies in detail.
Section~\ref{sec:conclusion} concludes the paper.
The proofs of the main facts are included in the body of the paper.  All other proofs as well as the detailed calculations for the examples are provided in the appendix.  

\emph{This version of the paper includes a correction of the invariant in the random walk case study from an earlier version of this paper. The resulting run--time remains unchanged.}

\section{A Probabilistic Programming Language}
\label{sec:language}
In this section we present the probabilistic programming language used throughout this paper, together with its run--time model.
To model probabilistic programs we employ a standard imperative language \`a la Dijkstra's Guarded Command Language~\cite{Dijkstra} with two distinguished features: we allow distribution expressions in assignments and guards to be probabilistic. 
For instance, we allow for probabilistic assignments like
\[
\APPASSIGN{y}{\mathtt{Unif}[1 \ldots x]}
\]
which endows variable $y$ with a uniform distribution in the interval $[1\ldots
x]$. We allow also for a program like
\[
\ASSIGN{x}{0}; \; 
\WHILE \: \big(p \cdot \!  \langle \true \rangle + (1{-}p) \!\,\cdot \!
  \langle \false \rangle\big) \: \{\ASSIGN{x}{x+1} \}
\]
which uses a probabilistic loop guard to simulate a geometric distribution with success
probability $p$, i.e. the loop guard evaluates to $\true$ with probability $p$ and to
$\false$ with the remaining probability~$1{-}p$.

Formally, the set of \emph{probabilistic programs} \pProgs is given by the grammar
$$
\begin{array}{rc@{~~}l@{\qquad}l}
\stmt  ~::=~ 
   & & \EMPTY    & \mbox{empty program}\\
   &\mid& \SKIP        & \mbox{effectless operation}\\
   &\mid& \HALT & \mbox{immediate termination}\\
   &\mid& \APPASSIGN{x}{\mu}        & \mbox{probabilistic assignment}\\
   &\mid& \COMPOSE{\stmt}{\stmt} & \mbox{sequential composition}\\
   &\mid& \NDCHOICE{\stmt}{\stmt} & \mbox{non--deterministic choice}\\
   &\mid& \ITE{\pguard}{\stmt}{\stmt} & \mbox{probabilistic conditional}\\
   &\mid& \WHILEDO{\pguard}{\stmt}      & \mbox{probabilistic while loop}
\end{array}
$$
Here $x$ represents a \emph{program variable} in \Vars, $\mu$ a
\emph{distribution expression} in \DExprs, and $\pguard$ a distribution
expression over the truth values, \ie a \emph{probabilistic guard}, in \DExprs.
We assume distribution expressions in \DExprs to represent discrete probability distributions with a (possibly \emph{infinite}) support of total probability mass 1.  
We use $p_1 \cdot
\langle a_1 \rangle + \cdots + p_n \cdot \langle a_n \rangle$ to denote the
distribution expression that assigns probability $p_i$ to $a_i$.  For instance,
the distribution expression $\nicefrac 1 2 \cdot \! \langle \true \rangle +
\nicefrac 1 2 \cdot\! \langle \false \rangle$ represents the toss of a fair
coin.  Deterministic expressions over program variables such as $x-y$ or $x - y
> 8$ are special instances of distribution expressions---they are understood as
Dirac probability distributions\footnote{A Dirac distribution assigns the total
  probability mass, i.e.\ 1, to a single point.}.

To describe the different language constructs we first present some
preliminaries.  A \emph{program state} $\sigma$ is a mapping from program
variables to values in $\Vals$.  Let $\States \triangleq \{\sigma \mid \sigma
\colon \Vars \To \Vals\}$ be the set of program states.  We assume an
interpretation function $\eval{\:\cdot\:}\colon \DExprs \To (\States \To
\Dists(\Vals))$ for distribution expressions, $\Dists(\Vals)$ being the set of
discrete probability distributions over $\Vals$. For $\mu \in \DExprs$,
$\eval{\mu}$ maps each program state to a probability distribution of values.
We use $\probof{\mu}{v}$ as a shorthand for the function mapping each program
state $\sigma$ to the probability that distribution $\eval{\mu}(\sigma)$ assigns
to value $v$, \ie $\probof{\mu}{v}(\sigma) \triangleq
\mathsf{Pr}_{\eval{\mu}(\sigma)}(v)$, where $\mathsf{Pr}$ denotes the
probability operator on distributions over values.

We now present the effects of $\pProgs$ programs and the run--time model that we
adopt for them.  
$\EMPTY$ has no effect and its
execution consumes no time.  $\SKIP$ has also no effect but consumes, in
contrast to $\EMPTY$, one unit of time.  $\HALT$ aborts any further program
execution and consumes no time.  $\APPASSIGN{x}{\mu}$ is a probabilistic
assignment that samples a value from $\eval{\mu}$ and assigns it to variable
$x$; the sampling and assignment consume (altogether) one unit of time.
$\COMPOSE{C_1}{C_2}$ is the sequential composition of programs $C_1$ and $C_2$.
$\NDCHOICE{C_1}{C_2}$ is a non--deterministic choice between programs $C_1$ and $C_2$; we take a demonic view where we assume that out of $C_1$ and $C_2$ we execute the program with the greatest run--time.
$\ITE{\pguard}{C_1}{C_2}$ is a
probabilistic conditional branching: with probability $\probof{\pguard}{\true}$
program $C_1$ is executed, whereas with probability $\probof{\pguard}{\false} =
1{-}\probof{\pguard}{\true}$ program $C_2$ is executed; evaluating (or more
rigorously, sampling a value from) the probabilistic guard requires an
additional unit of time.  $\WHILEDO{\pguard}{C}$ is a probabilistic while loop:
with probability $\probof{\pguard}{\true}$ the loop body $C$ is executed
followed by a recursive execution of the loop, whereas with probability
$\probof{\pguard}{\false}$ the loop terminates; as for conditionals, each
evaluation of the guard consumes one unit of time.
\begin{example}[Race between tortoise and hare]\label{ex:program}
The probabilistic program
\begin{align*}
&\COMPOSE{\COMPOSE{\APPASSIGN{h}{0}}{\APPASSIGN{t}{30}}}{}\\
&\WHILE~(h \leq t)~\{ \\
&\qquad \If~\bigl(\nicefrac 1 2 \cdot\! \langle \true \rangle + \nicefrac 1 2 \cdot\! \langle \false
 \rangle \big)~\{\APPASSIGN{h}{h + \mathtt{Unif}[0 \ldots 10]}\}\\
&\qquad \COMPOSE{\Else~\{\EMPTY\}}{}\\
&\qquad \APPASSIGN{t}{t+1} \\
& \}~,
\end{align*}
adopted from~\cite{Chakarov:CAV:13}, illustrates the use of the programming language.
It models a race between a hare and a tortoise (variables $h$ and $t$ represent their respective positions). 
The tortoise starts with a lead of $30$ and in each step advances one step forward. 
The hare with probability $\nicefrac{1}{2}$ advances a random number of steps between $0$ and $10$ (governed by a uniform distribution) and with the remaining probability remains still. 
The race ends when the hare passes the tortoise. 
\hfill$\triangle$
\end{example}
We conclude this section by fixing some notational conventions. 
To keep our program notation consistent with standard usage, we use the standard symbol $\AssignSymbol$ instead of $\AppAssignSymbol$ for assignments whenever $\mu$ represents a Dirac distribution given by a deterministic expressions over program variables. 
For instance, in the program in \autoref{ex:program} we write $\ASSIGN{t}{t+1}$ instead of $\APPASSIGN{t}{t+1}$. Likewise, when $\pguard$ is a probabilistic guard given as a deterministic Boolean expression over program variables, we use $\ExpToFun{\pguard}$ to denote $\probof{\pguard}{\true}$ and $\ExpToFun{\lnot \pguard}$ to denote $\probof{\pguard}{\false}$. 
For instance, we write $\ExpToFun{b=0}$ instead of $\probof{b=0}{\true}$.

\section{A Calculus of Expected Run--Times}
\label{sec:eet}

Our goal is to associate to any program $\stmt$ a function that maps each state
$\sigma$ to the average or expected run--time of $\stmt$ started in initial
state $\sigma$. We use the functional space of \emph{run--times}
\[
\E ~\triangleq~ \left\{f ~\middle|~ f\colon \States \rightarrow \Rposinf \right\}
\]
to model such functions. Here, $\Rposinf$ represents the set of non--negative
real values extended with $\infty$.  We consider run--times as a mapping from
program states to real numbers (or $\infty$) as the expected run--time of a
program may depend on the initial program state.

We express the run--time of programs using a continuation--passing style by means of the transformer
\[
\eetsymbol[\:\cdot\:] \colon \Stmt \To (\E \To \E)~.
\]
Concretely, $\eet{\stmt}{f}(\sigma)$ gives the expected run--time of program
$\stmt$ from state $\sigma$ assuming that $f$ captures the run--time of the
computation that follows $\stmt$. Function $f$ is usually referred to as
\emph{continuation} and can be thought of as being evaluated in the
final states that are reached upon termination of $C$. 
Observe that, in particular, if we set $f$ to the constantly zero run--time, $\eet{\stmt}{\ctert{0}}(\sigma)$ gives the expected run--time of program $\stmt$ on input $\sigma$.

The transformer $\eetsymbol$ is defined by induction on the structure of $\stmt$ following the rules in \autoref{table:eet-rules}.  
The rules are defined so as to correspond to the run--time model introduced in Section~\ref{sec:language}.  
That is, $\eet{\stmt}{\ctert{0}}$ captures the expected number of assignments, guard evaluations and $\SKIP$ statements.
\begin{table}[t!]
\begin{center}
\renewcommand{\arraystretch}{1.5}
\begin{tabular}{l@{\qquad}l}
\hline\hline
$~\boldsymbol{\quad\qquad C~}$ & $\boldsymbol{\boldeetsymbol\left[C\right]\left(f\right)}~$\\
\hline\hline
$\quad\qquad~\EMPTY$ & $f$\\
$\quad\qquad~\SKIP$ & $\ctert{1} + f$\\
$\quad\qquad~\HALT$ & $\ctert{0}$\\
$\quad\qquad~\APPASSIGN x \mu$ & $\ctert{1} + \lambda \sigma\mydot
\Exp{\pexprdeno{\mu}(\sigma)}{\lambda v.\: f \subst{x}{v}(\sigma)}$\\
$\quad\qquad~\COMPOSE{C_1}{C_2}$ & $\eet{C_1}{\eet{C_2}{f}}$\\
$\quad\qquad~\!\!\NDCHOICE{C_1}{C_2}$ & $\max\{\eet{C_1}{f},\,
\eet{C_2}{f}\}\qquad\quad\,$\\
$\quad\qquad~\ITE{\pguard}{C_1}{C_2}$ & $\ctert{1} + \probof{\pguard}{\true}
\cdot \eet{C_1}{f} + \probof{\pguard}{\false} \cdot \eet{C_2}{f}$\\
$\quad\qquad~\WHILEDO{\pguard}{C'}$ & $\lfp X\mydot \ctert{1} +
\probof{\pguard}{\false} \cdot f + \probof{\pguard}{\true} \cdot \eet{C'}{X}$\\[-1.5ex]
\end{tabular}
\end{center}
\hrule
\vspace{1\baselineskip}
\caption{Rules for defining the expected run--time transformer $\eetsymbol$. 
$\one$ is the constant run--time $\lambda \sigma. 1$.
$\Exp{\eta}{h} \triangleq \sum_{v} \textsf{Pr}_{\eta}(v) \cdot h(v)$ represents the expected value
  of (random variable) $h$ \wrt distribution $\eta$. For $\sigma \in \States$,
  $f\subst{x}{v}(\sigma) \triangleq f(\sigma\subst{x}{v})$, where $\sigma\subst{x}{v}$
  is the state obtained by updating in $\sigma$ the value of $x$ to $v$. $\max \{f_1, f_2\} \triangleq \lambda \sigma.\max
  \{f_1(\sigma), f_2(\sigma)\}$ represents the point--wise lifting of the $\max$
  operator over $\Rposinf$ to the function space of run--times. $\lfp X \protect\mydot  F(X)$ represents the least fixed point of the transformer $F \colon \E \To \E$.}
\label{table:eet-rules}
\vspace{-1\baselineskip}
\end{table}
Most rules in \autoref{table:eet-rules} are self--explanatory.
$\eetsymbol[\EMPTY]$ behaves as the identity since $\EMPTY$ does not modify the
program state and its execution consumes no time.  On the other hand,
$\eetsymbol[\SKIP]$ adds one unit of time since this is the time required by the
execution of $\SKIP$.  $\eetsymbol[\HALT]$ yields always the constant run--time
$\ctert{0}$ since $\HALT$ aborts any subsequent program execution (making their
run--time irrelevant) and consumes no time. The definition of $\eetsymbol$ on
random assignments is more involved: $\eet{\APPASSIGN x \mu}{f}(\sigma) = 1 +
\sum_{v} \textsf{Pr}_{\pexprdeno{\mu}(\sigma) }(v) \cdot
f(\sigma\subst{x}{v})$ is obtained by adding one unit of time (due to the
distribution sampling and assignment of the value sampled) to the sum of the
run--time of each possible subsequent execution, weighted according to their
probabilities.  $\eetsymbol[\COMPOSE{C_1}{C_2}]$ applies $\eetsymbol[C_1]$ to
the expected run--time obtained from the application of $\eetsymbol[C_2]$.
$\eetsymbol[\NDCHOICE{C_1}{C_2}]$ returns the maximum between the run--time of
the two branches. $\eetsymbol[\ITE{\pguard}{C_1}{C_2}]$ adds one unit of time
(on account of the guard evaluation) to the weighted sum of the run--time of the
two branches.  Lastly, the $\eetsymbol$ of loops is given as the least fixed
point of a run--time transformer defined in terms of the run--time of the 
loop body.
\paragraph{Remark.}
We stress that the above run--time model is a design decision for the sake of
concreteness. All our developments can easily be adapted to capture alternative
models.  These include, for instance, the model where only the number of
assignments in a program run or the model where only the number of loop
iterations are of relevance. We can also capture more fine--grained models, where
for instance the run--time of an assignment depends on the \emph{size} of the
distribution expression being sampled.
\begin{example}[Truncated geometric distribution] 
  \label{example:eet}
  To illustrate the effects of the $\eetsymbol$\xspace
  transformer consider the program in \autoref{fig:example:eet}.
%
\begin{figure}
\begin{align*}
C_\mathit{trunc}\boldsymbol{\colon} \;\; 
 &\If~\bigl(\nicefrac 1 2 \cdot\! \langle \true \rangle + \nicefrac 1 2 \cdot\! \langle \false\rangle \bigr)~\{\ASSIGN{\mathit{succ}}{\true}\} ~\Else~ \{\\
&\qquad \If~\bigl(\nicefrac 1 2 \cdot\! \langle \true \rangle + \nicefrac 1 2
\cdot\! \langle \false
 \rangle \bigr)~\{\ASSIGN{\mathit{succ}}{\true}\}\\ 
 & \qquad \Else~\{\ASSIGN{\mathit{succ}}{\false}\}\\
 & \} 
\end{align*}
\vspace*{-20pt}
\caption{Program modeling a truncated geometric distribution}
\label{fig:example:eet}
\end{figure}
%
It can be viewed as modeling a truncated geometric distribution: we repeatedly flip a fair coin until observing the first heads or completing the second unsuccessful trial. 
The calculation of the expected run--time $\eet{\stmt_\mathit{trunc}}{\ctert{0}}$ of program $C_\mathit{trunc}$ goes as follows:
\begin{flalign*}
\MoveEqLeft[3] \eet{\stmt_\mathit{trunc}}{\ctert{0}}\\
=~	&\ctert{1} + \tfrac{1}{2} \cdot \eet{\ASSIGN{\mathit{succ}}{\true}}{\ctert{0}}\\
	& \;\; + \tfrac{1}{2} \cdot
\eet{\ITE{\ldots}{\ASSIGN{\mathit{succ\!}}{\!\true}}{\ASSIGN{\mathit{succ\!}}{\!\false}}}{\ctert{0}}\\
=~	&\ctert{1} + \tfrac{1}{2} \cdot\ctert{1} + \tfrac{1}{2} \cdot \Bigl(\ctert{1} + \tfrac{1}{2} \cdot \eet{\ASSIGN{\mathit{succ}}{\true}}{\ctert{0}} + \tfrac{1}{2} \cdot \eet{\ASSIGN{\mathit{succ}}{\false}}{\ctert{0}} \Bigr) \\
=~	&\ctert{1} + \tfrac{1}{2} \cdot\ctert{1} + \tfrac{1}{2} \cdot
\Bigl(\ctert{1} + \tfrac{1}{2} \cdot \ctert{1} + \tfrac{1}{2} \cdot \ctert{1}
\Bigr) ~=~ \ctert{\tfrac{5}{2}}
\end{flalign*}
Therefore, the execution of $\stmt_\mathit{trunc}$ takes, on average, 2.5 units of time. 
\hfill$\triangle$
\end{example}
Note that the calculation of the expected run--time in the above example is straightforward as the program at hand is loop--free. 
Computing the run--time of loops requires the calculation of least fixed points, which is generally not feasible in practice.
In Section~\ref{sec:loopy-rules}, we present invariant--based proof rules for reasoning about the run--time of loops.

The \eetsymbol\xspace transformer enjoys several algebraic properties.
To formally state these properties we make use of the point--wise order relation ``$\preceq$'' between run--times: given
$f,g \in \E$, $f \preceq g$ iff $f(\sigma) \leq g(\sigma)$ for all states $\sigma \in \States$.
\begin{theorem}[Basic properties of the $\boldeetsymbol$ transformer]\label{thm:eet-prop}
  For any program $\stmt \in \appProgs$, any constant run--time $\ctert{k} =
  \lambda \sigma. k$ for $k \in \Rpos$, any constant $r \in \Rpos$, and any two
  run--times $f, g \in \E$ the following properties hold:

\begin{tabular}{l@{\hspace{3.5em}}l}
	\\[-1.5ex]
	Monotonicity: 			& $f \preceq g ~\implies~ \eet{\stmt}{f} \,\preceq\, \eet{\stmt}{g}$;\\[1ex]
	Propagation of 			& $\eet{\stmt}{\ctert{k} + f} \:=\: \ctert{k} + \eet{\stmt}{f}$\\
	constants: 				&  provided $\stmt$ is $\HALT$--free;\\[1ex]
	Preservation of $\infinity$: 	& $\eet{\stmt}{\infty} \:=\: \infty$\\
								&  provided $\stmt$ is $\HALT$--free;\\[1ex]
	Sub--additivity: 			& $\eet{C}{f + g} \,\preceq\, \eet{C}{f} + \eet{C}{g}$;\\
								&  provided $\stmt$ is fully probabilistic\footnotemark;\\[1ex]
	Scaling: & $\eet{C}{r \cdot f} \,\succeq\, \min\{1,\, r\} \cdot \eet{C}{f}$;\\
& $\eet{C}{r \cdot f} \,\preceq\, \max\{1,\, r\} \cdot \eet{C}{f}$.
\end{tabular}
\footnotetext{A program is  called \emph{fully probabilistic} if it contains no non--deterministic choices.}
\end{theorem}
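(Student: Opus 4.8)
The plan is to prove all five laws by structural induction on the program $C \in \appProgs$, treating them in the order listed and establishing monotonicity first, since it underlies the existence of the least fixed points and is invoked in every other case. For the straight-line constructs ($\EMPTY$, $\SKIP$, $\HALT$, $\APPASSIGN{x}{\mu}$) each law reduces to an elementary computation: monotonicity and the two bounds of the scaling law follow from monotonicity of the expectation operator $\Exp{\eta}{\cdot}$; propagation of constants and preservation of $\infinity$ use that $\Exp{\eta}{\cdot}$ is linear and sends the constant $\infinity$ to $\infinity$ (this is exactly where $\HALT$ must be excluded, since $\eet{\HALT}{f} = \ctert{0}$ destroys any continuation); and sub-additivity follows from linearity of expectation together with the fact that the added unit cost $\ctert{1}$ only creates slack. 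For $\COMPOSE{C_1}{C_2}$ and $\ITE{\pguard}{C_1}{C_2}$ every law is inherited from the induction hypotheses by composing, resp.\ forming the convex combination of, the two branch bounds, using $\probof{\pguard}{\true} + \probof{\pguard}{\false} = \ctert{1}$. For $\NDCHOICE{C_1}{C_2}$, monotonicity, preservation of $\infinity$ and scaling pass through because $\max$ is monotone and commutes with multiplication by a non-negative scalar; this is precisely the construct ruled out in the sub-additivity law.

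The substantial case is the loop $\WHILEDO{\pguard}{C'}$, whose semantics is the least fixed point $\lfp X \mydot \Phi_f(X)$ of $\Phi_f(X) = \ctert{1} + \probof{\pguard}{\false}\cdot f + \probof{\pguard}{\true}\cdot\eet{C'}{X}$. Here I would use two complementary fixed-point techniques. For monotonicity, and for one direction of the remaining laws, I would invoke Park's lemma: to bound $\lfp \Phi_{f}$ from above by a candidate run-time $Y$ it suffices to check that $Y$ is a pre-fixed point, $\Phi_{f}(Y) \preceq Y$. For propagation of constants, for instance, the candidate $Y = \ctert{k} + \lfp\Phi_f$ is verified to be a genuine fixed point of $\Phi_{\ctert{k}+f}$: unfolding $\Phi_{\ctert{k}+f}(Y)$, applying the induction hypothesis $\eet{C'}{\ctert{k}+\lfp\Phi_f} = \ctert{k}+\eet{C'}{\lfp\Phi_f}$, and collapsing $\probof{\pguard}{\false}\cdot\ctert{k} + \probof{\pguard}{\true}\cdot\ctert{k} = \ctert{k}$ via total mass one returns $Y$; the matching bound comes from the Park direction together with monotonicity. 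Sub-additivity and scaling at the loop are handled analogously, checking that $\lfp\Phi_f + \lfp\Phi_g$ (resp.\ a scaled iterate) is a pre-fixed point of the relevant functional and letting the surplus $\ctert{1}$ absorb the slack. Alternatively, each Park argument can be replaced by Kleene's characterization $\lfp\Phi_f = \sup_n \Phi_f^n(\ctert{0})$ (continuity being available from Section~\ref{sec:eet}), proving the desired (in)equality on every approximant $\Phi_f^n(\ctert{0})$ by induction on $n$ and passing to the supremum.

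The main obstacle is preservation of $\infinity$ at the loop. Unlike the other laws, it is \emph{not} enough to exhibit $\infinity$ as a fixed point of $\Phi_{\infinity}$, because $\infinity$ is the top element and every fixed point lies below it; what must be shown is that the \emph{least} fixed point already equals $\infinity$. I would therefore argue directly on the Kleene approximants, splitting $\States$ into the states on which the loop terminates with positive probability---where the continuation $\infinity$ is reached with positive probability and instantly forces the value to $\infinity$, using the induction hypothesis $\eet{C'}{\infinity} = \infinity$---and the states on which the guard stays $\true$ forever, where the unit cost $\ctert{1}$ accrued at each of the infinitely many guard evaluations makes $\sup_n \Phi_{\infinity}^n(\ctert{0})$ diverge. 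A secondary nuisance pervading the loop cases is that $\infinity$-valued run-times forbid cancellation: any step that one would naturally phrase by subtracting $\eet{C'}{\cdot}$ or a constant must instead be recast as a monotone / pre-fixed-point inequality so that it remains valid in $\Rposinf$.
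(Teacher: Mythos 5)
Your proposal is correct and follows the same skeleton as the paper's proof: structural induction on $C$, with all the substance in the loop case, where least-fixed-point machinery does the work. In particular, you correctly isolate the one genuine subtlety—preservation of $\infinity$ cannot be settled by exhibiting a fixed point, since $\infinity$ is the top element—and your remedy (Kleene approximants, a case split on whether the guard fails with positive probability, and divergence forced by the per-iteration cost $\ctert{1}$) is exactly the paper's argument in Appendix~\ref{sec:app-eet-inf}, which shows $F^{n} \succeq \ctert{n}$ by induction on $n$. The differences are in tooling. The paper obtains monotonicity as a corollary of continuity (\autoref{thm:eet-cont}) rather than by a separate induction; it proves scaling and sub-additivity at loops by induction on the Kleene approximants $F_f^n(\zero)$ rather than by your Park-style checks (your check does go through for sub-additivity: $F_{f+g}(\lfp F_f + \lfp F_g) \preceq \lfp F_f + \lfp F_g$ follows from the induction hypothesis on the body, with the spare $\ctert{1}$ absorbing the slack, and Park then yields the claim); and for propagation of constants it uses precisely the subtraction you disallow, verifying that $\lfp F_{\ctert{k}+f} - \ctert{k}$ is a fixed point of $F_f$—harmless here because $k$ is a finite real, though the paper leaves the non-negativity of that function implicit. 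The one imprecise spot in your write-up: for the lower-bound directions, namely $\ctert{k} + \lfp F_f \preceq \lfp F_{\ctert{k}+f}$ in propagation of constants and $\min\{1,\, r\} \cdot \lfp F_f \preceq \lfp F_{r \cdot f}$ in scaling, ``the Park direction together with monotonicity'' cannot deliver the result, since Park's lemma only bounds least fixed points from \emph{above}; these directions must go through your announced Kleene fallback (e.g.\ showing $\ctert{k} + F_f^n(\zero) \preceq \lfp F_{\ctert{k}+f}$ for all $n$, which additionally needs the auxiliary fact $\lfp F_{\ctert{k}+f} \succeq \ctert{k}$) or through the paper's subtraction trick. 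Since you explicitly keep that fallback available, this is a presentational slip rather than a mathematical gap.
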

\begin{proof}
  Monotonicity follows from continuity (see \autoref{thm:eet-cont}
  below). The remaining proofs proceed by induction on the program structure;
  see Appendices \ref{sec:app-eet-const-prop},
  \ref{sec:app-eet-inf}, \ref{sec:app-eet-additivity}, and
  \ref{sec:app-eet-scaling}.  \qed 
\end{proof}
We conclude this section with a technical remark regarding the well--defined\-ness of the \eetsymbol \xspace transformer. 
To guarantee that \eetsymbol \xspace is well--defined, we must show the existence of the least fixed points used to define the run--time of loops. 
To this end, we use a standard denotational semantics argument (see \eg \cite[Ch.~5]{Winskel:1993}): First we endow the set of run--times $\E$ with the structure of an $\omega$--complete partial order ($\omega$--cpo) with bottom element.
Then we use a continuity argument to conclude the existence of such fixed points.  

Recall that $\preceq$ denotes the point--wise comparison between run--times.
It easily follows that $(\E, {\preceq})$ defines an $\omega$--cpo with bottom
element $\zero = \lambda \sigma. 0$ where the supremum of an $\omega$--chain
$f_1 \preceq f_2 \preceq \cdots$ in $ \E$ is also given point--wise, \ie as
$\supn f_n \triangleq \lambda \sigma. \, \supn f_n(\sigma)$; see
Appendix~\ref{sec:proof-thm-cpo} for details. Now we are in a position to
establish the continuity of the \eetsymbol\ transformer:%
\begin{lemma}[Continuity of the $\boldeetsymbol$ transformer]
\label{thm:eet-cont}
For every program $\stmt$ and every $\omega$--chain of run--times $f_1 \preceq f_2 \preceq \cdots$, 
\[ 
\eet{\stmt}{\supn f_n} ~=~ \supn \eet{\stmt}{f_n}~. 
\]
\end{lemma}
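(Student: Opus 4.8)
The plan is to prove the statement by induction on the structure of $\stmt$, showing in each case that $\eetsymbol[\stmt]$ commutes with suprema of $\omega$--chains. Throughout the induction I will use freely that continuity implies monotonicity on the $\omega$--cpo $(\E, \preceq)$: if $\eetsymbol[C]$ is continuous and $f \preceq g$, then applying continuity to the chain $f \preceq g \preceq g \preceq \cdots$ (whose supremum is $g$) yields $\eet{C}{g} = \max\{\eet{C}{f},\, \eet{C}{g}\} \succeq \eet{C}{f}$. Hence for every subprogram to which the induction hypothesis applies I may assume both continuity and monotonicity. The base cases are immediate: for $\EMPTY$ the transformer is the identity; for $\HALT$ it is the constant $\ctert{0}$; and for $\SKIP$ it adds the constant $\ctert{1}$, so continuity reduces to the pointwise identity $\ctert{1} + \supn f_n = \supn(\ctert{1} + f_n)$ in $\Rposinf$.

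The key analytic ingredient enters in the assignment case, where $\eet{\APPASSIGN{x}{\mu}}{f}(\sigma) = 1 + \sum_v \probof{\mu}{v}(\sigma) \cdot f(\sigma\subst{x}{v})$, so that I must interchange $\supn$ with a (possibly countably infinite) sum of non--negative terms weighted by the probabilities $\probof{\mu}{v}(\sigma)$. Since the $f_n$ form a monotone chain and all summands are non--negative, this interchange is exactly the monotone convergence theorem, giving $\supn \sum_v \probof{\mu}{v}(\sigma) \cdot f_n(\sigma\subst{x}{v}) = \sum_v \probof{\mu}{v}(\sigma) \cdot \supn f_n(\sigma\subst{x}{v})$. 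The remaining compositional cases follow from the induction hypothesis together with elementary facts about suprema of monotone chains in $\Rposinf$, namely that $\supn$ commutes with finite sums, with multiplication by a (state--dependent) non--negative factor such as $\probof{\pguard}{\true}$, and with binary $\max$. For sequential composition I use that $\eetsymbol[C_2]$ is monotone to ensure $\eet{C_2}{f_n}$ is again a chain, then apply the induction hypothesis for $C_2$ and subsequently for $C_1$.

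The main obstacle is the while loop, where $\eet{\WHILEDO{\pguard}{C'}}{f} = \lfp X\mydot F_f(X)$ with $F_f(X) = \ctert{1} + \probof{\pguard}{\false}\cdot f + \probof{\pguard}{\true}\cdot\eet{C'}{X}$. First I observe that, for fixed $f$, the map $X \mapsto F_f(X)$ is continuous in $X$: this follows from continuity of $\eetsymbol[C']$ (induction hypothesis for the loop body) together with the same scalar/sum facts used above. Hence by the Kleene fixed--point theorem the least fixed point exists and is the ascending iteration $\lfp F_f = \sup\nolimits_i \Phi_i(f)$, where $\Phi_0(f) = \zero$ and $\Phi_{i+1}(f) = F_f(\Phi_i(f))$. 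I then show by an inner induction on $i$ that each iterate $\Phi_i$ is continuous as a function of $f$; the step case unfolds $\Phi_{i+1}$ and uses continuity (hence monotonicity) of $\Phi_i$ in $f$ (inner induction hypothesis) and of $\eetsymbol[C']$ in its argument (structural induction hypothesis), closing with the scalar/sum facts.

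Finally, to conclude continuity of $f \mapsto \lfp F_f$, I combine the two representations: for a chain $(f_n)_n$,
\begin{align*}
\eet{\WHILEDO{\pguard}{C'}}{\supn f_n}
&= \sup\nolimits_i \Phi_i(\supn f_n)
= \sup\nolimits_i \supn \Phi_i(f_n) \\
&= \supn \sup\nolimits_i \Phi_i(f_n)
= \supn \eet{\WHILEDO{\pguard}{C'}}{f_n}~.
\end{align*}
The crucial middle step is the interchange $\sup\nolimits_i \supn = \supn \sup\nolimits_i$, justified because $\Phi_i(f_n)$ is monotone in both indices ($i$ by the ascending Kleene iteration, $n$ by monotonicity of $\Phi_i$), and suprema of a doubly--indexed family that is monotone in each index always commute. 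I expect this double--supremum interchange, together with verifying continuity of the iterates $\Phi_i$ in their continuation argument, to be the technically delicate part; the remaining cases are routine applications of monotone convergence and the algebra of suprema in $\Rposinf$.
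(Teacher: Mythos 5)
Your proof is correct, and it matches the paper's proof in every case except the decisive one, the loop, where you take a genuinely different (more self-contained) route. The paper's structural induction is the same as yours: identity/constant reasoning for $\EMPTY$, $\SKIP$, $\HALT$, Lebesgue's monotone convergence theorem for probabilistic assignment, and sup-algebra plus the induction hypothesis for composition, conditional and nondeterministic choice (for the latter the paper proves the inequality $\max\{\supn \eet{C_1}{f_n}, \supn \eet{C_2}{f_n}\} \succeq \supn \max\{\eet{C_1}{f_n},\eet{C_2}{f_n}\}$ by contradiction, whereas you note it as an elementary identity about suprema --- both are fine). For $\WHILEDO{\pguard}{C'}$, however, the paper views the characteristic functionals $F_{f_n}$ as an $\omega$--chain in the function space $[\E \To \E]$ of continuous transformers and appeals to three facts: $F_{\supn f_n} = \supn F_{f_n}$; a supremum of a chain of continuous transformers is continuous; and $\lfp \colon [\E \To \E] \To \E$ is itself continuous on continuous transformers. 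The last fact is invoked as a known domain-theoretic lemma without proof. You instead unfold exactly that lemma: you run the Kleene iterates $\Phi_0(f)=\zero$, $\Phi_{i+1}(f)=F_f(\Phi_i(f))$, prove by an inner induction that each $\Phi_i$ is continuous in the continuation $f$, and finish with the interchange $\sup_i \supn \Phi_i(f_n) = \supn \sup_i \Phi_i(f_n)$, which is legitimate since iterated suprema of a doubly indexed family commute in $\E$ (suprema exist pointwise in $\Rposinf$), and your monotonicity-in-both-indices observation keeps all intermediate suprema within chains. What the paper's route buys is modularity --- the loop case collapses to three equalities once the $\lfp$-continuity lemma is granted; what your route buys is that nothing is cited on faith, since your inner induction plus sup-exchange is precisely the standard proof of that cited lemma. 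Your preliminary remark that continuity implies monotonicity (via the chain $f \preceq g \preceq g \preceq \cdots$) is also the paper's own device in \autoref{thm:eet-prop}, so there is no circularity or gap anywhere in your argument.
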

\begin{proof}
By induction on the structure of $C$; see Appendix \ref{sec:app-eet-cont}.
\qed
\end{proof}
\autoref{thm:eet-cont} implies that for each program $C \in \pProgs$, guard
$\pguard \in \DExprs$, and run--time $f \in \E$, function $F_f(X) = \ctert{1} +
\probof{\pguard}{\false} \cdot f + \probof{\pguard}{\true} \cdot \eet{C}{X}$ is
also continuous.  The Kleene Fixed Point Theorem then ensures that the least
fixed point $\eet{\WHILEDO{\pguard}{C}}{f}\allowbreak = \allowbreak\lfp F_f$
exists and the expected run--time of loops is thus well-defined.  

Finally, as the aforementioned function $F_f$ is frequently used in the remainder of the
paper, we define:
\begin{definition}[Characteristic functional of a loop]\label{def:F}
Given program $C \in \pProgs$, probabilistic guard $\pguard \in \DExprs$, and run--time $f \in \E$, we call
\[
F_f^{\langle \pguard, C \rangle} \colon \E \To \E,~ X \mapsto \ctert{1} +
\probof{\pguard}{\false} \cdot f + \probof{\pguard}{\true} \cdot \eet{C}{X}
\]
the \emph{characteristic functional} of loop $\WHILEDO{\pguard}{C}$ with
respect to $f$.
\end{definition}
When $C$ and $\pguard$ are understood from the context, we usually omit them and
simply write $F_f$ for the characteristic functional associated to
$\WHILEDO{\pguard}{C}$ with respect to run--time $f$.  Observe that under this
definition, the \eetsymbol\xspace of loops can be recast as
\[
\eet{\WHILEDO{\pguard}{C}}{f} ~=~ \lfp F_f^{\langle \pguard, C \rangle}~.
\]
This concludes our presentation of the \eetsymbol\ transformer. In the next
section we validate the transformer's definition by showing a soundness result
with respect to an operational model of programs.

\section{An Operational Model for Expected Run--Times}
\label{sec:operational}
We prove the soundness of the expected run--time transformer with respect to a
simple operational model for our probabilistic programs.  This model will be
given in terms of a Markov Decision Process (MDP, for short) whose collected
reward corresponds to the run--time.  We first briefly recall all necessary
notions.  A more detailed treatment can be found in~\cite[Ch.\ 10]{katoenbaier}.
A \emph{Markov Decision Process} is a tuple $\rmdp = (\mdpStates,\, \mdpAct,\, \mdpTPF,\, \mdpInit,\, \mdpRew)$ where
$\mdpStates$ is a countable set of states,
$\mdpAct$ is a (finite) set of actions,
$\mdpTPF \colon \mdpStates \times \mdpAct \times \mdpStates \To \ZO$ is the transition probability function such that for all 
        states $s \in \mdpStates$ and actions $\alpha \in \mdpAct$,
        \[ \sum_{s' \in \mdpStates} \mdpTPF(s,\alpha,s') \in \{0,1\}~, \]
$\mdpInit \in \mdpStates$ is the initial state, and
$\mdpRew \colon \mdpStates \To \PReals$ is a reward function.
Instead of $\mdpTPF(s,\alpha,s') = p$, we usually write
$\transb{s}{\alpha}{s'}{p}$. 
%
%
%
%
An MDP $\rmdp$ is a \emph{Markov chain} if no non--deterministic choice is
possible, i.e. for each pair of states $s,s' \in \mdpStates$ there exists exactly one $\alpha \in \mdpAct$ with $\mdpTPF(s,\alpha,s') \neq 0$.
A \emph{scheduler} for $\rmdp$ is a mapping $\mdpSched \colon \mdpStates^{+} \to \mdpAct$, where $\mdpStates^{+}$ denotes the set of non--empty finite sequences of states.
Intuitively, a scheduler resolves the non--determinism of an MDP by selecting an action for each possible sequence of states that has been visited so far.
Hence, a scheduler $\mdpSched$ induces a Markov chain which is denoted by $\rmdpMc$.
In order to define the expected reward of an MDP, we first consider the reward collected along a path.
Let $\mdpPaths{\rmdpMc}$ ($\mdpFPaths{\rmdpMc}$) denote the set of all (finite) paths $\ipath$ ($\fpath$) in $\rmdpMc$.
Analogously, let $\mdpPaths{\rmdpMc}(s)$ and $\mdpFPaths{\rmdpMc}(s)$ denote the set of all infinite and finite 
paths in $\rmdpMc$ starting in state $s \in \mdpStates$, respectively.
For a finite path $\fpath = s_0\ldots s_n$, the \emph{cumulative reward} of $\fpath$ is defined as 
\[ \mdpRew(\fpath) ~\triangleq~ \sum_{k=0}^{n-1} \mdpRew(s_k)~. \]
For an infinite path $\ipath$, the cumulative reward of reaching a non--empty set of target states $T \subseteq \mdpStates$, is defined as
$\mdpRew(\ipath,\mdpEventually T) \triangleq \mdpRew(\ipath(0)\ldots \ipath(n))$ if there exists an $n$ such that $\ipath(n) \in T$ and $\ipath(i) \notin T$ for $0 \leq i < n$
and $\mdpRew(\ipath,\mdpEventually T) \triangleq \infty$ otherwise.
Moreover, we write $\mdpSPaths{s}{T}$ to denote the set of all finite paths $\fpath \in \mdpFPaths{\rmdpMc}(s)$, $s \in \mdpStates$,
with $\fpath(n) \in T$ for some $n \in \Nats$ and $\fpath(i) \notin T$ for $0 \leq i < n$.
The probability of a finite path $\fpath$ is 
\[ \mdpPr{\rmdpMc}{\fpath} ~\triangleq~ \prod_{k=0}^{|\fpath|-1} \mdpTPF(s_k, \mdpSched(s_1,\ldots,s_k), s_{k+1})~. \]
The \emph{expected reward} that an MDP $\rmdp$ eventually reaches a non--empty
set of states $T \subseteq \mdpStates$ from a state $s \in \mdpStates$ is
defined as follows.  If
\[ \inf_{\mathfrak S} \: \mdpPr{\rmdpMc}{s \models \mdpEventually T} ~=~ \inf_{\mathfrak S} \: \sum_{\fpath \in \mdpSPaths{s}{T}} \mdpPr{\rmdpMc}{\fpath} ~<~ 1 \]
then $\ExpRew{\rmdp}{s \models \mdpEventually T} \triangleq \infty$.
Otherwise, 
       \begin{align*}
         \ExpRew{\rmdp}{s \models \mdpEventually T}
         ~\triangleq~ 
         \sup_{\mdpSched} \sum_{\fpath \in \mdpSPaths{s}{T}} \mdpPr{\rmdpMc}{\fpath} \cdot \mdpRew(\fpath)~.
       \end{align*}
%
%
%
%
We are now in a position to define an operational model for our probabilistic programming language.
Let $\Terminated$ denote a special symbol indicating successful termination of a program.
\begin{definition}[The operational MDP of a program]\label{def:operational}
 Given program $\stmt \in \Stmt$, initial program state $\ps_0 \in
 \ProgramStates$, and continuation 
 $\mdpRewSink \in \E$, the \emph{operational MDP} of $C$ is given by
 $\rmdpStmt{\mdpRewSink}{\ps_0}{\stmt} = (\mdpStates,\, \mdpAct,\, \mdpTPF,\,
 \mdpInit,\, \mdpRew)$, where
 \begin{itemize}
  \item $\mdpStates \triangleq ((\Stmt \cup \{ \Terminated \} \cup \{\Terminated;C ~|~ C \in \Stmt\}) \times \ProgramStates) \cup \{ \mdpSinkState \}$, 
  \item $\mdpAct \triangleq \{\actL,\, \actN,\, \actR \}$,
  \item the transition probability function $\mdpTPF$ is given by the rules in \autoref{fig:transition-function},
  \item $\mdpInit \triangleq \mdpState{\stmt}{\ps_0}$, and
  \item $\mdpRew \colon \mdpStates \To \PReals$ is the reward function defined according to \autoref{table:rew-rules}.
 \end{itemize}
\end{definition}
\begin{figure}[tb]
\begin{tabular}{p{\textwidth - 1ex}}
         $\infrule
		{}
		{\transb{\mdpState{\Terminated}{\ps}}{\actN}{\mdpSinkState}{1}}
		~\lrule{\textrm{terminated}}$
        \hfill
	$\infrule
		{}
		{\transb{\mdpSinkState}{\actN}{\mdpSinkState}{1}}
		~\lrule{\textrm{sink}}$
	\\[4ex]
	$\infrule
		{}
		{\trans{\EMPTY}{\ps}{\actN}{\Terminated}{\ps}{1}}
		~\lrule{\textrm{empty}}$
        \hfill
	$\infrule
		{}
		{\trans{\SKIP}{\ps}{\actN}{\Terminated}{\ps}{1}}
		~\lrule{\textrm{skip}}$
	\\[4ex]
	$\infrule
		{}
		{\mdpState{\HALT}{\ps} \xrightarrow{\actN} \mdpSinkState \vdash 1}
		~\lrule{\textrm{halt}}$
        \hfill
	$\infrule
		{\probofInstance{\pexpr}{v}{\ps} = p > 0}
		{\trans{\APPASSIGN{x}{\pexpr}}{\ps}{\actN}{\Terminated}{\ps\subst{x}{v}}{p}}
		~\lrule{\textrm{\textrm{pr--assgn}}}$
	\\[4ex]
	$\infrule
		{\trans{\stmt_1}{\ps}{\alpha}{\stmt_1'}{\ps'}{p},~ \alpha \in
          \mdpAct \quad 0 < p \leq 1}
		{\trans{\stmt_1;\stmt_2}{\ps}{\alpha}{\stmt_1';\stmt_2}{\ps'}{p}}
		~\lrule{\textrm{seq$_1$}}$
        \hfill
	$\infrule
		{}
		{\trans{\Terminated;\stmt_2}{\ps}{\actN}{\stmt_2}{\ps}{1}}
		~\lrule{\textrm{seq$_2$}}$
	\\[4ex]
	$\infrule
		{}
		{\trans{\NDCHOICE{\stmt_1}{\stmt_2}}{\ps}{\actL}{\stmt_1}{\ps}{1}}
		~\lrule{$\Box$--\textrm{\textit{L}}}$
	\hfill
	$\infrule
		{}
		{\trans{\NDCHOICE{\stmt_1}{\stmt_2}}{\ps}{\actR}{\stmt_2}{\ps}{1}}
		~\lrule{$\Box$--\textrm{\textit{R}}}$
	\\[4ex]
	$\infrule
		{\probofInstance{\pguard}{\true}{\ps} = p > 0}
		{\trans{\ITE{\pguard}{\stmt_1}{\stmt_2}}{\ps}{\actN}{\stmt_1}{\ps}{p}}
		~\lrule{\textrm{\textrm{if--true}}}$
         \\[4ex]
	$\infrule
		{\probofInstance{\pguard}{\false}{\ps} = p > 0}
		{\trans{\ITE{\pguard}{\stmt_1}{\stmt_2}}{\ps}{\actN}{\stmt_2}{\ps}{p}}
		~\lrule{\textrm{\textrm{if--false}}}$
	\\[4ex]
	$\infrule
		{}
		{\trans{\WHILEDO{\pguard}{\stmt}}{\ps}{\actN}{\ITE{\pguard}{\stmt;\WHILEDO{\pguard}{\stmt}}{\EMPTY}}{\ps}{1}}
		~\lrule{\textrm{\textrm{while}}}$\\[4ex]
  \end{tabular}
\hrule
\normalsize
  \caption{Rules for the transition probability function of operational MDPs.}
  \label{fig:transition-function}
  \end{figure}
\begin{table}[tb]
\begin{center}
\renewcommand{\arraystretch}{1.5}
\begin{tabular}{@{\qquad}l@{\hspace{2em}}l@{\qquad}}
\hline\hline
$\boldsymbol{s}$ 	& $\boldsymbol{\mdpRew(s)}$\\
\hline\hline
$\mdpState{\Terminated}{\ps}$ & $f(\sigma)$\\
$\mdpState{\SKIP}{\ps}$, $\mdpState{\APPASSIGN{x}{\pexpr}}{\ps}$, $\mdpState{\ITE{\pguard}{\stmt_1}{\stmt_2}}{\ps}$ & $1$	\\
$\mdpSinkState$, $\mdpState{\EMPTY}{\ps}$, $\mdpState{\HALT}{\ps}$, $\mdpState{\COMPOSE{\Terminated}{\stmt_2}}{\ps}$, & $0$\\[-1.15ex]
$\mdpState{\NDCHOICE{\stmt_1}{\stmt_2}}{\ps}$, $\mdpState{\WHILEDO{\pguard}{\stmt}}{\ps}$	&  \\
$\mdpState{\COMPOSE{\stmt_1}{\stmt_2}}{\ps}$ & $\mdpRew(\mdpState{\stmt_1}{\ps})$ \\[.8ex]
\hline
\end{tabular}
\end{center}
\caption{Definition of the reward function $\mdpRew \colon \mdpStates \To \PReals$ of operational MDPs.}
\label{table:rew-rules}
\vspace{-1\baselineskip}
\end{table}
Since the initial state of the MDP $\rmdpStmt{\mdpRewSink}{\ps_0}{\stmt}$ of a program $\stmt$ with initial state $\ps_0$ is uniquely given, instead of $\ExpRew{\rmdpStmt{\mdpRewSink}{\ps_0}{\stmt}}{\mdpState{\stmt}{\ps_0} \models \mdpEventually T}$ we simply write
\[ \ExpRew{\rmdpStmt{\mdpRewSink}{\ps}{\stmt}}{T}~. \]
The rules in \autoref{fig:transition-function} defining the transition
probability function of a program's MDP are self--explanatory.  Since only guard
evaluations, assignments and $\SKIP$ statements are assumed to consume time,
i.e. have a positive reward, we assign a reward of $0$ to all other program
statements.  Moreover, note that all states of the form
$\mdpState{\EMPTY}{\ps}$, $\mdpState{\Terminated}{\ps}$ and $\mdpSinkState$ are
needed, because an operational MDP is defined with respect to a given
continuation $\mdpRewSink \in \E$.  In case of $\mdpState{\EMPTY}{\ps}$, a
reward of $0$ is collected and after that the program successfully terminates,
i.e. enters state $\mdpState{\Terminated}{\ps}$ where the continuation
$\mdpRewSink$ is collected as reward.  In contrast, since no state other than
$\mdpSinkState$ is reachable from the unique sink state $\mdpSinkState$, the
continuation $\mdpRewSink$ is not taken into account if $\mdpSinkState$ is
reached without reaching a state $\mdpState{\Terminated}{\ps}$ first.
Hence the operational MDP directly enters $\mdpSinkState$ from a state of the
form $\mdpState{\HALT}{\ps}$.
\begin{example}[MDP of $C_\mathit{trunc}$] \label{example:rmdp}
 Recall the probabilistic program $C_\mathit{trunc}$ from \autoref{example:eet}.
\autoref{fig:rmdp_example} depicts the MDP $\rmdpStmt{\mdpRewSink}{\ps}{C_\mathit{trunc}}$ for an arbitrary fixed state $\ps \in \ProgramStates$ and an arbitrary continuation $\mdpRewSink \in \E$.
Here labeled edges denote the value of the transition probability function for the respective states, while the reward of each state is provided in gray next to the state.
To improve readability, edge labels are omitted if the probability of a transition is $1$.
Moreover, $\rmdpStmt{\mdpRewSink}{\ps}{C_\mathit{trunc}}$ is a Markov chain, because $C_\mathit{trunc}$ contains no non-deterministic choice.

A brief inspection of \autoref{fig:rmdp_example} reveals that $\rmdpStmt{\mdpRewSink}{\ps}{C_\mathit{trunc}}$ contains three finite paths $\fpath_{\true}$, $\fpath_{\false~\true}$, $\fpath_{\false~\false}$ that eventually reach state $\mdpSinkState$ starting from the initial state $\mdpState{\stmt_\mathit{trunc}}{\ps}$.
These paths correspond to the results of the two probabilistic guards in $\stmt$.
Hence the expected reward of $\rmdpStmt{\mdpRewSink}{\ps}{C}$ to eventually reach $T = \{ \mdpSinkState \}$ is given by
\begin{flalign*}
\MoveEqLeft[2] \ExpRew{\rmdpStmt{\mdpRewSink}{\ps}{\stmt_\mathit{trunc}}}{T} 
\\
   &~=~ \textstyle{\sup_{\mdpSched} \: \sum\nolimits_{\fpath \in \mdpSPaths{s}{T}} \mdpPr{\rmdpMc}{\fpath} \cdot \mdpRew(\fpath)}\\
   & ~=~ \textstyle{\sum\nolimits_{\fpath \in \mdpSPaths{s}{T}} \mdpPr{\rmdp}{\fpath} \cdot \mdpRew(\fpath)} \tag{$\rmdpStmt{\mdpRewSink}{\ps}{C_\mathit{trunc}} = \rmdp$ is a Markov chain} \\
   & ~=~ \mdpPr{\rmdp}{\fpath_{\true}} \cdot \mdpRew(\fpath_{\true})
     ~+~ \mdpPr{\rmdp}{\fpath_{\false~\true}} \cdot \mdpRew(\fpath_{\false~\true}) \\
   & \quad ~+~ \mdpPr{\rmdp}{\fpath_{\false~\false}} \cdot \mdpRew(\fpath_{\false~\false}) \\
   & ~=~ \bigl(\tfrac 1 2 \cdot 1 \cdot 1 \bigr) \cdot (1+1+\mdpRewSink(\ps\subst{\textit{succ}}{\true})) \\
   & \quad ~+~ \bigl(\tfrac 1 2 \cdot \tfrac 1 2 \cdot 1 \cdot 1 \bigr) \cdot (1+1+1+\mdpRewSink(\ps\subst{\textit{succ}}{\true})) \\
   & \quad ~+~ \bigl(\tfrac 1 2 \cdot \tfrac 1 2 \cdot 1 \cdot 1 \bigr) \cdot (1+1+1+\mdpRewSink(\ps\subst{\textit{succ}}{\false})) \\
   & ~=~ 1 + \tfrac 1 2 \cdot \mdpRewSink(\ps\subst{\textit{succ}}{\true})  ~+~  \tfrac 1 4 \cdot (6 + \mdpRewSink(\ps\subst{\textit{succ}}{\true}) \\
   & \quad ~+~ \mdpRewSink(\ps\subst{\textit{succ}}{\false})) \\
   & ~=~ \tfrac 5 2 ~+~ \tfrac 3 4 \cdot \mdpRewSink(\ps\subst{\textit{succ}}{\true}) ~+~ \tfrac 1 4 \cdot \mdpRewSink(\ps\subst{\textit{succ}}{\false}).
\end{flalign*}
Observe that for $\mdpRewSink = \ctert{0}$, the expected reward
$\ExpRew{\rmdpStmt{\mdpRewSink}{\ps}{\stmt_\mathit{trunc}}}{T}$ and the expected
run--time $\eet{\stmt}{f}(\sigma)$ (cf.~\autoref{example:eet}) coincide, both
yielding $\nicefrac 5 2$.  \hfill$\triangle$
\end{example}
\begin{figure}[tb]
 \begin{center}
   \begin{tikzpicture}[->,>=stealth',node distance=2cm,semithick,minimum size=1cm]
\tikzstyle{every state}=[draw=none]
   
   \node [state, initial, initial text=,] (init) {$\mdpState{C}{\ps}$};  
   \node (initLabel) [node distance=0.5cm,gray,above of=init] {$1$};
   
   \node [state,] (psdummy) [below of = init] {};
   
   \node [state,] (pdummy) [left of = psdummy] {};
   \node [state,] (p) [left of = psdummy] {$\mdpState{\ASSIGN{\mathit{succ}}{\true}}{\ps}$};
   \node (pLabel) [node distance=1.5cm,gray,left of=p] {$1$};
   \node (ps) [right of = pdummy, node distance = 0.7cm] {};
   
   
   \node [state,] (rdummy) [below of = p] {};
   \node [state,] (r) [below of = p] {$\mdpState{\Terminated}{\ps\subst{\mathit{succ}}{\true}}$};
   \node (rLabel) [node distance=2.7cm,gray,left of=r] {$\mdpRewSink(\ps\subst{\mathit{succ}}{\true})$};
   
   \node [state,] (sdummy) [right of = psdummy] {};
   \node [state,] (s) [right of = psdummy] {$\mdpState{C'}{\ps}$};
   \node (sLabel) [node distance=1cm,gray,right of=s] {$1$};
   \node (sp) [left of = sdummy, node distance = 0.1cm] {};
   
   
   \node [state,] (s0dummy) [below of = s] {};
   \node [state,] (s0) [below of = s] {$\mdpState{\ASSIGN{\mathit{succ}}{\false}}{\ps}$};
   \node (s0Label) [node distance=1.5cm,gray,right of=s0] {$1$};
   
   \node [state,] (t0dummy) [below of = s0] {};
   \node [state,] (t0) [below of = s0] {$\mdpState{\Terminated}{\ps\subst{\mathit{succ}}{\false}}$};
   \node (rLabel) [node distance=2.7cm,gray,right of=t0] {$\mdpRewSink(\ps\subst{\mathit{succ}}{\false})$};
   \node [state,] (ts) [left of = t0dummy, node distance = 1cm] {};
   
   \node [state,] (sinkdummy) [below of = r] {};
   \node [state,] (sink) [below of = r] {$\mdpSinkState$};
   \node (sinkLabel) [node distance=1cm,gray,left of=sink] {$0$};

  \path
      (init) edge [] node [left] {\scriptsize{$\nicefrac 1 2$}} (pdummy)
      (init) edge [] node [right] {\scriptsize{$\nicefrac 1 2$}} (sdummy)
      (pdummy) edge []  (rdummy)
      (rdummy) edge []  (sinkdummy)
      (ts) edge []  (sink)
      (sinkdummy) edge [loop below] (sinkdummy)
      (sdummy) edge [] node [right] {\scriptsize{$\nicefrac 1 2$}} (s0dummy)
      (sp) edge [] node [above] {\scriptsize{$\nicefrac 1 2$}} (ps)
      (s0dummy)  edge [] (t0dummy)
      
  ;
\end{tikzpicture}
 \end{center}
  \vspace*{-25pt}
 \caption{The operational MDP $\rmdpStmt{\mdpRewSink}{\ps}{C_\mathit{trunc}}$
   corresponding to the program in \autoref{example:rmdp}. $C'$ denotes the
   subprogram $\ITE{\nicefrac 1 2 \cdot \langle \true \rangle + \nicefrac 1 2 \cdot \langle \false
 \rangle}{\ASSIGN{\mathit{succ}}{\true}}{\ASSIGN{\mathit{succ}}{\false}}$.}
 \label{fig:rmdp_example}
\end{figure}
%
%
%
%
%
%
The main result of this section is that $\eetsymbol$ precisely captures the expected reward
of the MDPs associated to our probabilistic programs.
\begin{theorem}[Soundness of the \boldeetsymbol{} transformer]\label{thm:eet-soundness}
Let $\pguard \in \DExprs$, $\stmt \in \Stmt$, and $\mdpRewSink \in \E$. Then, for each $\ps \in \ProgramStates$, we have
  \[ 
      \ExpRew{\rmdpStmt{\mdpRewSink}{\ps}{\stmt}}{\mdpSinkState} ~=~ \eet{\stmt}{\mdpRewSink}(\ps)~.
  \]
\end{theorem}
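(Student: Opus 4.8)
The plan is to proceed by structural induction on $\stmt$, showing in each case that the maximal expected reward to reach $\mdpSinkState$ in $\rmdpStmt{\mdpRewSink}{\ps}{\stmt}$ equals $\eet{\stmt}{\mdpRewSink}(\ps)$. For the base cases $\EMPTY$, $\SKIP$, $\HALT$ and $\APPASSIGN{x}{\mu}$ the operational MDP performs a single step into a $\Terminated$--state (or directly into $\mdpSinkState$, for $\HALT$), so the expected reward can be read off directly from the transition rules of \autoref{fig:transition-function} together with the reward assignment of \autoref{table:rew-rules}; comparing with the corresponding rows of \autoref{table:eet-rules} closes these cases. For instance, for $\APPASSIGN{x}{\mu}$ the single step collects reward $1$ and branches to $\mdpState{\Terminated}{\ps\subst{x}{v}}$ with probability $\probof{\mu}{v}(\ps)$, where the continuation $\mdpRewSink$ is collected -- which is exactly $\eet{\APPASSIGN{x}{\mu}}{\mdpRewSink}(\ps)$.

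For the compositional, loop--free cases I would first record a path--decomposition observation. For $\COMPOSE{C_1}{C_2}$, every terminating path factors through a uniquely determined state $\mdpState{\COMPOSE{\Terminated}{C_2}}{\ps'}$: by the first sequential rule the prefix mirrors a terminating run of $C_1$ and collects the same rewards, after which the second sequential rule passes control to $\mdpState{C_2}{\ps'}$. Treating the reward accrued in the $C_2$--phase, which by the induction hypothesis for $C_2$ equals $\eet{C_2}{\mdpRewSink}(\ps')$, as the continuation attached to the terminal states of the $C_1$--phase, the induction hypothesis for $C_1$ (with continuation $\eet{C_2}{\mdpRewSink}$) yields $\eet{C_1}{\eet{C_2}{\mdpRewSink}}(\ps)$, matching \autoref{table:eet-rules}. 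The nondeterministic choice and the probabilistic conditional are handled the same way: the initial transition selects either via the scheduler's choice of $\actL/\actR$ -- so that taking the supremum over schedulers produces the pointwise $\max\{\eet{C_1}{\mdpRewSink}(\ps),\, \eet{C_2}{\mdpRewSink}(\ps)\}$, using that after the choice the induction hypothesis already captures optimal scheduling within each branch -- or via the guard probabilities $\probof{\pguard}{\true},\probof{\pguard}{\false}$ together with the reward $\ctert{1}$ for the guard evaluation, with the induction hypothesis applied to each branch.

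The main obstacle is the loop $\WHILEDO{\pguard}{C}$, whose operational MDP is recursive -- the while--rule unfolds it into a conditional that contains a fresh copy of the loop -- so the structural induction hypothesis is not directly available for the loop itself. Writing $g = \lambda\ps.\,\ExpRew{\rmdpStmt{\mdpRewSink}{\ps}{\WHILEDO{\pguard}{C}}}{\mdpSinkState} \in \E$, I would first show that $g$ is a fixed point of the characteristic functional $F_{\mdpRewSink}$ of \autoref{def:F}. Unfolding one step into $\ITE{\pguard}{\COMPOSE{C}{\WHILEDO{\pguard}{C}}}{\EMPTY}$ and invoking the already--established conditional and sequential cases -- the body $C$ is a structural subprogram, and the continuation of the $C$--phase is the loop's own reward $g$ -- gives $g = \ctert{1} + \probof{\pguard}{\false}\cdot\mdpRewSink + \probof{\pguard}{\true}\cdot\eet{C}{g} = F_{\mdpRewSink}(g)$. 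Since $\eet{\WHILEDO{\pguard}{C}}{\mdpRewSink} = \lfp F_{\mdpRewSink}$ is the \emph{least} fixed point, this immediately yields the inequality $\eet{\WHILEDO{\pguard}{C}}{\mdpRewSink} \preceq g$.

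The harder direction is $g \preceq \lfp F_{\mdpRewSink}$. Here I would introduce the finite--horizon rewards $g_n$, accounting only for reward collected along paths that leave the loop within $n$ unfoldings, and prove by an inner induction on $n$ that $g_n \preceq F_{\mdpRewSink}^{n}(\zero)$, again using the induction hypothesis for $C$. Because the cumulative reward of a path is the supremum of its finite prefixes and each terminating path exits after finitely many iterations, a monotone--convergence argument gives $g = \sup_n g_n$ whenever the loop terminates almost surely, so by continuity of $\eetsymbol$ (\autoref{thm:eet-cont}) and the Kleene Fixed Point Theorem, $g = \sup_n g_n \preceq \sup_n F_{\mdpRewSink}^{n}(\zero) = \lfp F_{\mdpRewSink}$. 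When the loop fails to terminate almost surely, the operational convention forces $g = \infty$, and I would verify separately that $\lfp F_{\mdpRewSink} = \infty$ as well (the guaranteed reward $\ctert{1}$ per unfolding along the positive--probability surviving runs makes the iterates $F_{\mdpRewSink}^{n}(\zero)$ grow without bound), so the inequality holds trivially. Combining both inequalities closes the loop case and hence the theorem. The delicate points to get right are the bookkeeping of rewards across the $\COMPOSE{\Terminated}{C_2}$ interface in the sequential case, the interaction of the supremum over schedulers with the recursive unfolding, and the treatment of the infinite--reward convention in the finite--horizon limit.
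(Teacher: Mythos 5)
Your proof follows the paper's skeleton everywhere except the loop case: structural induction, reading the atomic cases off the MDP, a purely operational reward--decomposition lemma for sequencing (the paper isolates this as \autoref{thm:exprew-seq}), and the scheduler/guard decompositions for nondeterministic choice and conditionals are all the same. The loop is where you genuinely diverge. The paper introduces syntactic truncations $\BOUNDEDWHILE{k}{\pguard}{C}$ ending in $\HALT$ and proves three facts: $\eet{\BOUNDEDWHILE{k}{\pguard}{C}}{\mdpRewSink}$ equals the Kleene iterate $\eetF^{k}(\zero)$ (\autoref{thm:eet-bounded-loops}); the truncations' operational rewards approach the loop's from below (\autoref{thm:exprew-bounded-loops-leq}, induction on $k$); and every reward--carrying finite path of the loop's MDP already lives in some truncation (\autoref{thm:exprew-bounded-loops-geq}), giving the converse. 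You instead obtain $\lfp \eetF \preceq g$ by showing the operational reward $g$ is a fixed point of $\eetF$ --- the same Park--style move the paper uses for \autoref{thm:upper-bounds} --- and handle the other direction by finite--horizon truncation of $g$: your approximants $g_n$ are semantically the operational rewards of the paper's bounded loops, and the comparison $g_n \preceq \eetF^{n}(\zero)$ is the paper's inner induction in disguise. The two routes carry the same content; yours avoids auxiliary syntax and gets one inequality from order theory, while the paper's makes the unrolling/iterate correspondence explicit and reusable. One caution on your fixed--point step: the sequencing and conditional decompositions you invoke there must be available as scheduler--level MDP lemmas for \emph{arbitrary} programs (as \autoref{thm:exprew-seq} is), not as instances of the induction, since $\COMPOSE{C}{\WHILEDO{\pguard}{C}}$ is not a subprogram of the loop; your parenthetical shows you intend exactly this, so no circularity arises.

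The one soft spot is the non--almost--surely--terminating case. The identity $g = \sup_n g_n$ is valid only when the sink is reached with probability one under every scheduler; otherwise $g = \infty$ purely by the convention in the definition of expected reward, and you must separately establish $(\lfp \eetF)(\ps) = \infty$. Your one--line justification (reward $\ctert{1}$ per unfolding along surviving runs) is the right idea, but making it precise needs an additional induction relating the iterates $\eetF^{n}(\zero)$ to the operational probability of surviving $n$ iterations, plus a separate sub--case where divergence occurs with positive probability \emph{inside} the body (there the induction hypothesis for $C$ together with the same convention yields the value $\infty$). To be fair, this is a defect you share with, rather than add to, the published argument: the path--correspondence proof of \autoref{thm:exprew-bounded-loops-geq} also silently restricts attention to finite sink--reaching paths and never discharges the convention case, whereas you at least flag it explicitly.
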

\begin{proof}
  By induction on the program structure.
  See Appendix~\ref{sec:thm:eet-soundness}.
\qed
\end{proof}
%
%
%
%

\section{Expected Run--Time of Loops}
\label{sec:loopy-rules}
Reasoning about the run--time of loop--free programs consists mostly of
syntactic reasoning.  The run--time of a loop, however, is given in terms of a
least fixed point.  It is thus obtained by fixed point iteration but need not be
reached within a finite number of iterations. To overcome this problem we next
study invariant--based proof rules for approximating the run--time of loops.

We present two families of proof rules which differ in the kind of the
invariants they build on.  In Section~\ref{sec:glob-inv} we present a proof rule
that rests on the presence of an invariant approximating the entire run--time of
a loop in a global manner, while in Section~\ref{sec:inc-inv} we present two
proof rules that rely on a parametrized invariant that approximates the
run--time of a loop in an incremental fashion. Finally in
Section~\ref{sec:ref-bounds} we discuss how to improve the run--time bounds
yielded by these proof rules.

\subsection{Proof Rule Based on Global Invariants}
\label{sec:glob-inv}

The first proof rule that we study allows upper--bounding the expected run--time of
 loops and rests on the notion of \emph{upper invariants}.
\begin{definition}[Upper invariants]
\label{def:right-invariants}
Let $f \in \E$, $C \in \appProgs$ and $\pguard \in \DExprs$. We say that $I \in
\E$ is an \emph{upper invariant} of loop $\WHILEDO{\pguard}{C}$ with respect to
$f$ iff
\begin{align*}
 \ctert{1} + \probof{\pguard}{\false} \cdot f + \probof{\pguard}{\true} \cdot
 \eet{C}{I} ~\preceq I
\end{align*} 
or, equivalently, iff $F_f^{\langle \pguard, C \rangle} (I) \preceq I$, where $F_f^{\langle \pguard, C \rangle}$ is the characteristic functional.
\end{definition}
The presence of an upper invariant of a loop readily establishes an upper bound of the loop's run--time.
\begin{theorem}[Upper bounds from upper invariants]
\label{thm:upper-bounds}
Let $f \in \E$, $C \in \appProgs$ and $\pguard \in \DExprs$. If $I \in \E$ is an
upper invariant of $\WHILEDO{\pguard}{C}$ with respect to $f$ then
\begin{align*}
	\eet{\WHILEDO{\pguard}{C}}{f} ~\preceq~ I~.
\end{align*}
\end{theorem}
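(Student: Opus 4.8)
The plan is to exploit the fact that $\eet{\WHILEDO{\pguard}{C}}{f}$ is defined as the \emph{least} fixed point of the characteristic functional $F_f^{\langle \pguard, C \rangle}$, together with the standard Kleene/Knaster--Tarski machinery already set up in the excerpt. The statement to prove, namely $\lfp F_f \preceq I$, is exactly the assertion that the least fixed point lies below any \emph{prefixed point} of $F_f$; and an upper invariant $I$ is by Definition~\ref{def:right-invariants} precisely a prefixed point, since $F_f^{\langle \pguard, C \rangle}(I) \preceq I$. So at the level of order theory this is an instance of the Knaster--Tarski characterization of the least fixed point as the infimum of the set of prefixed points. The task reduces to carrying out this argument constructively in our $\omega$--cpo $(\E,\preceq)$, where we only have Kleene iteration available rather than the full Knaster--Tarski theorem.

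First I would recall from the discussion after Lemma~\ref{thm:eet-cont} that $F_f = F_f^{\langle \pguard, C \rangle}$ is continuous, so by the Kleene Fixed Point Theorem its least fixed point is obtained as the supremum of the ascending chain of iterates applied to the bottom element:
\[
\eet{\WHILEDO{\pguard}{C}}{f} ~=~ \lfp F_f ~=~ \Lub_{n \in \Nats} F_f^{n}(\zero)~,
\]
where $\zero = \lambda\sigma.\,0$ and $F_f^{n}$ denotes $n$--fold application. It therefore suffices to show that $F_f^{n}(\zero) \preceq I$ for every $n \in \Nats$, since then the supremum of the chain, being the least upper bound, also satisfies $\Lub_{n} F_f^{n}(\zero) \preceq I$.

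Next I would establish $F_f^{n}(\zero) \preceq I$ by induction on $n$. The base case $n = 0$ is immediate because $F_f^{0}(\zero) = \zero \preceq I$, as every run--time is non--negative. For the inductive step, assume $F_f^{n}(\zero) \preceq I$. Applying $F_f$ and using its monotonicity (which follows from continuity, cf.\ Theorem~\ref{thm:eet-prop}) yields $F_f^{n+1}(\zero) = F_f(F_f^{n}(\zero)) \preceq F_f(I)$. Finally, since $I$ is an upper invariant we have $F_f(I) \preceq I$ by Definition~\ref{def:right-invariants}, and combining the two inequalities gives $F_f^{n+1}(\zero) \preceq I$, closing the induction.

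The argument is essentially routine once the pieces are in place, so there is no real obstacle; the only point requiring care is the appeal to monotonicity of $F_f$, which I would derive from the monotonicity clause of Theorem~\ref{thm:eet-prop} applied to the loop body together with the fact that addition and multiplication by the non--negative factors $\probof{\pguard}{\true}$ and $\probof{\pguard}{\false}$ preserve the point--wise order $\preceq$. One should also note that no $\HALT$--freeness or fully--probabilistic hypothesis is needed here, since we use only monotonicity and the Kleene characterization of the least fixed point, both of which hold for arbitrary $C \in \appProgs$.
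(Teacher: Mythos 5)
Your proposal is correct and takes essentially the same route as the paper: the paper notes that an upper invariant is exactly a prefixed point of the continuous functional $F_f^{\langle \pguard, C \rangle}$ and concludes $\lfp F_f^{\langle \pguard, C \rangle} \preceq I$ by a direct appeal to Park's Theorem. Your induction over the Kleene iterates $F_f^{n}(\zero)$, using monotonicity of $F_f^{\langle \pguard, C \rangle}$ and the prefixed-point property $F_f^{\langle \pguard, C \rangle}(I) \preceq I$, is precisely the standard proof of Park's Theorem, so you have merely inlined the result the paper cites rather than followed a different argument.
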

%
%
%
\begin{proof}
	The crux of the proof is an application of Park's Theorem\footnote{If $H\colon \mathcal{D} \To \mathcal{D}$ is a continuous function over an $\omega$--cpo $(\mathcal{D},\sqsubseteq)$ with bottom element, then $H(d) \sqsubseteq d$ implies $\lfp H \sqsubseteq d$ for every $d \in \mathcal{D}$.}~\cite{Wechler:MTCS:92} which, given that $F_f^{\langle \pguard, C \rangle}$ is continuous (see \autoref{thm:eet-cont}), states that
	\[
		F_f^{\langle \pguard, C \rangle} (I) \preceq I ~\implies~ \lfp F_f^{\langle \pguard, C \rangle}\preceq I~.
	\]
	The left--hand side of the implication stands for $I$ being an upper invariant, while the right--hand side stands for $\eet{\WHILEDO{\pguard}{C}}{f} \preceq I$.\qed
\end{proof}
Notice that if the loop body $C$ is itself loop--free, it is usually fairly easy
to verify that some $I \in \E$ is an upper invariant, whereas \emph{inferring}
the invariant is---as in standard program verification---one of the most
involved part of the verification effort.
\begin{example}[Geometric distribution]\label{ex:geo-up}
Consider loop 
\[
C_{\mathtt{geo}} \boldsymbol{\colon} \;\;  \WHILEDO{c =
  1}{\APPASSIGN{c}{\nicefrac 1 2 \cdot
    \langle 0 \rangle + \nicefrac 1 2 \cdot  \langle 1 \rangle}}~.
\]
From the calculations below we conclude that $I = \ctert{1} + \eval{c = 1} \cdot
\ctert{4}$ is an upper invariant with respect to $\ctert{0}$:
\begin{flalign*}
\MoveEqLeft[2] \ctert{1} + \eval{c \neq 1} \cdot \ctert{0} + \eval{c = 1}  \cdot
 \eet{\APPASSIGN{c}{\nicefrac 1 2 \cdot
    \langle 0 \rangle + \nicefrac 1 2 \cdot \langle 1 \rangle}}{I}\\
&=~\ctert{1} + \eval{c = 1} \cdot \bigl(\ctert{1} + \tfrac{1}{2} \cdot
I\subst{c}{0} +  \tfrac{1}{2} \cdot I\subst{c}{1} \bigr)  &\\
&=~\ctert{1} + \eval{c = 1} \cdot \bigl(\ctert{1} + \tfrac{1}{2} \cdot (\underbrace{\ctert{1} + \eval{0 = 1} \cdot \ctert{4}}_{{=}\,\ctert{1}}) +  \tfrac{1}{2} \cdot (\underbrace{\ctert{1} + \eval{1 = 1} \cdot \ctert{4}}_{{=}\,\ctert{5}})\bigr)  &\\[-1ex]
&=~\ctert{1} + \eval{c = 1} \cdot \ctert{4} ~=~I ~\preceq~ I
\end{flalign*}
Then applying \autoref{thm:upper-bounds} we obtain
\[
\eet{C_{\mathtt{geo}}}{\ctert{0}} ~\preceq~\ctert{1} + \eval{c = 1} \cdot \ctert{4}~.
\]
In words, the expected run--time of $C_{\mathtt{geo}}$ is at most $5$ from
any initial state where $c=1$ and at most $1$ from the remaining
states. \hfill$\triangle$
\end{example}
The invariant--based technique to reason about the run--time of loops
presented in \autoref{thm:upper-bounds} is complete in the sense that there
always exists an upper invariant that establishes the exact run--time of
the loop at hand. 
\begin{theorem}
\label{thm:upp-inv-compl}
  Let $f \in \E$, $C \in \appProgs$, $\pguard \in \DExprs$. Then there exists an
  upper invariant $I$ of $\WHILEDO{\pguard}{C}$ with respect to $f$ such that
  ${\eet{\WHILEDO{\pguard}{C}}{f}=I}$.
\end{theorem}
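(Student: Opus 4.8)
The plan is to exhibit the exact expected run--time of the loop as the witnessing upper invariant. Concretely, I would take $I := \eet{\WHILEDO{\pguard}{C}}{f}$ itself as the candidate invariant. By the discussion following \autoref{thm:eet-cont}, this run--time is well--defined and equals $\lfp F_f^{\langle \pguard, C \rangle}$, the least fixed point of the characteristic functional of the loop (cf.\ \autoref{def:F}).

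The key observation is that a least fixed point is, in particular, a fixed point. Hence $F_f^{\langle \pguard, C \rangle}(I) = I$. This equality immediately yields $F_f^{\langle \pguard, C \rangle}(I) \preceq I$ by reflexivity of $\preceq$, which by \autoref{def:right-invariants} is precisely the defining condition for $I$ to be an upper invariant of $\WHILEDO{\pguard}{C}$ with respect to $f$.

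Thus $I$ is simultaneously an upper invariant and, by its very choice, satisfies $\eet{\WHILEDO{\pguard}{C}}{f} = I$, which establishes the claim. I do not expect any genuine obstacle here: the entire content of the argument is that the least fixed point is a pre--fixed point, and the sole prerequisite---the existence of $\lfp F_f^{\langle \pguard, C \rangle}$---has already been secured via the continuity of $F_f^{\langle \pguard, C \rangle}$ (\autoref{thm:eet-cont}) together with the Kleene Fixed Point Theorem. It is worth remarking that this completeness result is the exact counterpart to the soundness direction embodied in \autoref{thm:upper-bounds}: there, Park's Theorem shows that every upper invariant over--approximates the least fixed point; here, the least fixed point is shown to be attainable as an upper invariant, so the bound in \autoref{thm:upper-bounds} can in principle be made tight.
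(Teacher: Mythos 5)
Your proposal is correct and follows essentially the same route as the paper: both take $I = \eet{\WHILEDO{\pguard}{C}}{f} = \lfp F_f^{\langle \pguard, C \rangle}$ as the witness and observe that the least fixed point satisfies $F_f^{\langle \pguard, C \rangle}(I) \preceq I$, hence is an upper invariant. Your additional remark that the inequality follows from the fixed--point \emph{equality} via reflexivity is just a slightly more explicit phrasing of the paper's appeal to the definition of $\lfp$.
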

\begin{proof} The result follows from showing that
  $\eet{\WHILEDO{\pguard}{C}}{f}$ is itself an upper invariant. Since
  $\eet{\WHILEDO{\pguard}{C}}{f} = \lfp F_f^{\langle \pguard, C \rangle}$ this
  amounts to showing that
  \[
  F_f^{\langle \pguard, C \rangle}\bigl(\, \lfp
  F_f^{\langle \pguard, C \rangle} \,\bigr) ~\preceq~ \lfp F_f^{\langle \pguard, C
    \rangle}~,
  \]
  which holds by definition of $\lfp$.\qed
\end{proof}
Intuitively, the proof of this theorem shows that
$\eet{\WHILEDO{\pguard}{C}}{f}$ itself is the tightest upper invariant that the
loop  admits.
\subsection{Proof Rules Based on Incremental Invariants}
\label{sec:inc-inv}

We now study a second family of proof rules which builds on the notion of
$\omega$--invariants to establish \emph{both} upper and lower bounds for the run--time of loops.
\begin{definition}[$\omega$--invariants]
\label{def:omega-invariant}
Let $f \in \E$, $C \in \appProgs$ and $\pguard \in \DExprs$. Moreover let $I_n
\in \E$ be a run--time parametrized by $n \in \Nats$.  We say that
$I_n$ is a \emph{lower $\omega$--invariant} of loop $\WHILEDO{\pguard}{C}$ with
respect to $f$ iff
\[
 F_f^{\langle \pguard, C \rangle} (\ctert{0}) \succeq I_0%
 \qquad\text{and}\qquad
 F_f^{\langle \pguard, C \rangle} (I_n) \succeq I_{n+1} \quad\text{for all $n
   \geq 0$}~.
\]
Dually, we say that $I_n$ is an \emph{upper $\omega$--invariant} iff
\[
 F_f^{\langle \pguard, C \rangle} (\ctert{0}) \preceq I_0%
 \qquad\text{and}\qquad
 F_f^{\langle \pguard, C \rangle} (I_n) \preceq I_{n+1} \quad\text{for all $n
   \geq 0$}~.
\]
\end{definition}
Intuitively, a lower (resp.\ upper) $\omega$--invariant $I_n$ represents a lower
(resp.\ upper) bound for the expected run--time of those program runs that finish within
${n+1}$ iterations, weighted according to their probabilities. Therefore we can
use the asymptotic behavior of $I_n$ to approximate from below (resp.\ above)
the expected run--time of the entire loop.
%
\begin{theorem}[Bounds from $\omega$--invariants]
\label{thm:omega-inv}
Let $f \in \E$, $C \in \appProgs$, ${\pguard \in \DExprs}$.
\begin{enumerate}
\item \label{thm:omega-low-inv} If $I_n$ is a lower $\omega$--invariant of
  $\WHILEDO{\pguard}{C}$ with respect to $f$ and $\lim\limits_{n \To \infty} I_n$
  exists\footnote{Limit $\lim_{n \To \infty} I_n$ is to be understood pointwise,
    on $\Rposinf$, \ie $\lim_{n \To \infty} I_n = \lambda \sigma. \lim_{n \To
      \infty} I_n(\sigma)$ and $\lim_{n \To \infty} I_n(\sigma) = \infty$ is
    considered a valid value.}, then
\[
\eet{\WHILEDO{\pguard}{C}}{f} \succeq \lim_{n \To \infty} I_n~.
\]
\item \label{thm:omega-upp-inv} If $I_n$ is an upper $\omega$--invariant of
  $\WHILEDO{\pguard}{C}$ with respect to $f$ and $\lim\limits_{n \To \infty} I_n$
  exists, then
\[
\eet{\WHILEDO{\pguard}{C}}{f} \preceq \lim_{n \To \infty} I_n~.
\]
\end{enumerate}
\end{theorem}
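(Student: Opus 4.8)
The plan is to combine the Kleene characterization of the least fixed point with a one--line induction. Since the characteristic functional $F_f^{\langle \pguard, C \rangle}$ is continuous (\autoref{thm:eet-cont}), the Kleene Fixed Point Theorem yields
\[
  \eet{\WHILEDO{\pguard}{C}}{f} ~=~ \lfp F_f^{\langle \pguard, C \rangle} ~=~ \supn X_n~, \qquad \text{where } X_0 = \ctert{0},~ X_{n+1} = F_f^{\langle \pguard, C \rangle}(X_n)~.
\]
These Kleene iterates form an ascending $\omega$--chain $X_0 \preceq X_1 \preceq \cdots$ starting from the bottom element $\ctert{0} = \zero$, so their supremum is also their pointwise limit $\lim_{n \To \infty} X_n = \eet{\WHILEDO{\pguard}{C}}{f}$.

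For the lower bound I would first show $X_{n+1} \succeq I_n$ for all $n$ by induction on $n$. The base case $X_1 = F_f^{\langle \pguard, C \rangle}(\ctert{0}) \succeq I_0$ is exactly the first defining inequality of a lower $\omega$--invariant. For the step, assuming $X_{n+1} \succeq I_n$, monotonicity of $F_f^{\langle \pguard, C \rangle}$---which follows from monotonicity of $\eet{C}{\:\cdot\:}$ (\autoref{thm:eet-prop})---gives $X_{n+2} = F_f^{\langle \pguard, C \rangle}(X_{n+1}) \succeq F_f^{\langle \pguard, C \rangle}(I_n) \succeq I_{n+1}$, the last step being the second defining property of the invariant. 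The upper bound is entirely symmetric: replacing every $\succeq$ by $\preceq$ and invoking the two inequalities for upper $\omega$--invariants yields $X_{n+1} \preceq I_n$ for all $n$.

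It then remains to pass to the limit. As $X_n$ is ascending, $\lim_{n \To \infty} X_{n+1} = \lim_{n \To \infty} X_n = \eet{\WHILEDO{\pguard}{C}}{f}$ exists in $\Rposinf$, and $\lim_{n \To \infty} I_n$ exists by hypothesis; taking pointwise limits on both sides of $X_{n+1} \succeq I_n$ (resp.\ $X_{n+1} \preceq I_n$) and using that limits preserve $\geq$ (resp.\ $\leq$) on $\Rposinf$ delivers the two claimed bounds. The only point demanding care---and the closest thing to an obstacle---is that the parametrized invariant $I_n$ need not be monotone in $n$; this is exactly why existence of $\lim_{n \To \infty} I_n$ must be assumed rather than derived, and why I compare against the genuinely ascending Kleene iterates $X_n$ rather than attempting to sandwich the $I_n$ directly.
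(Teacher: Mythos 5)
Your proof is correct and follows essentially the same route as the paper's: both invoke the Kleene Fixed Point Theorem to identify $\eet{\WHILEDO{\pguard}{C}}{f}$ with the limit of the ascending iterates $F_f^{n}(\ctert{0})$, prove $F_f^{n+1}(\ctert{0}) \succeq I_n$ (resp.\ $\preceq I_n$) by induction using monotonicity of the characteristic functional, and pass to the pointwise limit. Your closing remark about why the existence of $\lim_{n \To \infty} I_n$ must be assumed is a sensible observation, but the argument itself is the paper's argument.
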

\begin{proof}
  We prove only the case of lower $\omega$--invariants since the other case
  follows by a dual argument. Let $F_f$ be the characteristic functional of the
  loop with respect to $f$. 
  Let $F_f^0 = \ctert{0}$ and $F_f^{n+1} = F_f(F_f^{n})$. 
  By the Kleene Fixed Point Theorem,
  $\eet{\WHILEDO{\pguard}{C}}{f} = \sup_n F_f^n$ and since $F_f^0 \preceq F_f^1
  \preceq \ldots$ forms an $\omega$--chain, by the Monotone Sequence
  Theorem\footnote{If $\langle a_n \rangle_{n \in \Nats}$ is an increasing
    sequence in $\Rposinf$, then $\lim_{n \To \infty} a_n$
    coincides with supremum $\supn a_n$.}, $\sup_n F_f^n = \lim_{n \To \infty}
  F_f^n$. Then the proof follows from showing that $F_f^{n+1} \succeq I_n$. We
  prove this by induction on $n$. The base case $F_f^{1} \succeq I_0$ holds
  because $I_n$ is a lower $\omega$--invariant. For
  the inductive case we reason as follows:
\[
F_f^{n+2} =  F_f \bigl(F_f^{n+1}\bigr) \succeq F_f (I_n) \succeq I_{n+1}~.
\]
Here the first inequality follows by I.H.\ and the monotonicity of $F_f$ (recall
that $\eetsymbol[C]$ is monotonic by
\autoref{thm:eet-prop}), while the second inequality holds
because $I_n$ is a lower $\omega$--invariant.\qed
\end{proof}
\begin{example}[Lower bounds for $C_{\mathtt{geo}}$] Reconsider loop $C_{\mathtt{geo}}$ from \autoref{ex:geo-up}. 
Now we use \autoref{thm:omega-inv}.\ref{thm:omega-low-inv} to show that $\ctert{1} +
  \eval{c = 1} \cdot \ctert{4}$ is also a lower bound of its run--time. To
  this end we first show that $I_n = \ctert{1} + \eval{c = 1} \cdot
  \left(\ctert{4}-\nicefrac{\ctert{3}}{\ctert{2}^n}\right)$ is a lower
  $\omega$--invariant of the loop with respect to $\ctert{0}$:
\begin{flalign*}
F_{\ctert{0}}(\ctert{0})
&=~\ctert{1} + \eval{c \neq 1} \cdot \ctert{0} + \eval{c = 1}  \cdot
 \eet{\APPASSIGN{c}{\nicefrac 1 2
    \langle 0 \rangle + \nicefrac 1 2 \langle 1 \rangle}}{\ctert{0}}\\
&=~\ctert{1} + \eval{c = 1} \cdot \bigl(\ctert{1} + \tfrac{1}{2} \cdot
\ctert{0}\subst{c}{0} +  \tfrac{1}{2} \cdot \ctert{0}\subst{c}{1} \bigr)  \\
&=~\ctert{1} + \eval{c = 1} \cdot \ctert{1}  ~=~  \ctert{1} + \eval{c = 1} \cdot \left(\ctert{4}-\nicefrac{\ctert{3}}{\ctert{2}^0}\right) ~=~ I_0 ~\succeq~ I_0\\[2ex]
F_{\ctert{0}}(I_n)
&=~\ctert{1} + \eval{c \neq 1} \cdot \ctert{0} + \eval{c = 1}  \cdot
 \eet{\APPASSIGN{c}{\nicefrac 1 2
    \langle 0 \rangle + \nicefrac 1 2 \langle 1 \rangle}}{I_n}\\
&=~\ctert{1} + \eval{c = 1} \cdot \bigl(\ctert{1} + \tfrac{1}{2} \cdot
I_n\subst{c}{0} +  \tfrac{1}{2} \cdot I_n\subst{c}{1} \bigr)  \\
&=~\ctert{1} + \eval{c = 1} \cdot \Bigl( \ctert{1} + \tfrac{1}{2} \cdot (\ctert{1}+\ctert{0}) + \tfrac{1}{2} \cdot \bigl(\ctert{1} + \left(\ctert{4}-\tfrac{\ctert{3}}{\ctert{2}^n}\right) \bigr)  \Bigr) \\
&=~\ctert{1} + \eval{c = 1} \cdot
\left(\ctert{4}-\tfrac{\ctert{3}}{\ctert{2}^{n+1}}\right) ~=~I_{n+1} ~\succeq~ I_{n+1}
\end{flalign*}
Then from  \autoref{thm:omega-inv}.\ref{thm:omega-low-inv} we obtain
\[
\eet{C_{\mathtt{geo}}}{\ctert{0}} ~\succeq~ \lim_{n \To \infty}  \, \Bigl( \ctert{1} + \eval{c = 1} \cdot
  \bigl(\ctert{4}-\tfrac{\ctert{3}}{\ctert{2}^n}\bigr) \Bigr) ~=~ \ctert{1} + \eval{c = 1} \cdot \ctert{4}~.
\]
Combining this result with the upper bound $\eet{C_{\mathtt{geo}}}{\ctert{0}}
\preceq \ctert{1} + \eval{c = 1} \cdot \ctert{4}$ established in
\autoref{ex:geo-up} we conclude that $\ctert{1} + \eval{c = 1} \cdot \ctert{4}$
is the exact run--time of $C_{\mathtt{geo}}$.  Observe, however, that the above
calculations show that $I_n$ is both a lower and an upper $\omega$--invariant
(exact equalities $F_0(\ctert{0})=I_0$ and $F_0(I_n)=I_{n+1}$ hold). Then we can
apply
\autoref{thm:omega-inv}.\ref{thm:omega-low-inv}~and~\ref{thm:omega-inv}.\ref{thm:omega-upp-inv}
simultaneously to derive the exact run--time without having to resort to the
result from \autoref{ex:geo-up}. 

\paragraph{Invariant Synthesis.} In order to synthesize invariant
$I_n = \ctert{1} + \eval{c = 1} \cdot
\left(\ctert{4}-\nicefrac{\ctert{3}}{\ctert{2}^n}\right)$, 
we proposed template $I_n = \ctert{1} + \eval{c = 1} \cdot a_n$ and observed
that under this template the definition of lower $\omega$--invariant reduces to
$a_0 \leq 1$, $a_{n+1} \leq 2 + \tfrac{1}{2} a_n$, which is satisfied by
$a_n= 4 - \nicefrac{3}{2^n}$. \hfill$\triangle$
\end{example}
Now we apply \autoref{thm:omega-inv}.\ref{thm:omega-low-inv} to a program with infinite expected run--time. 
\begin{example}[Almost--sure termination at infinite expected run--time]
\label{ex:npast}
Recall the program from the introduction:
\begin{align*}
C \boldsymbol{\colon} \;\; 
& \linen{1}  \COMPOSE{\COMPOSE{\ASSIGN{x}{1}}{\ASSIGN{b}{1}}}{}\\
& \linen{2} \COMPOSE{\WHILEDO{b = 1}{\COMPOSE{\APPASSIGN{b}{\nicefrac 1 2 \langle 0 \rangle + \nicefrac 1 2 \langle 1 \rangle}}{\ASSIGN{x}{2x}}}}{}\\
& \linen{3} \WHILEDO{x > 0}{\ASSIGN{x}{x-1}}
\end{align*}
Let $C_i$ denote the $i$-th line of $C$.  We show that $\eet{C}{\ctert{0}}
\succeq \infty$.\footnote{Note that while this program terminates with
  probability one, the expected run--time to achieve termination is infinite.} Since
\[
\eet{C}{\ctert{0}} ~=~ \eet{C_1}{\eet{C_2}{\eet{C_3}{\ctert{0}}}}
\]
we start by showing that 
\[
\eet{C_3}{\ctert{0}} ~\succeq~ \ctert{1} + \eval{x > 0} \cdot 2x
\]
using lower $\omega$--invariant $J_n = \ctert{1} + \eval{n > x > 0}
\cdot 2x + \eval{x \geq n} \cdot (2n-1)$. 
We omit here the details of verifying that $J_n$ is a lower $\omega$--invariant. 
Next we show that
\begin{samepage}
\begin{align*}
	\eet{C_2}{\ctert{1} + \eval{x > 0} \cdot 2x} ~\succeq~ 	\ctert{1}	&{}+ \eval{b \neq 1} \cdot \bigl( \ctert{1} + \eval{x > 0} \cdot 2x \bigr)\\
	 												&{}+  \eval{b  = 1} \cdot \big(\ctert{7} + \eval{x > 0} \cdot \infinity\big)
\end{align*}
\end{samepage}
by means of the lower $\omega$--invariant
\[
I_n ~=~ \ctert{1} + \eval{b \neq 1} \cdot \bigl( \ctert{1}
+ \eval{x > 0} \cdot 2x \bigr) + \eval{b = 1} \cdot \Bigl( \ctert{7} -
\ctert{\tfrac{5}{2^n}} + n \cdot \eval{x > 0} \cdot 2x \Bigr)~.
\]
Let $F$ be the characteristic functional of loop $C_2$ with respect to $\ctert{1} + \eval{x > 0} \cdot 2x$.
The calculations to establish that $I_n$ is a lower $\omega$--invariant now go as follows:
\begin{samepage}
\begin{flalign*}
F(\ctert{0})
&=~\ctert{1} + \ExpToFun{b \neq 1} \cdot \bigl( \ctert{1} + \ExpToFun{x > 0} \cdot 2x \bigr)  \\
&\qquad\:  + \ExpToFun{b = 1} \cdot  \Bigl( \ctert{1} + \tfrac{1}{2} \cdot
(\ctert{1} + \ctert{0}\subst{x,b}{2x,0}) + \tfrac{1}{2} \cdot (\ctert{1} +
\ctert{0}\subst{x,b}{2x,1}) \Bigr)\\
&=~\ctert{1} + \ExpToFun{b \neq 1} \cdot \bigl( \ctert{1} + \ExpToFun{x > 0} \cdot 2x \bigr)  + \ExpToFun{b = 1} \cdot  \bigl( \ctert{1} + \tfrac{1}{2} \cdot
\ctert{1} + \tfrac{1}{2} \cdot \ctert{1} \bigr)\\
&=~\ctert{1} + \ExpToFun{b \neq 1} \cdot \bigl( \ctert{1} + \ExpToFun{x > 0}
\cdot 2x \bigr)  + \ExpToFun{b = 1} \cdot  \ctert{2} ~=~I_0 ~\succeq~ I_0\\[2ex]
F(I_n)
&=~\ctert{1} + \ExpToFun{b \neq 1} \cdot \bigl( \ctert{1} + \ExpToFun{x > 0} \cdot 2x \bigr)  \\
&\;\;  + \ExpToFun{b = 1} \!\cdot\!  \Bigl( \ctert{1} + \tfrac{1}{2} \cdot
(\ctert{1} + I_n\subst{x,b}{2x,0}) + \tfrac{1}{2} \cdot (\ctert{1} +
I_n\subst{x,b}{2x,1}) \Bigr)\\
&=~\ctert{1} + \ExpToFun{b \neq 1} \cdot \bigl( \ctert{1} + \ExpToFun{x > 0} \cdot 2x \bigr)  \\
&\;\;  + \ExpToFun{b = 1} \! \cdot\!  \Bigl( \ctert{1} + \tfrac{1}{2} \cdot
(\ctert{3} + \eval{2x > 0} \cdot 4x) + \tfrac{1}{2} \!\cdot\!  \Bigl( \ctert{9} -
\ctert{\tfrac{5}{2^n}} + n \cdot \eval{2x > 0} \cdot 4x \Bigr) \Bigr)\\
&=~\ctert{1} + \ExpToFun{b \neq 1} \cdot \bigl( \ctert{1} + \ExpToFun{x > 0} \cdot 2x \bigr)  \\
&\;\;  + \ExpToFun{b = 1} \!\cdot\!  \Bigl(\ctert{7} -
\ctert{\tfrac{5}{2^{n{+}1}}} + (n{+}1) \cdot \eval{x > 0} \cdot 2x \Bigr)\\
&=~I_{n+1} ~\succeq~ I_{n+1}
\end{flalign*}
\end{samepage}
Now we can complete the run--time analysis of program $C$:
\begin{flalign*}
\MoveEqLeft[3] \eet{C}{\ctert{0}} \\
=~& \eet{C_1}{\eet{C_2}{\eet{C_3}{\ctert{0}}}}\\
\succeq~& \eet{C_1}{\ctert{1} + \eval{b \neq 1} \cdot \bigl( \ctert{1} + \eval{x > 0} \cdot 2x \bigr) +  \eval{b  = 1} \cdot \big( \ctert{7} + \eval{x > 0} \cdot \boldsymbol{\infty}\big)}\\
=~& \eetsymbol[\ASSIGN{x}{1}]\Big(\eetsymbol[\ASSIGN{b}{1}]\Big(\ctert{1} + \eval{b \neq 1} \cdot \bigl( \ctert{1} + \eval{x > 0} \cdot 2x \bigr)\\[-1ex]
	&\qquad\qquad\qquad\qquad\quad\quad\:\: +  \eval{b  = 1} \cdot \big( \ctert{7} + \eval{x > 0} \cdot \boldsymbol{\infty}\big)\Big)\Big)\\
=~& \eet{ \ASSIGN{x}{1}}{\ctert{8} + \eval{x > 0} \cdot \boldsymbol{\infty}} ~=~\ctert 8 + \infinity ~=~ \boldsymbol{\infty}
\end{flalign*}
Overall, we obtain that the expected run--time of the program $C$ is infinite even though it terminates with probability one.
Notice furthermore that sub--programs $\WHILEDO{b = 1}{\COMPOSE{\APPASSIGN{b}{\nicefrac 1 2 \langle 0 \rangle + \nicefrac 1 2 \langle 1 \rangle}}{\ASSIGN{x}{2x}}}$ and $\WHILEDO{x > 0}{\ASSIGN{x}{x-1}}$ have expected run--time $\ctert{1} + \eval{b}\cdot \ctert{4}$ and $\ctert{1} + \eval{x > 0} \cdot 2 x$, respectively, i.e.\ both have a finite expected run--time.

\paragraph{Invariant synthesis.} In order to synthesize the $\omega$--invariant
$I_n$ of loop $C_2$ we propose the template $I_n =  \ctert{1} + \ExpToFun{b
  \neq 1} \cdot \bigl( \ctert{1} + \ExpToFun{x > 0} \cdot 2x \bigr)  +
\ExpToFun{b = 1} \cdot \bigl( a_n + b_n \cdot \ExpToFun{x > 0} \cdot 2x
\bigr)$ and from the definition of lower $\omega$--invariants we obtain $a_0 \leq
2$, $a_{n+1} \leq \nicefrac{7}{2} + \nicefrac{1}{2} \cdot a_n$ and $b_0 \leq 0$, $b_{n+1} \leq 1 + b_n$. These recurrences admit  solutions $a_n=7 -
\nicefrac{5}{2^n}$ and $b_n=n$. 
\hfill$\triangle$
\end{example}
As the proof rule based on upper invariants, the proof rules based on
$\omega$-invariants are also complete: Given loop $\WHILEDO{\pguard}{C}$ and
run--time $f$, it is enough to consider the $\omega$-invariant $I_n =
F_f^{n+1}$, where $F_f^{n}$ is defined as in the proof of
\autoref{thm:omega-inv} to yield the exact run--time
$\eet{\WHILEDO{\pguard}{C}}{f}$ from an application of \autoref{thm:omega-inv}.
We formally capture this result by means of the following theorem:
\begin{theorem}
\label{thm:w-inv-compl}
  Let $f \in \E$, $C \in \appProgs$ and $\pguard \in \DExprs$. Then there exists
  a (both lower and upper) $\omega$--invariant $I_n$ of $\WHILEDO{\pguard}{C}$
  with respect to $f$ such that ${\eet{\WHILEDO{\pguard}{C}}{f}= \lim_{n \To
      \infty} I_n}$.
\end{theorem}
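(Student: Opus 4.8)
The plan is to use the obvious candidate already foreshadowed in the discussion preceding the theorem, namely the sequence of Kleene approximants $I_n = F_f^{n+1}$, where (exactly as in the proof of \autoref{thm:omega-inv}) $F_f^0 = \ctert{0}$, $F_f^{n+1} = F_f(F_f^n)$, and $F_f = F_f^{\langle \pguard, C \rangle}$ is the characteristic functional of the loop with respect to $f$. I would show that this single sequence is simultaneously a lower and an upper $\omega$--invariant whose pointwise limit is exactly $\eet{\WHILEDO{\pguard}{C}}{f}$, which delivers both the existence of the limit required by the hypotheses and the claimed equality.

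First I would verify the two defining conditions of \autoref{def:omega-invariant} for $I_n = F_f^{n+1}$. For the base condition note $I_0 = F_f^1 = F_f(\ctert{0})$, so $F_f(\ctert{0}) = I_0$ and hence both $F_f(\ctert{0}) \succeq I_0$ and $F_f(\ctert{0}) \preceq I_0$ hold with equality. For the step condition, by definition of the iterates $F_f(I_n) = F_f(F_f^{n+1}) = F_f^{n+2} = I_{n+1}$, so again $F_f(I_n) \succeq I_{n+1}$ and $F_f(I_n) \preceq I_{n+1}$ hold with equality. Therefore $I_n$ is at once a lower and an upper $\omega$--invariant of $\WHILEDO{\pguard}{C}$ with respect to $f$.

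It remains to identify the limit. By \autoref{thm:eet-cont} the characteristic functional $F_f$ is continuous, so the Kleene Fixed Point Theorem gives $\eet{\WHILEDO{\pguard}{C}}{f} = \lfp F_f = \sup_n F_f^n$. Since $F_f^0 = \ctert{0}$ is the bottom element of $(\E,\preceq)$ and $F_f$ is monotone, $F_f^0 \preceq F_f^1 \preceq \cdots$ forms an $\omega$--chain, so by the Monotone Sequence Theorem applied pointwise on $\Rposinf$ (precisely as in the proof of \autoref{thm:omega-inv}) the supremum coincides with the pointwise limit, $\sup_n F_f^n = \lim_{n \To \infty} F_f^n$. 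A shift of index leaves the limit unchanged, whence $\lim_{n \To \infty} I_n = \lim_{n \To \infty} F_f^{n+1} = \lim_{n \To \infty} F_f^n = \eet{\WHILEDO{\pguard}{C}}{f}$; in particular the limit exists (possibly taking the value $\infty$ pointwise).

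There is essentially no hard step here: the argument is a near--verbatim reuse of the machinery already assembled for \autoref{thm:omega-inv}, and the only point demanding a little care is the bookkeeping of the index shift between the invariant $I_n = F_f^{n+1}$ and the Kleene chain $F_f^n$, together with invoking the Monotone Sequence Theorem pointwise so that the supremum supplied by Kleene can be read as the limit demanded by the statement.
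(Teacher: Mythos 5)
Your proof is correct and follows essentially the same route as the paper: the paper also takes $I_n = F_f^{n+1}$ (with $F_f^0 = \ctert{0}$, $F_f^{n+1} = F_f(F_f^n)$ as in the proof of \autoref{thm:omega-inv}), notes that the defining conditions hold with equality so it is simultaneously a lower and upper $\omega$--invariant, and identifies $\lim_{n\To\infty} I_n$ with $\lfp F_f^{\langle \pguard, C\rangle} = \eet{\WHILEDO{\pguard}{C}}{f}$ via the Kleene Fixed Point Theorem and the pointwise Monotone Sequence Theorem. Your explicit handling of the index shift is a minor bookkeeping detail the paper leaves implicit.
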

\autoref{thm:w-inv-compl} together with \autoref{thm:upp-inv-compl} shows that
the set of invariant--based proof rules presented in this section are
complete. Next we study how to refine invariants to make the bounds that these
proof rules yield more precise.

\subsection{Refinement of Bounds}
\label{sec:ref-bounds}

An important property of both upper and lower bounds of the run--time of loops is that they can be easily refined by repeated application of the characteristic functional.
\begin{theorem}[Refinement of bounds]
\label{thm:upp-inv-ref}
Let $f \in \E$, $C \in \appProgs$ and $\pguard \in \DExprs$. If $I$ is an upper
(resp.\ lower) bound of $\eet{\WHILEDO{\pguard}{C}}{f}$ and $F_f^{\langle
  \pguard, C \rangle}(I) \preceq I$ (resp.\ $F_f^{\langle \pguard, C \rangle}(I)
\succeq I$), then $F_f^{\langle \pguard, C \rangle}(I)$ is also an upper (resp.\
lower) bound, at least as precise as $I$.
\end{theorem}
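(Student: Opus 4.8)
The plan is to reduce both cases to two facts about the characteristic functional $F_f^{\langle \pguard, C \rangle}$: that it is monotone, and that $\eet{\WHILEDO{\pguard}{C}}{f}$ is one of its fixed points. Writing $g = \eet{\WHILEDO{\pguard}{C}}{f} = \lfp F_f^{\langle \pguard, C \rangle}$, I recall that $g$ satisfies $F_f^{\langle \pguard, C \rangle}(g) = g$ by the Kleene Fixed Point Theorem (as noted after \autoref{thm:eet-cont}), and that $F_f^{\langle \pguard, C \rangle}$ is monotone because $\eetsymbol[C]$ is monotone (\autoref{thm:eet-prop}) and because $F_f^{\langle \pguard, C \rangle}$ is built from $\eetsymbol[C]$ only by multiplication with the non--negative factor $\probof{\pguard}{\true}$ and addition of the fixed run--time $\ctert{1} + \probof{\pguard}{\false} \cdot f$, operations that preserve the point--wise order $\preceq$.

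For the upper bound case I would argue as follows. By hypothesis $I$ is an upper bound, i.e.\ $g \preceq I$. Applying the monotone map $F_f^{\langle \pguard, C \rangle}$ to both sides and using that $g$ is a fixed point gives
\[
g ~=~ F_f^{\langle \pguard, C \rangle}(g) ~\preceq~ F_f^{\langle \pguard, C \rangle}(I)~,
\]
so $F_f^{\langle \pguard, C \rangle}(I)$ is again an upper bound of the loop's run--time. That it is at least as precise as $I$ is then immediate: the additional hypothesis $F_f^{\langle \pguard, C \rangle}(I) \preceq I$ is exactly the statement that the refined bound lies below $I$, hence is tighter. The lower bound case is fully dual: from $I \preceq g$ and monotonicity I obtain $F_f^{\langle \pguard, C \rangle}(I) \preceq F_f^{\langle \pguard, C \rangle}(g) = g$, so $F_f^{\langle \pguard, C \rangle}(I)$ is a lower bound, and the hypothesis $F_f^{\langle \pguard, C \rangle}(I) \succeq I$ says precisely that it is at least as precise as $I$.

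I do not expect a genuine obstacle here; the only point requiring a little care is that the ambient order is over $\Rposinf$, so I should invoke monotonicity and the fixed--point identity in a form valid in the presence of the value $\infty$ --- but both are already established in that setting (\autoref{thm:eet-cont} and the surrounding discussion on well--definedness), so no extra work is needed. It is worth noting that the claim ``$F_f^{\langle \pguard, C \rangle}(I)$ is a bound'' uses only that $I$ is a bound together with monotonicity and the fixed--point property of $g$; the extra inequality in the hypothesis is used solely to certify the ``at least as precise'' clause, and iterating the argument shows that repeated application of $F_f^{\langle \pguard, C \rangle}$ yields a monotone sequence of ever--tighter bounds converging towards $g$.
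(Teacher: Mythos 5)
Your proposal is correct and follows essentially the same route as the paper's proof: both use the fixed--point identity $\lfp F_f^{\langle \pguard, C \rangle} = F_f^{\langle \pguard, C \rangle}(\lfp F_f^{\langle \pguard, C \rangle})$ together with monotonicity of $F_f^{\langle \pguard, C \rangle}$ (inherited from \autoref{thm:eet-prop}) to show that $F_f^{\langle \pguard, C \rangle}(I)$ remains a bound, and then invoke the hypothesis $F_f^{\langle \pguard, C \rangle}(I) \preceq I$ (resp.\ $\succeq I$) for the precision claim, with the lower--bound case handled dually. The only cosmetic difference is that you separate the two inequalities where the paper chains them in a single display.
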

\begin{proof}
If $I$ is an upper bound of $\eet{\WHILEDO{\pguard}{C}}{f}$ we have $\lfp F_f^{\langle \pguard, C \rangle} \preceq I$. 
	Then from the monotonicity of $F_f^{\langle \pguard, C \rangle}$ (recall that \eetsymbol\xspace is monotonic by \autoref{thm:eet-prop}) and from $F_f^{\langle \pguard, C \rangle}(I) \preceq I$ we obtain
	\[
		\eet{\WHILEDO{\pguard}{C}}{f} ~=~ \lfp F_f^{\langle \pguard, C \rangle} ~=~ F_f^{\langle \pguard, C \rangle}(\lfp F_f^{\langle \pguard, C \rangle}) ~\preceq~ F_f^{\langle \pguard, C \rangle}(I) ~\preceq~I~,
	\]
	which means that $F_f^{\langle \pguard, C \rangle}(I)$ is also an upper
        bound, possibly tighter than $I$.  The case for lower bounds is
        completely analogous.\qed
\end{proof}
Notice that if $I$ is an upper invariant of $\WHILEDO{\pguard}{C}$ then $I$ fulfills all necessary conditions of \autoref{thm:upp-inv-ref}.
In practice, \autoref{thm:upp-inv-ref} provides a means of iteratively improving the precision of bounds yielded by Theorems \ref{thm:upper-bounds} and \ref{thm:omega-inv}, as for instance for upper bounds we have
\begin{align*}
	\eet{\WHILEDO{\pguard}{C}}{f} ~\preceq~ \cdots ~\preceq~ F_f^{\langle \pguard, C \rangle}\left(F_f^{\langle \pguard, C \rangle}(I)\right) ~ \preceq~ F_f^{\langle \pguard, C \rangle}(I) ~\preceq I~.
\end{align*}
If $I_n$ is an upper (resp.\ lower) $\omega$-invariant, applying
\autoref{thm:upp-inv-ref} requires checking that $F_f^{\langle \pguard, C
  \rangle}(L) \preceq L$ (resp.\ $F_f^{\langle \pguard, C \rangle}(L) \succeq
L$), where $L=\lim_{n \To \infty} I_n$. This proof obligation can be discharged
by showing that $I_n$ forms an $\omega$-chain, \ie that $I_n \preceq I_{n+1}$
for all $n \in \Nats$.

\section{Run--Time of Deterministic Programs}
\label{sec:eet-vs-nielson}
The notion of expected run--times as defined by $\eetsymbol$ is clearly
applicable to deterministic programs, \ie programs containing neither probabilistic guards nor probabilistic assignments nor non--deterministic choice operators.
We show that the \eetsymbol{} of deterministic programs coincides with the tightest upper bound
on the run--time that can be derived in an extension of Hoare logic~\cite{DBLP:journals/cacm/Hoare69} due to Nielson~\cite{Nielson:SCP:87,Nielson:UTCS:07}.

In order to compare our notion of \eetsymbol{} to the aforementioned calculus we restrict our programming language to the language 
$\DetStmt$ of deterministic programs considered in \cite{Nielson:UTCS:07} which is given by the following grammar:
\begin{align*}
	C ~~::=~~ 	& \SKIP ~\;|\;~ \ASSIGN{x}{E} ~\;|\;~ \COMPOSE{C}{C}
 				 ~\;|\;~ \ITE{\pguard}{C}{C} ~\;|\;~ \WHILEDO{\pguard}{C}~,
\end{align*}
where $E$ is a \emph{deterministic} expression and $\pguard$ is a \emph{deterministic} guard, i.e. 
$\eval{E}(\ps)$ and $\eval{\pguard}(\ps)$ are Dirac distributions for each $\ps \in \ProgramStates$.
For simplicity, we slightly abuse notation and
write $\eval{E}(\ps)$ to denote the unique value $v \in \Vals$ such that $\probofInstance{E}{v}{\ps} = 1$.

%
%
%
%
For deterministic programs, the MDP $\rmdpStmt{\ctert{0}}{\ps}{\stmt}$ of a program $\stmt \in \DetStmt$ and a program state $\ps \in \ProgramStates$
is a labeled transition system. 
In particular, if a terminal state of the form $\mdpState{\Terminated}{\ps'}$ is reachable from the initial state of $\rmdpStmt{\ctert{0}}{\ps}{\stmt}$,
it is unique. 
Hence we may capture the effect of a deterministic program by a partial function
$\opSem{\:\cdot\:}{\:\cdot\:}\colon \DetStmt \times \ProgramStates \pto \ProgramStates$ 
mapping each $\stmt \in \DetStmt$ and  $\ps \in \ProgramStates$ to a program state
$\ps' \in \ProgramStates$ if and only if there exists a state $\mdpState{\Terminated}{\ps'}$ that is reachable in the MDP $\rmdpStmt{\ctert{0}}{\ps}{\stmt}$ from the initial state 
$\mdpState{\stmt}{\ps}$. Otherwise, $\opSem{\stmt}{\ps}$ is undefined.
Nielson~\cite{Nielson:SCP:87,Nielson:UTCS:07} developed an extension of the
classical Hoare calculus for total correctness of programs in order to
establish additionally upper bounds on the run--time of programs.  Formally, a
\emph{correctness} \emph{property} is of the form
\[ \htriple{P}{\stmt}{E}{Q}~, \]
where $\stmt \in \DetStmt$, $E$ is a deterministic expression over the program variables, and $P,Q$ are (first--order) assertions.
Intuitively, $\htriple{P}{\stmt}{E}{Q}$ is valid, written $\nvalid \htriple{P}{\stmt}{E}{Q}$, if and only if there exists a natural number $k$
such that for each state $\ps$ satisfying the precondition $P$, the program $\stmt$ terminates after at most $k \cdot \eval{E}(\ps)$ steps in a state satisfying
postcondition $Q$.
In particular, it should be noted that $E$ is evaluated in the \emph{initial} state $\ps$.

Figure~\ref{fig:nielson-inference-rules} is taken verbatim from \cite{Nielson:UTCS:07} except for minor changes to match our notation.
Most of the inference rules are self--explanatory extensions of the standard Hoare calculus for total correctness
of deterministic programs~\cite{DBLP:journals/cacm/Hoare69} which is obtained by omitting the gray parts.

The run--time of $\SKIP$ and $\ASSIGN{x}{E}$ is one time unit.
Since guard evaluations are assumed to consume no time in this calculus, 
any upper bound on the run--time of both branches of a conditional is also an upper bound on the run--time of the conditional itself (cf. rule $[\textrm{if}]$).
The rule of consequence allows to increase an already proven upper bound on the run--time by an arbitrary constant factor.
Furthermore, the run--time of two sequentially composed programs $\stmt_1$ and $\stmt_2$ is, intuitively, the sum of their run--times $E_1$ and $E_2$.
However, run--times are expressions which are evaluated in the initial state.
Thus, the run--time of $\stmt_2$ has to be expressed in the initial state of $\stmt_1;\stmt_2$. 
Technically, this is achieved by adding a fresh (and hence universally quantified) variable $u$ that is an upper bound on $E_2$ 
and at the same time is equal to a new expression $E_2'$ in the precondition of $\stmt_1;\stmt_2$. 
Then, the run--time of $\stmt_1;\stmt_2$ is given by the sum $E_1 + E_2'$.

The same principle is applied to each loop iteration. 
Here, the run--time of the loop body is given by $E_1$ and the run--time of the remaining $z$
loop iterations, $E'$, is expressed in the initial state by adding a fresh variable $u$.
Then, any upper bound of $E \geq E_1 + E'$ is an upper bound on the run--time of $z$ loop iterations.

We denote provability of a correctness property $\htriple{P}{\stmt}{E}{Q}$ and a total correctness property $\hrtriple{P}{\stmt}{Q}$ in the 
standard Hoare calculus by $\nprove \htriple{P}{\stmt}{E}{Q}$ and $\hprove \hrtriple{P}{\stmt}{Q}$, respectively.
 \begin{figure}[tb]
  \vspace{-1.5ex}
  \begin{align*}
   &
    \infrule
      {}
      {\htriple{P}{\SKIP}{{\color{gray}1}}{P}}
    ~\lrule{\SKIP}
    \quad 
    \infrule
      {}
      {\htriple{Q\subst{x}{\eval{E}}}{\ASSIGN{x}{E}}{{\color{gray}1}}{Q}}
    ~\lrule{\textrm{\textrm{Assgn}}}
    \\[1.5ex]
    &
    \infrule
      {
        \htriple{P \wedge {\color{gray} E_2' = u}}{\stmt_1}{{\color{gray}E_1}}{Q \wedge {\color{gray} E_2 \leq u}}
        \quad 
        \htriple{Q}{\stmt_2}{{\color{gray}E_2}}{R}
      }
      {\htriple{P}{\stmt_1;\stmt_2}{{\color{gray}E_1+E_2'}}{R}}
    ~\lrule{\textrm{Seq}}
    \\
    & \text{\color{gray} where $u$ is a fresh logical variable}
    \\[1.5ex]
    &
    \infrule
      {
        \htriple{P \wedge \pguard}{\stmt_1}{{\color{gray}E}}{Q}
        \quad 
        \htriple{P \wedge \neg \pguard}{\stmt_2}{{\color{gray}E}}{Q}
      }
      {\htriple{P}{\ITE{\pguard}{\stmt_1}{\stmt_2}}{{\color{gray}E}}{Q}}
    ~\lrule{\textrm{if}}
    \\[1.5ex]
    &
    \infrule
    {\htriple{P(z+1) \wedge {\color{gray} E'=u}}{\stmt}{{\color{gray}E_1}}{P(z) {\color{gray} ~\wedge~ E \leq u}}}
    {\htriple{\exists z \mydot P(z)}{\WHILEDO{\pguard}{\stmt}}{{\color{gray}E}}{P(0)}}
    ~\lrule{\textrm{\textrm{\WHILE}}}
    \\
    & 
    \text{where } z \in \Nats,~ P(z+1) \Rightarrow \pguard {~\wedge~ \color{gray} E \geq E_1 + E'},~
    P(0) \Rightarrow \neg \pguard {~\wedge~ \color{gray} E \geq 1}
    \\[-0.5ex]
    & \text{\color{gray} and $u$ is a fresh logical variable}
    \\[1.5ex]
    &
    \infrule
    {\htriple{P'}{\stmt}{{\color{gray}E'}}{Q'}}
    {\htriple{P}{\stmt}{{\color{gray}E}}{Q}}
    ~\lrule{\textrm{cons}}
    \\
    & \text{where } P \Rightarrow P' {\color{gray} ~\wedge~ E' \leq k \cdot E \text{ for some $k \in \Nats$}} \text{ and } Q' \Rightarrow Q
  \end{align*}
\hrule
  \caption{Inference system for order of magnitude of run--time of deterministic programs according to Nielson~\cite{Nielson:SCP:87}}
  \label{fig:nielson-inference-rules}
 \end{figure}
\begin{theorem}[Soundness of \boldeetsymbol{} for deterministic programs]\label{thm:eet-axiomatic-soundness}
 For all $\stmt \in \DetStmt$ and assertions $P,Q$, we have
 \[
  \hprove \hrtriple{P}{\stmt}{Q} \text{ implies } \nprove \htriple{P}{\stmt}{\eet{\stmt}{\ctert{0}}}{Q}.
 \]
\end{theorem}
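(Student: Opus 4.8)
The plan is to proceed by induction on the structure of $\stmt$, turning the given total--correctness derivation into a Nielson derivation whose run--time annotation is precisely $\eet{\stmt}{\ctert 0}$. First I would discard the syntactic form of the hypothesis in favour of its semantic content: by soundness of the total--correctness Hoare calculus, $\hprove \hrtriple{P}{\stmt}{Q}$ guarantees that from every state $\ps \models P$ the program terminates, \ie $\opSem{\stmt}{\ps}$ is defined and satisfies $Q$. Two observations then drive the whole argument. First, by \autoref{thm:eet-soundness} read operationally over the (deterministic) system $\rmdpStmt{\ctert 0}{\ps}{\stmt}$, the value $\eet{\stmt}{\ctert 0}(\ps)$ counts the assignments, $\SKIP$s \emph{and} guard evaluations along the unique run; since Nielson charges nothing for guard evaluations, the true Nielson run--time never exceeds $\eet{\stmt}{\ctert 0}(\ps)$, so the target triple is valid with constant $k=1$ and the spare guard--evaluation units will supply exactly the slack demanded by the $\geq$--side conditions of Nielson's rules. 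Second, I would isolate the decomposition identity
\[
  \eet{\stmt}{g}(\ps) ~=~ \eet{\stmt}{\ctert 0}(\ps) + g\bigl(\opSem{\stmt}{\ps}\bigr)~,
\]
valid for every deterministic (hence $\HALT$--free) program $\stmt$ terminating on $\ps$; it follows from the operational reading of \autoref{thm:eet-soundness}, as along the unique run the internal rewards are collected independently of the continuation and only the terminal state contributes $g(\opSem{\stmt}{\ps})$. This identity is what converts the continuation--passing run--time of $\eetsymbol$ into the initial--state form used by Nielson.

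The base cases are immediate: $\eet{\SKIP}{\ctert 0} = \ctert 1 = \eet{\ASSIGN{x}{E}}{\ctert 0}$ matches the run--time $1$ of the axioms $\lrule{\SKIP}$ and $\lrule{\textrm{Assgn}}$, after which $\lrule{\textrm{cons}}$ adjusts the postcondition to $Q$ using $P \Rightarrow Q$ for $\SKIP$ and $P \Rightarrow Q\subst{x}{\eval{E}}$ for the assignment. For the conditional I would apply the induction hypothesis to both branches, obtaining $\nprove \htriple{P\wedge\pguard}{C_1}{\eet{C_1}{\ctert 0}}{Q}$ and $\nprove \htriple{P\wedge\neg\pguard}{C_2}{\eet{C_2}{\ctert 0}}{Q}$, and then raise each branch's bound by $\lrule{\textrm{cons}}$ (with $k=1$) to the single expression $\eet{\ITE{\pguard}{C_1}{C_2}}{\ctert 0}$, which equals $\ctert 1 + \eet{C_1}{\ctert 0}$ on $\pguard$--states and $\ctert 1 + \eet{C_2}{\ctert 0}$ on $\neg\pguard$--states; this raise is legal because on the respective branch the extra guard unit only enlarges the bound. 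Rule $\lrule{\textrm{if}}$ then yields the conditional with this common annotation.

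For sequential composition I would instantiate $\lrule{\textrm{Seq}}$ with $E_1 = \eet{C_1}{\ctert 0}$, $E_2 = \eet{C_2}{\ctert 0}$, and $E_2'$ the expression with $E_2'(\ps) = \eet{C_2}{\ctert 0}(\opSem{C_1}{\ps})$, so that $E_1 + E_2' = \eet{\COMPOSE{C_1}{C_2}}{\ctert 0}$ by the decomposition identity; the fresh variable $u$ records $E_2'(\ps)$ in the initial state, and since $C_1$ leads to $\opSem{C_1}{\ps}$ the postcondition conjunct $E_2 \leq u$ holds with equality, while the second premise is exactly the induction hypothesis for $C_2$. The loop is the crux. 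Here I would take as Nielson's parametrized invariant $P(z) \triangleq P \wedge (\text{exactly } z \text{ further iterations of the loop remain})$, which is well--defined precisely because termination from $P$--states is guaranteed, and set $E = \eet{\WHILEDO{\pguard}{C'}}{\ctert 0}$, $E_1 = \eet{C'}{\ctert 0}$, and $E'(\ps) = E(\opSem{C'}{\ps})$. Unfolding the least fixed point once and applying the decomposition identity gives, at each $P(z+1)$--state $\ps$,
\[
  E(\ps) ~=~ 1 + E_1(\ps) + E\bigl(\opSem{C'}{\ps}\bigr) ~=~ 1 + E_1(\ps) + E'(\ps)~,
\]
so the side condition $E \geq E_1 + E'$ holds with the surplus $1$ being the guard unit, while $P(0) \Rightarrow \neg\pguard$ forces $E = \ctert 1$, whence $E \geq 1$; the body premise is supplied by the induction hypothesis for $C'$ with precondition $P(z+1)$ and postcondition $P(z)$, augmented by the $u$--mechanism exactly as in the sequential case.

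The hard part, which I expect to absorb most of the technical effort, is not the arithmetic above but the two bookkeeping obligations hidden inside the $\lrule{\textrm{Seq}}$ and $\lrule{\textrm{\WHILE}}$ cases: (i) that the pulled--back expressions $E_2'$, $E'$ and the predicate ``exactly $z$ iterations remain'' are genuinely expressible in Nielson's assertion/expression language, and (ii) that the $u$--augmented premises are \emph{derivable} rather than merely valid, since Nielson's calculus offers no frame rule for grafting the $u$--bookkeeping onto an existing body derivation. Both are discharged by appealing to the expressiveness of the assertion language and to the relative completeness of Nielson's calculus, under which the semantic validity established at the start upgrades to provability; this is also where the reduction of the hypothesis to termination, rather than to a fixed derivation tree, pays off.
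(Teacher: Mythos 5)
Your inductive skeleton and the run--time arithmetic (the decomposition identity $\eet{\stmt}{g}(\ps) = \eet{\stmt}{\ctert{0}}(\ps) + g(\opSem{\stmt}{\ps})$, the instantiations of $E_1$, $E_2'$, $E$, $E'$, and the side--condition checks) match the paper's proof, which packages exactly this identity as Lemmas~\ref{thm:det-eet-sum} and~\ref{thm:det-eet-loop}. However, your loop case contains a concrete error: $P(z) \triangleq P \wedge (\text{exactly } z \text{ further iterations remain})$ is \emph{not} a loop invariant, because $P$ is a property of the initial state and is in general destroyed by the first execution of the body. Take $P \equiv (x = 10)$, the loop $\WHILEDO{x>0}{\ASSIGN{x}{x-1}}$, and $Q \equiv (x=0)$: then $P(z)$ is satisfiable only for $z = 10$, so the body premise demanded by Nielson's \lrule{\WHILE} rule at $z = 9$, namely $\htriple{P(10) \wedge E'=u}{\ASSIGN{x}{x-1}}{E_1}{P(9) \wedge E \leq u}$, has a postcondition that is false in the reached state $x = 9$; the premise is invalid, hence underivable, and the rule cannot be applied with your $P(z)$. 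The conjunct $P$ must be dropped (e.g.\ take ``exactly $z$ iterations remain and the run ends in a state satisfying $Q$''), after which your side--condition computations do go through.

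The deeper divergence from the paper is how the assertions fed to the induction hypothesis are obtained, and here your proof leans on machinery the paper does not have. The paper never reduces the hypothesis to its semantic content: it \emph{inverts the given derivation}, extracting the intermediate assertion $R$ (with $\hprove \hrtriple{P}{\stmt_1}{R}$ and $\hprove \hrtriple{R}{\stmt_2}{Q}$) in the sequencing case and the parametrized invariant $R(z)$ (with $\hprove \hrtriple{R(z+1)}{\stmt_1}{R(z)}$) in the loop case, so the induction hypothesis applies directly to sub--derivations --- no expressiveness assumption and no completeness theorem are needed. Your route instead requires relative completeness of Nielson's calculus to upgrade semantically constructed premises to derivable ones. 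That is a sledgehammer which, if available, would make the entire induction superfluous: soundness of the Hoare calculus together with \autoref{thm:eet-soundness} already yields the validity $\nvalid \htriple{P}{\stmt}{\eet{\stmt}{\ctert{0}}}{Q}$, and completeness would then give $\nprove \htriple{P}{\stmt}{\eet{\stmt}{\ctert{0}}}{Q}$ in one step. Since the paper neither proves nor cites such a completeness result, your proof as written rests on an unavailable assumption; replacing the semantic reconstruction by derivation inversion both repairs this and eliminates the expressibility worry about the predicate ``exactly $z$ iterations remain''.
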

\begin{proof}
 By induction on the structure of the program.
 See Appendix~\ref{sec:proof-thm-eet-axiomatic-soundness} for a formal proof.
\end{proof}
Intuitively, this theorem means that for every terminating deterministic program, the \eetsymbol{} is an upper bound on the 
run--time, i.e. \eetsymbol{} is sound with respect to the inference system shown in \autoref{fig:nielson-inference-rules}.
The next theorem states that no tighter bound can be derived in this calculus.
We cannot get a more precise relationship, since we assume guard evaluations to consume time.
\begin{theorem}[Completeness of \boldeetsymbol{} w.r.t.~Nielson] \label{thm:det-eet-completeness}
  For all $\stmt \in \DetStmt$, assertions $P,Q$ and deterministic expressions $E$,
  $\nprove \htriple{P}{\stmt}{E}{Q}$ implies that there exists a natural number $k$ such that for all $\ps \in \ProgramStates$ satisfying $P$, we have 
  \[\eet{\stmt}{\ctert{0}}(\ps) ~\leq~ k \cdot (\eval{E}(\ps))~. \]
\end{theorem}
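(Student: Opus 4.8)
The plan is to proceed by induction on the structure of the derivation of $\nprove \htriple{P}{\stmt}{E}{Q}$, i.e.\ on the last inference rule applied in Figure~\ref{fig:nielson-inference-rules}, producing in each case a natural number $k$ with $\eet{\stmt}{\ctert{0}}(\ps) \le k\cdot\eval{E}(\ps)$ for every $\ps$ satisfying $P$ (and every valuation of the logical variables free in $P$ and $E$; note that $\eet{\stmt}{\ctert{0}}$ itself never depends on such variables). Two auxiliary facts are needed throughout. First, since $\stmt$ is deterministic and $\nprove\htriple{P}{\stmt}{E}{Q}$ entails the total--correctness judgement $\hprove\hrtriple{P}{\stmt}{Q}$, such $\stmt$ terminates from every $\ps\models P$, so $\opSem{\stmt}{\ps}$ is defined and $\eet{\stmt}{\ctert{0}}(\ps)<\infty$. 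Second, I would first prove a \emph{decomposition lemma}: for every deterministic $\stmt$, every terminating $\ps$, and every continuation $g\in\E$,
\[ \eet{\stmt}{g}(\ps) ~=~ \eet{\stmt}{\ctert{0}}(\ps) + g\bigl(\opSem{\stmt}{\ps}\bigr)~. \]
This is shown by induction on the structure of $\stmt$, the loop case requiring an inner induction on the (finite) number of iterations performed from $\ps$. It is the deterministic counterpart of ``propagation of constants'' and lets me split $\eet{\COMPOSE{\stmt_1}{\stmt_2}}{\ctert{0}}$ and unfold loops cleanly.

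The base cases $[\SKIP]$ and $[\textrm{Assgn}]$ are immediate with $k=1$, as $\eet{\SKIP}{\ctert{0}}=\ctert{1}=\eet{\ASSIGN{x}{E}}{\ctert{0}}$ while the associated run--time expression is $1$. For $[\textrm{cons}]$, if the premise yields $k'$ and the side condition gives $E'\le k_0\cdot E$, then $k=k'k_0$ works. For $[\textrm{Seq}]$ I would use the decomposition lemma to write $\eet{\COMPOSE{\stmt_1}{\stmt_2}}{\ctert{0}}(\ps)=\eet{\stmt_1}{\ctert{0}}(\ps)+\eet{\stmt_2}{\ctert{0}}(\ps')$ with $\ps'=\opSem{\stmt_1}{\ps}$; instantiating the fresh variable $u$ by $\eval{E_2'}(\ps)$ makes $\ps\models P\wedge E_2'=u$, hence $\ps'\models Q\wedge E_2\le u$, so $\eval{E_2}(\ps')\le\eval{E_2'}(\ps)$, and the two hypotheses combine into $k=\max(k_1,k_2)$. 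For $[\textrm{if}]$ only the active branch $i$ contributes, so $\eet{\ITE{\pguard}{\stmt_1}{\stmt_2}}{\ctert{0}}(\ps)=\ctert{1}+\eet{\stmt_i}{\ctert{0}}(\ps)\le 1+k_i\eval{E}(\ps)$; the extra unit (the guard evaluation, for which Nielson charges nothing) is absorbed via the lower bound discussed next, which gives $1\le k_i\eval{E}(\ps)$ and hence $k=2\max(k_1,k_2)$.

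The main obstacle is the loop rule $[\WHILE]$, where I must bound the total number of guard evaluations---exactly the steps that Nielson's model ignores---by $k\cdot\eval{E}$ uniformly in the input. Here I would exploit the decisive feature of our run--time model: since guard evaluations and atomic statements all cost time, a routine structural induction shows $\eet{\stmt}{\ctert{0}}(\ps)\ge 1$ for every terminating program of the deterministic grammar. Given $\ps\models\exists z\,P(z)$, I fix $z_0\in\Nats$ with $\ps\models P(z_0)$ and argue by induction on $z_0$ that $\eet{\WHILEDO{\pguard}{\stmt}}{\ctert{0}}(\ps)\le k\cdot\eval{E}(\ps)$ for $k=2k_1$, where $k_1$ is the constant from the premise. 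For $z_0=0$ we have $P(0)\Rightarrow\neg\pguard$, so the loop performs only one (failing) guard evaluation; hence $\eet{\WHILEDO{\pguard}{\stmt}}{\ctert{0}}(\ps)=1\le\eval{E}(\ps)$ by the side condition $P(0)\Rightarrow E\ge1$. For $z_0=m+1$ we have $P(m+1)\Rightarrow\pguard$, so unfolding the loop once and applying the decomposition lemma to the body gives
\[ \eet{\WHILEDO{\pguard}{\stmt}}{\ctert{0}}(\ps) = \ctert{1} + \eet{\stmt}{\ctert{0}}(\ps) + \eet{\WHILEDO{\pguard}{\stmt}}{\ctert{0}}(\ps')~, \]
with $\ps'=\opSem{\stmt}{\ps}\models P(m)$. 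Bounding the three summands by $1$, by $k_1\eval{E_1}(\ps)$ (premise hypothesis), and by $k\cdot\eval{E}(\ps')\le k\cdot\eval{E'}(\ps)$ (inner hypothesis together with the $u$--mechanism $\eval{E}(\ps')\le\eval{E'}(\ps)$), and using the side condition $\eval{E}(\ps)\ge\eval{E_1}(\ps)+\eval{E'}(\ps)$, it remains to verify $1+k_1\eval{E_1}(\ps)\le k\,\eval{E_1}(\ps)$. This is where the lower bound pays off: the premise hypothesis gives $\eval{E_1}(\ps)\ge\tfrac1{k_1}\eet{\stmt}{\ctert{0}}(\ps)\ge\tfrac1{k_1}$, so with $k=2k_1$ we obtain $k\,\eval{E_1}(\ps)-k_1\eval{E_1}(\ps)=k_1\eval{E_1}(\ps)\ge1$, closing the induction. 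Thus the per--iteration guard cost is charged against the strictly positive body run--time guaranteed by our model, which is precisely what makes the bound uniform in the number of iterations.
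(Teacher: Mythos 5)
Your proof is correct and takes essentially the same route as the paper's own proof: an outer induction over the derivation (the paper inducts on program structure, folding the consequence rule into the atomic cases), the decomposition lemma $\eet{\COMPOSE{\stmt_1}{\stmt_2}}{\ctert{0}}(\ps) = \eet{\stmt_1}{\ctert{0}}(\ps) + \eet{\stmt_2}{\ctert{0}}(\opSem{\stmt_1}{\ps})$ for sequencing (the paper's \autoref{thm:det-eet-sum}), the same instantiation of the fresh variable $u$ to pull run--time expressions back to the initial state, and an inner induction on the loop parameter $z$ for the $\WHILE$--rule. Your explicit lower--bound lemma $\eet{\stmt}{\ctert{0}} \succeq \one$ for deterministic programs, used to charge the otherwise uncounted guard evaluations against the constant (yielding $2k_1$ and $2\max(k_1,k_2)$ where the paper uses $k'+1$ and $3\cdot k$), makes explicit a positivity assumption on the run--time expressions that the paper's corresponding steps use only implicitly, so your bookkeeping is, if anything, slightly more careful.
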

\begin{proof}
 By induction on $C$'s structure;
see Appendix~\ref{sec:proof-thm-eet-axiomatic-soundness} for a detailed proof.
 \qed
\end{proof}
\autoref{thm:eet-axiomatic-soundness} together with \autoref{thm:det-eet-completeness} shows that our notion of $\eetsymbol$
is a conservative extension of Nielson's approach for reasoning about the run--time of deterministic programs.
In particular, given a correctness proof of a deterministic program $\stmt$ in Hoare logic, it suffices to compute $\eet{C}{\zero}$ in order to obtain 
a corresponding proof in Nielson's proof system.

\section{Case Studies}
\label{sec:applications}
In this section we use our $\eetsymbol$--calculus to analyze the run--time of
three well--known randomized algorithms: a fair \emph{One--Dimensional (Symmetric)
  Random Walk}, the \emph{Coupon Collector's Problem}, and the \emph{randomized Quicksort} algorithm.

\subsection{One--Dimensional Random Walk}
Consider program 
\begin{align*}
C_\mathit{rw} \boldsymbol{\colon}\;\; &\ASSIGN{x}{10};\\
&\WHILE \,  (x > 0) \, \{\\
&\qquad\APPASSIGN{x}{\nicefrac 1 2 \cdot \langle x{-}1
          \rangle + \nicefrac 1 2 \cdot \langle x{+}1 \rangle}\\
& \}~,
\end{align*}
which models a one--dimensional walk of a particle which starts at position ${x =
10}$ and moves with equal probability to the left or to the right in each turn.
The random walk terminates when the particle reaches position $x = 0$. It can be shown
that the program terminates with probability one~\cite{Hurd:TPHOL:02}
but requires, on average, an infinite time to do so. We now apply our
$\eetsymbol$--calculus to formally derive this run--time assertion.\footnote{The invariant proposed in this technical report differs from the originally published invariant which unfortunately contained a mistake.}

The expected run--time of $C_\mathit{rw}$ is given by 
\[
\eet{C_\mathit{rw}}{\zero} ~=~\eet{\ASSIGN{x}{10}}{\eet{\WHILEDO{x>0}{C}}{\zero}}~,
\]
where $C$ stands for the probabilistic assignment in the loop body. Thus, we
need to first determine run--time $\eet{\WHILEDO{x > 0}{C}}{\zero}$. To do so,
we claim that 
\[
I_n ~=~ \one + \sum_{k=0}^{n} \,\eval{x \,{>}\, k} \cdot a_{n,k} \quad
\text{with} \quad  a_{n,k} = \frac{1}{2^{n}} \left[- \binom{n}{\lfloor\!\frac{n-k}{2}\!\rfloor} + 2 \,
\sum_{i=0}^{n-k}
2^i \binom{n-i}{\lfloor\!\frac{n-i-k}{2}\!\rfloor} \right]
\]
is a lower $\omega$--invariant of loop $\WHILEDO{x > 0}{C}$ with respect to continuation $\ctert{0}$. 
To verify this claim, we proceed in two steps: First we show that for 
positive constants $a_{n,k}$,
$I_n = \one + \sum_{k=0}^{n} \,\eval{x \,{>}\, k} \cdot a_{n,k}$ is a lower
$\omega$--invariant of the loop with respect to continuation $\ctert{0}$ whenever the $a_{n,k}$ satisfy the recurrence relation
$$
\begin{array}{r@{\:\:}c@{\:\:}l}
 a_{0,0} &=& 1~,\\[0.5ex]
 a_{n+1,0} &=& 2 + \tfrac{1}{2} \cdot (a_{n,0} + a_{n,1})~,\\[1ex]
a_{n+1,k} &=& \tfrac{1}{2} \cdot (a_{n,k-1} + a_{n,k+1}) \text{ for all } 1 \,{\leq}\,
  k \,{\leq}\, n{+}1~,\\[1ex]
a_{n,k} &=& 0 \text{ for all } k > n~,
\end{array}
$$
for $n \geq 0$.
Then we show that our proposed $a_{n,k}$ satisfies this recurrence relation.\footnote{We
assume that the binomial coefficient $\binom{n}{m}$ is $0$ whenever $m < 0$.} Required calculations for both steps can be found in Appendix
\ref{sec:app-random-walk}. Now \autoref{thm:omega-inv} and the
fact that $\lim_{n \rightarrow \infty} a_{n,0} = \infty$ (see Appendix
\ref{sec:app-random-walk}) give
\[
\eet{\WHILEDO{x \,{>}\, 0}{C}}{\zero} \succeq \lim_{n \To \infty} I_n \succeq \lim_{n
  \To \infty} \one + \eval{x \,{>}\, 0} \cdot a_{n,0} = \one + \eval{x \,{>}\, 0} \cdot \boldsymbol{\infinity}~.
\]
Altogether we have
\begin{align*}
\eet{C_\mathit{rw}}{\zero} 	
&~=~\eet{\ASSIGN{x}{10}}{\eet{\WHILEDO{x>0}{C}}{\zero}}\\
&~\succeq ~\eet{\ASSIGN{x}{10}}{\one + \eval{x \,{>}\, 0} \cdot \boldsymbol{\infinity}}\\
&~=~\ctert{1} + (\ctert{1} + \eval{x \,{>}\, 0} \cdot \boldsymbol{\infinity})\subst{x}{10}\\
&~=~\ctert{1} + (\ctert{1} + 1 \cdot \boldsymbol{\infinity}) ~=~ \boldsymbol{\infinity}~,
\end{align*}
which says that $\eet{C_\mathit{rw}}{\zero} \succeq
\boldsymbol{\infinity}$. Since the reverse inequality holds trivially, we
conclude that $\eet{C_\mathit{rw}}{\zero} = \boldsymbol{\infinity}$.

\subsection{The Coupon Collector's Problem} \label{subsec:coupon}

Now we apply our \eetsymbol--calculus to solve the Coupon Collector's
  Problem. This problem arises from the following scenario\footnote{The
  problem formulation presented here is taken
  from~\cite{Mitzenmacher:2005}.}: Suppose each box of cereal contains one of
$N$ different coupons and once a consumer has collected a coupon of each type,
he can trade them for a prize. The aim of the problem is determining the average
number of cereal boxes the consumer should buy to collect all coupon types,
assuming that each coupon type occurs with the same probability in the cereal
boxes.

The problem can be modeled by program $C_\mathit{cp}$ below:
%
\begin{align*}	
&			\COMPOSE{\ASSIGN{\mathit{cp}}{[0,\, \ldots,\, 0]}}{\COMPOSE{\ASSIGN{i}{1}}{\ASSIGN{x}{N}}}\\
	&			\WHILE \:(x > 0)\: \{\\
	&\qquad				\WHILE \:(\cp{i} \neq 0)\: \{ \\
	&\qquad \qquad				\APPASSIGN{i}{\mathtt{Unif}[1 \ldots N]} \\
	&\qquad				\}; \\
	&\qquad			\COMPOSE{\ASSIGN{\cp{i}}{1}}{\ASSIGN{x}{x-1}}\\
	&\}
\end{align*}
Array $\mathit{cp}$ is initialized to $0$ and whenever we obtain the first
coupon of type $i$, we set $\mathit{cp}[i]$ to $1$. The outer loop is iterated
$N$ times and in each iteration we collect a new---unseen---coupon type. The
collection of the new coupon type is performed by the inner loop.

We start the run--time analysis of $C_\mathit{cp}$ introducing some
notation. Let $\stmt_{\textnormal{in}}$ and $\stmt_{\textnormal{out}}$,
respectively, denote the inner and the outer loop of $C_\mathit{cp}$.
Furthermore, let $\nocol \triangleq \sum_{i=1}^{N} \cond{\cp{i} \neq 0}$ denote
the number of coupons that have already been collected.

\paragraph{Analysis of the inner loop.}
For analyzing the run--time of the outer loop we need to refer to the run--time of
its body, with respect to an arbitrary continuation $g \in \E$. Therefore, we
first analyze the run--time of the inner loop $C_\mathit{in}$.  We propose the
following lower and upper $\omega$--invariant for the inner loop
$C_\mathit{in}$:
%
%

\begin{align*}
{J}_\textnormal{n}^g 	~=~ 	\ctert{1} ~+~ & \cond{\cp{i} = 0} \cdot g \\
        ~+~ & \cond{\cp{i} \neq 0} \cdot \sum_{k =
          0}^{n}\left(\frac{\nocol}{N}\right)^k \biggl(\ctert{2} + \frac{1}{N}
          \sum_{j=1}^{N} \: \eval{\cp{j}=0} \cdot g[i/j]\biggr)~.
\end{align*}
%
Moreover, we write $J^g$ for the same invariant where $n$ is replaced by
$\infty$.  A detailed verification that $J_\textnormal{n}^g$ is indeed a lower
and upper $\omega$--invariant is provided in Appendix
\ref{sec:app-coupon-collector-inner}. \autoref{thm:omega-inv} now yields
\begin{align*}
 J^g = \lim_{n \to \infty} J_\textnormal{n}^g ~\preceq~ \eet{\stmt_{\textnormal{in}}}{g} ~\preceq~ \lim_{n \to \infty} J_\textnormal{n}^g = J^g. \tag{$\star$}
\end{align*}
Since the run--time of a deterministic assignment $\ASSIGN{x}{E}$ is
\begin{equation}
\eet{\ASSIGN{x}{E}}{f} ~=~ \ctert{1} + f\subst{x}{E} \tag{$\maltese$}~,
\end{equation}
the expected run--time of the body of the outer loop reduces to
\begin{align*}
 \MoveEqLeft[2] \eet{\COMPOSE{\stmt_{\textnormal{in}}}{\COMPOSE{\ASSIGN{\cp{i}}{1}}{\ASSIGN{x}{x-1}}}}{g} \tag{$\dag$} \\
 &=~ \ctert{2} + \eet{\stmt_{\textnormal{in}}}{g[x/x-1,\, \cp{i}/1]} \tag{by $\maltese$} \\
 &=~ \ctert{2} + J^{g[x/x-1,\, \cp{i}/1]} \tag{by $\star$} \\
 &=~ \ctert{2} + J^{g}[x/x-1,\, \cp{i}/1]~.
\end{align*}

\paragraph{Analysis of the outer loop.}

Since program $C_\mathit{cp}$ terminates right after the execution of the outer
loop $C_\mathit{out}$, we analyze the run--time of the outer loop
$C_\mathit{out}$ with respect to continuation $\zero$, \ie
$\eet{C_\mathit{out}}{\ctert{0}}$. To this end we propose
\begin{align*}
 I_{n} 
 & ~=~ \ctert{1} + \sum_{\ell = 0}^{n} \cond{x > \ell} \cdot \biggl( \ctert{3} + \cond{n \neq 0} + 2 \cdot \sum_{k=0}^{\infty} \left( \frac{\nocol+\ell}{N} \right)^{\! \!k} \,\biggr) \\
 & \qquad - 2 \cdot \cond{\cp{i}=0} \cdot \cond{x > 0} \cdot \sum_{k=0}^{\infty} \left( \frac{\nocol}{N} \right)^{k}                      
\end{align*}
as both an upper and lower $\omega$--invariant of $C_\mathit{out}$ with respect
to $\ctert{0}$. A detailed verification that $I_n$ is an $\omega$-invariant is found in
Appendix~\ref{sec:app-coupon-collector-outer}. Now \autoref{thm:omega-inv} yields
\begin{align*}
 I ~=~ \lim_{n \to \infty} I_{n} ~\preceq~ \eet{\stmt_{\textnormal{out}}}{\zero} ~\preceq~ \lim_{n \to \infty} I_{n} = I \tag{$\ddag$}~,
\end{align*}
where $I$ denotes the same invariant as $I_n$ with $n$ replaced by $\infty$.

\paragraph{Analysis of the overall program.}
To obtain the overall expected run--time of program $C_\mathit{cp}$ we have to
account for the initialization instructions before the outer loop. The
calculations go as follows:
\begin{samepage}
\begin{align*} 
\MoveEqLeft[2] \eet{C_{\mathit{cp}}}{\zero} \\
 &=~ \eet{\COMPOSE{\COMPOSE{\COMPOSE{\ASSIGN{\cpsymbol}{[0,\ldots,0]}}{\ASSIGN{i}{1}}}{\ASSIGN{x}{N}}}{\stmt_{\textnormal{out}}}}{\zero}&\\[0.3ex]
 &=~ \ctert{3}
 +\eet{\stmt_{\textnormal{out}}}{\zero}[x/N,i/1,\cp{1}/0,\ldots,\cp{N}/0] &
 \text{(by $\maltese$)} \\[0.5ex]
 &=~ \ctert{3} + I[x/N,i/1,\cp{1}/0,\ldots,\cp{N}/0] & \text{(by $\ddag$)} \\
 &=~ \textstyle{\ctert{4} + \cond{N>0} \cdot \left( 4N + 2 \: \sum_{
       \ell=1}^{N-1} \left( \sum_{k=0}^{\infty} \left( \frac{\ell}{N} \right)^{k} \right) \right)} &\\
 &=~ \textstyle{\ctert{4} + \cond{N>0} \cdot \left( 4N + 2 \: \sum_{\ell = 1}^{N-1} 
     \frac{N}{\ell} \right)} &
 \text{$\Bigl($\begin{tabular}{l}geom.\ series and\\[-0.5ex] sum reordering\end{tabular}$\Bigr)$} \\
 &=~ \ctert{4} + \cond{N>0} \cdot 2N \cdot (\ctert{2} + \mathcal{H}_{N-1})~, 
\end{align*}
\end{samepage}
where $\mathcal{H}_{N-1} \triangleq 0 + \nicefrac 1 1 + \nicefrac 1 2 +
\nicefrac 1 3 + \cdots + \nicefrac 1 {N-1}$ denotes the $(N{-}1)$-th harmonic
number. Since the harmonic numbers approach asymptotically to the natural
logarithm, we conclude that the coupon collector algorithm
$C_{\mathit{cp}}$ runs in expected time $\Theta(N \cdot \log(N))$.


\section{Conclusion}
\label{sec:conclusion}
We  have studied a \textsf{wp}--style calculus for reasoning about the expected run--time and positive almost--sure termination of probabilistic programs. 
Our main contribution consists of several sound and complete proof rules for obtaining upper as well as lower bounds on the expected run--time of loops.
We applied these rules to analyze the expected run--time of a variety of example programs including the well--known coupon collector problem.
While finding invariants is, in general, a challenging task, we were able to guess correct invariants by considering a few loop unrollings most of the time.
Hence, we believe that our proof rules are natural and widely applicable.

Moreover, we proved that our approach is a conservative extension of Nielson's approach for reasoning about the run--time of deterministic programs
and that our calculus is sound with respect to a simple operational model.

\paragraph{Acknowledgement.}
We thank Gilles Barthe for bringing to our attention the coupon collector problem as a particularly intricate case study for formal verification of expected run--times Thomas Noll for bringing to our attention Nielson's Hoare logic, and Johannes H\"{o}lzl for pointing out an error in the random walk case study in an earlier version of this paper.

\bibliographystyle{splncs03}
\bibliography{literature}

\clearpage

\appendix
\section{Omitted Proofs}
\subsection{Propagation of Constants for \textbf{\textsf{ert}}}
\label{sec:app-eet-const-prop}

For a program $C$, we prove 
\[ \eet{C}{\ctert{k} + f} ~=~ \ctert{k} + \eet{C}{f} \]
by induction on the structure of $C$.
As the induction base we have the atomic programs:

\paragraph{$\EMPTY$:} We have:
\begin{align*}
\eet{\EMPTY}{\ctert{k} + f} ~=~ 	&\ctert{k} + f\tag{\autoref{table:eet-rules}}\\
					~=~	&\ctert{k}  + \eet{\EMPTY}{f}\tag{\autoref{table:eet-rules}}
\end{align*}

\paragraph{$\SKIP$:} We have:
\begin{align*}
	\eet{\SKIP}{\ctert{k} + f}  	~=~&\ctert{1} + \ctert{k} + f \tag{\autoref{table:eet-rules}}\\
						~=~&\ctert{k} + \eet{\SKIP}{f}\tag{\autoref{table:eet-rules}}
\end{align*}

\paragraph{$\APPASSIGN{x}{\mu}$:} 
The proof relies on the fact that for any distribution $\nu$, $\Exp{\nu}{\ctert{k} + f} = \ctert{k}
+ \Exp{\nu}{f}$ and that our distribution expressions denote
distributions of total mass $1$. 
We have:
\begin{align*}
	\eet{\APPASSIGN{x}{\mu}}{\ctert{k} + f} 	~=~ & \ctert{1} +
        \lambda \sigma\mydot \Exp{\pexprdeno{\mu}(\sigma)}{\lambda v.\:
          (\ctert{k} + f)\subst{x}{v}(\sigma)}\tag{\autoref{table:eet-rules}}\\
									~=~ & \ctert{1} + \lambda \sigma\mydot \Exp{\pexprdeno{\mu}(\sigma)}{\lambda v.\: k + f\subst{x}{v}(\sigma)}\tag{\text{$\ctert{k}\subst{x}{v}=\ctert{k}$}} \\
									~=~ & \ctert{1} + \ctert{k} + \lambda \sigma\mydot \Exp{\pexprdeno{\mu}(\sigma)}{\lambda v.\: f\subst{x}{v}(\sigma)} & \\
									~=~ & \ctert{k} + \eet{\APPASSIGN{x}{\mu}}{f} \tag{\autoref{table:eet-rules}}
\end{align*}
As the induction hypothesis we now assume that for arbitrary but fixed $C_1, C_2 \in \appProgs$ it holds that both
\begin{align*}
	\eet{C_1}{\ctert{k} + f} ~=~ \ctert{k} + \eet{C_1}{f} 
\end{align*}
and
\begin{align*}
	\eet{C_2}{\ctert{k} + f} ~=~ \ctert{k} + \eet{C_2}{f}~,
\end{align*}
for any $f\in\E$.

\paragraph{$\COMPOSE{C_1}{C_2}$:} We have:%
\begin{align*}
	\eet{\COMPOSE{C_1}{C_2}}{\ctert{k} + f} 	~=~& \eet{C_1}{\eet{C_2}{\ctert{k} + f}} \tag{\autoref{table:eet-rules}}\\
									~=~& \eet{C_1}{\ctert{k} + \eet{C_2}{f}} \tag{I.H.~on $C_2$}\\
									~=~& \ctert{k} + \eet{C_1}{\eet{C_2}{f}} \tag{I.H.~on $C_1$}\\
									~=~& \ctert{k} + \eet{\COMPOSE{C_1}{C_2}}{f} \tag{\autoref{table:eet-rules}}
\end{align*}

\paragraph{$\NDCHOICE{C_1}{C_2}$:} We have: 
\begin{align*}
		&\hspace{-2em}\eet{\NDCHOICE{C_1}{C_2}}{\ctert{k} + f} \\
	~=~	&\max\, \bigl\{ \eet{C_1}{\ctert{k} + f},\, \eet{C_2}{\ctert{k} + f} \bigr\} \tag{\autoref{table:eet-rules}}\\
	~=~	&\max\, \bigl\{\ctert{k} + \eet{C_1}{f},\, \ctert{k} + \eet{C_2}{f} \bigr\} & \tag{I.H.~on $C_1$ and $C_2$}\\
	~=~	&\ctert{k} + \max\, \bigl\{\eet{C_1}{f},\, \eet{C_2}{f} \bigr\} & \\
	~=~	&\ctert{k} + \eet{\NDCHOICE{C_1}{C_2}}{f} \tag{\autoref{table:eet-rules}}
\end{align*}

\paragraph{$\ITE{\pguard}{C_1}{C_2}$:}  We have:
\begin{align*}
		&\hspace{-2em}\eet{\ITE{\pguard}{C_1}{C_2}}{\ctert{k} + f} \\
	~=~	& \ctert{1} + \eval{\pguard} \cdot \eet{C_1}{\ctert{k} + f} + \eval{\neg \pguard} \cdot \eet{C_2}{\ctert{k} + f} \tag{\autoref{table:eet-rules}}\\
	~=~	& \ctert{1} + \eval{\pguard} \cdot \bigl(\ctert{k} + \eet{C_1}{f}\bigr) + \eval{\neg \pguard} \cdot \bigl(\ctert{k} + \eet{C_2}{f}\bigr) \tag{I.H.~on $C_1$ and $C_2$}\\
	~=~	& \ctert{1} + \ctert{k} + \eval{\pguard} \cdot \eet{C_1}{f} + \eval{\neg \pguard} \cdot \ctert{k} + \eet{C_2}{f}\\
	~=~	& \ctert{k} + \eet{\ITE{\pguard}{C_1}{C_2}}{f} \tag{\autoref{table:eet-rules}}
\end{align*}

\paragraph{$\WHILEDO{\pguard}{C'}$:} Let 
\[
F_f (X) ~=~  \ctert{1} + \eval{\neg \pguard} \cdot f + \eval{\pguard} \cdot \eet{C'}{X}~
\]
be the characteristic functional associated to loop $\WHILEDO{\pguard}{C'}$. The
proof boils down to showing that
\[
\lfp F_{\ctert{k} +f} ~=~ \ctert{k} + \lfp F_{f}~,
\]
which is equivalent to the pair of inequalities $\lfp F_{\ctert{k}
  +f} \leq \ctert{k} + \lfp F_{f}$ and $\lfp F_{f} \leq \lfp F_{\ctert{k} +f} -
\ctert{k}$.  These inequalities follow, in turn, from equalities
\[
F_{\ctert{k} +f} ( \ctert{k} + \lfp F_{f}) = \ctert{k} + \lfp F_{f} \qquad \text{and} \qquad 
F_{f}(\lfp F_{\ctert{k} +f} - \ctert{k}) = \lfp F_{\ctert{k} +f} - \ctert{k}~.
\]
(This is because $\lfp$ gives the \emph{least} fixed point of a transformer and
then $F(x) = x$ implies $\lfp F \leq x$.) Let us now discharge each of the above
equalities:

\begin{align*}
\MoveEqLeft[1] F_{\ctert{k} +f} ( \ctert{k} + \lfp F_{f}) \\
	&~=~\ctert{1} + \eval{\neg \pguard} \cdot (\ctert{k} + f) + \eval{\pguard} \cdot \eet{C'}{\ctert{k} + \lfp F_{f}} \tag{Definition of $F_{\ctert{k} +f}$}\\
	&~=~\ctert{1} + \eval{\neg \pguard} \cdot (\ctert{k} + f) + \eval{\pguard} \cdot \bigl( \ctert{k} +\eet{C'}{\lfp F_{f}} \bigr) \tag{I.H.~on $C'$}\\
	&~=~\ctert{1} + \ctert{k} + \eval{\neg \pguard} \cdot f + \eval{\pguard} \cdot \eet{C'}{\lfp F_{f}} \\
	&~=~\ctert{k} + F_{f} (\lfp F_{f}) \tag{Definition of $F_{f}$}\\
	&~=~\ctert{k} + \lfp F_{f} \tag{Definition of $\lfp$}\\
\\[-1ex]
	\MoveEqLeft[1] F_{f} (\lfp F_{\ctert{k}  + f} - \ctert{k}) \\
	&~=~\ctert{1} + \eval{\neg \pguard} \cdot f + \eval{\pguard} \cdot \eet{C'}{\lfp F_{\ctert{k}  + f} - \ctert{k}} \tag{Definition of $F_{f}$}\\
	&~=~\ctert{1} + \eval{\neg \pguard} \cdot f + \eval{\pguard} \cdot \bigl(\eet{C'}{\lfp F_{\ctert{k}  + f}} - \ctert{k} \bigr) \tag{I.H.~on $C'$}\\
	&~=~\ctert{1} + \eval{\neg \pguard} \cdot (\ctert{k} + f) + \eval{\pguard} \cdot \eet{C'}{\lfp F_{\ctert{k}  + f}} - \ctert{k}  \\
	&~=~F_{\ctert{k}  + f} (\lfp F_{\ctert{k}  + f}) - \ctert{k} \tag{Definition of $F_{\ctert{k} +f}$}\\
	&~=~\lfp F_{\ctert{k}  + f} - \ctert{k} \tag{Definition of $\lfp$}
\end{align*}

\subsection{Preservation of $\boldsymbol{\infty}$ for \textbf{\textsf{ert}}}
\label{sec:app-eet-inf}

For a program $C$, we prove 
\[ \eet{C}{\boldsymbol{\infty}} ~=~ \boldsymbol{\infty} \]
by induction on the structure of $C$.
As the induction base we have the atomic programs:

\paragraph{$\EMPTY$:} We have:
\begin{align*}
\eet{\EMPTY}{\boldsymbol{\infty}} ~=~ 	& \boldsymbol{\infty} \tag{\autoref{table:eet-rules}}\end{align*}

\paragraph{$\SKIP$:} We have:
\begin{align*}
	\eet{\SKIP}{\boldsymbol{\infty}}  	~=~&\ctert{1} +
        \boldsymbol{\infty} \tag{\autoref{table:eet-rules}}\\
						~=~& \boldsymbol{\infty}
\end{align*}

\paragraph{$\APPASSIGN{x}{\mu}$:} The proof relies on the fact that for any
distribution $\nu$ of total mass $1$ and any constant $k \in \Rposinf$,
$\Exp{\nu}{\ctert{k}} = \ctert{k}$. We have:
\begin{align*}
	\eet{\APPASSIGN{x}{\mu}}{\boldsymbol{\infty}} 	~=~ & \ctert{1} + \lambda \sigma\mydot \Exp{\pexprdeno{\mu}(\sigma)}{\lambda v.\: \boldsymbol{\infty}\subst{x}{v}(\sigma)}\tag{\autoref{table:eet-rules}}\\
									~=~ & \ctert{1} + \lambda \sigma\mydot \Exp{\pexprdeno{\mu}(\sigma)}{\lambda v.\: \infty}\tag{\text{$\boldsymbol{\infty}\subst{x}{v}=\boldsymbol{\infty}$}} \\
									~=~ & \ctert{1} + \boldsymbol{\infty} & \\
									~=~ &
                                                                        \boldsymbol{\infty} 
\end{align*}
As the induction hypothesis we now assume that for arbitrary but fixed $C_1, C_2 \in \appProgs$ it holds that both
\begin{align*}
	\eet{C_1}{\boldsymbol{\infty}} ~=~ \boldsymbol{\infty}
\end{align*}
and
\begin{align*}
	\eet{C_2}{\boldsymbol{\infty}} ~=~ \boldsymbol{\infty}~.
\end{align*}

\paragraph{$\COMPOSE{C_1}{C_2}$:} We have:%
\begin{align*}
	\eet{\COMPOSE{C_1}{C_2}}{\boldsymbol{\infty}} 	~=~& \eet{C_1}{\eet{C_2}{\boldsymbol{\infty}}} \tag{\autoref{table:eet-rules}}\\
									~=~& \eet{C_1}{\boldsymbol{\infty}} \tag{I.H.~on $C_2$}\\
									~=~& \boldsymbol{\infty} \tag{I.H.~on $C_1$}
\end{align*}

\paragraph{$\NDCHOICE{C_1}{C_2}$:} We have: 
\begin{align*}
		&\hspace{-2em}\eet{\NDCHOICE{C_1}{C_2}}{\boldsymbol{\infty}} \\
	~=~	&\max\, \bigl\{ \eet{C_1}{\boldsymbol{\infty}},\, \eet{C_2}{\boldsymbol{\infty}} \bigr\} \tag{\autoref{table:eet-rules}}\\
	~=~	&\max\, \bigl\{\boldsymbol{\infty},\, \boldsymbol{\infty} \bigr\} & \tag{I.H.~on $C_1$ and $C_2$}\\
	~=~	&\boldsymbol{\infty}
\end{align*}

\paragraph{$\ITE{\pguard}{C_1}{C_2}$:}  We have:
\begin{align*}
		&\hspace{-2em}\eet{\ITE{\pguard}{C_1}{C_2}}{\boldsymbol{\infty}} \\
	~=~	& \ctert{1} + \eval{\pguard} \cdot \eet{C_1}{\boldsymbol{\infty}} + \eval{\neg \pguard} \cdot \eet{C_2}{\boldsymbol{\infty}} \tag{\autoref{table:eet-rules}}\\
	~=~	& \ctert{1} + \eval{\pguard} \cdot \boldsymbol{\infty} +
        \eval{\neg \pguard} \cdot \boldsymbol{\infty} \tag{I.H.~on $C_1$ and
          $C_2$}\\
	~=~	& \ctert{1} + (\eval{\pguard} + \eval{\neg \pguard}) \cdot \boldsymbol{\infty} \\
	~=~	& \ctert{1} + 1 \cdot \boldsymbol{\infty} \\
	~=~	& \boldsymbol{\infty}
\end{align*}

\paragraph{$\WHILEDO{\pguard}{C'}$:} Let $F$ be the characteristic functional of
the loop with respect to run--time $\boldsymbol{\infty}$. Since \eetsymbol\ is
continuous, $F$ is continuous and the Kleene Fixed Point Theorem states that 
\[
\eet{\WHILEDO{\pguard}{C}}{\boldsymbol{\infty}} ~=~ \supn F^n~,
\]
where $F^0 = \ctert{0}$ and $F^{n+1} = F(F^n)$. Then the result follows from
showing that $F^n \,\succeq\, \ctert{n}$. We do this by induction on
$n$. The base case is immediate. For the inductive case take $\sigma \in \States$; then
\begin{align*}
F^{n+1}(\sigma)
&~=~ 1 + \eval{\neg \pguard}(\sigma) \cdot \infty + \eval{\pguard}(\sigma)
\cdot \eet{C'}{F^n}(\sigma)~.
\end{align*}
Now we distinguish two cases. If $\eval{\neg \pguard}(\sigma) > 0$ we have
\[
F^{n+1}(\sigma) ~\geq~ 1 + \eval{\neg \pguard}(\sigma) \cdot \infty ~=~ \infty ~\geq~ n+1~.
\]
If, on the contrary, $\eval{\neg \pguard}(\sigma) = 0$, we have $\eval{\pguard}(\sigma)
= 1$ and 
\begin{align*}
F^{n+1}(\sigma)
&~=~ 1 + \eet{C'}{F^n}(\sigma)\\
&~\geq~ 1 + \eet{C'}{\ctert{n}}(\sigma) \tag{I.H. and monot.\ of $\eetsymbol[C']$} \\
&~=~ 1 + n + \eet{C'}{\ctert{0}}(\sigma) \tag{Prop.\ of constants of  $\eetsymbol[C']$}\\
&~\geq~ 1 + n
\end{align*}

\subsection{Sub--Additivity of \boldeetsymbol{}}
\label{sec:app-eet-additivity}

We prove 
\begin{align*}
	\eet{C}{f + g} ~\preceq~ \eet{C}{f} + \eet{C}{g}
\end{align*}
for fully probabilistic programs by induction on the program structure.
As the induction base we have the atomic programs:

\paragraph{$\EMPTY$~:} 

We have:
\begin{align*}
					&f + g ~\preceq~ f + g\\
	\Longleftrightarrow~	&\eet{\EMPTY}{f + g} ~\preceq~ \eet{\EMPTY}{f} + \eet{\EMPTY}{g}\tag{\autoref{table:eet-rules}}
\end{align*}

\paragraph{$\SKIP$~:} 

We have:
\begin{align*}
					&\one + f + g ~\preceq~ \one + f + \one + g\\
	\Longleftrightarrow~	&\eet{\SKIP}{f + g} ~\preceq~ \eet{\SKIP}{f} + \eet{\SKIP}{g}\tag{\autoref{table:eet-rules}}
\end{align*}

\paragraph{$\HALT$~:} 

We have:
\begin{align*}
					&\zero ~\preceq~ \zero + \zero\\
	\Longleftrightarrow~	&\eet{\HALT}{f + g} ~\preceq~ \eet{\HALT}{f} + \eet{\HALT}{g}\tag{\autoref{table:eet-rules}}
\end{align*}

%

\paragraph{$\APPASSIGN x \mu$~:}

We have:
\begin{align*}
					&\one + \lambda \sigma.\: \Exp{\llbracket \mu \rrbracket (s)}{\lambda v.\: f\subst{x}{v}} + \lambda \sigma.\: \Exp{\llbracket \mu \rrbracket (s)}{\lambda v.\: g\subst{x}{v}}\\
					&\qquad\preceq~ \one + \lambda \sigma.\: \Exp{\llbracket \mu \rrbracket (s)}{\lambda v.\: f\subst{x}{v}} + \one + \lambda \sigma.\: \Exp{\llbracket \mu \rrbracket (s)}{\lambda v.\: g\subst{x}{v}}\\
	\Longleftrightarrow~	&\one + \lambda \sigma.\: \Exp{\llbracket \mu \rrbracket (s)}{\lambda v.\: f\subst{x}{v}} + \lambda \sigma.\: \Exp{\llbracket \mu \rrbracket (s)}{\lambda v.\: g\subst{x}{v}}\tag{\autoref{table:eet-rules}}\\
	&\qquad\preceq~ \eet{\APPASSIGN x \mu}{f} + \eet{\APPASSIGN x \mu}{g}\\
	\Longleftrightarrow~	&\one + \lambda \sigma.\: \Exp{\llbracket \mu \rrbracket (s)}{\lambda v.\: f\subst{x}{v} + \lambda v.\: g\subst{x}{v}} \\
	&\qquad\preceq~ \eet{\APPASSIGN x \mu}{f} + \eet{\APPASSIGN x \mu}{g}\tag{linearity of \textsf{E}}\\
	\Longleftrightarrow~	&\one + \lambda \sigma.\: \Exp{\llbracket \mu \rrbracket (s)}{\lambda v.\: (f + g)\subst{x}{v}} \\
	&\qquad\preceq~ \eet{\APPASSIGN x \mu}{f} + \eet{\APPASSIGN x \mu}{g}\\
	\Longleftrightarrow~&\eet{\APPASSIGN x \mu}{f + g}\\
	&\qquad\preceq~ \eet{\APPASSIGN x \mu}{f} + \eet{\APPASSIGN x \mu}{g}\tag{\autoref{table:eet-rules}}
\end{align*}
As the induction hypothesis we now assume that for arbitrary but fixed $C_1, C_2 \in \appProgs$ it holds that both
\begin{align*}
	\eet{C_1}{f + g} ~\preceq~ \eet{C_1}{f} + \eet{C_1}{g}
\end{align*}
and
\begin{align*}
	\eet{C_2}{f + g} ~\preceq~ \eet{C_2}{f} + \eet{C_2}{g}~,
\end{align*}
for any $f,g\in\E$.

We now do the induction step by considering the composed programs:

\paragraph{$\COMPOSE{C_1}{C_2}$~:}

By the induction hypothesis on $C_2$ we have:
\begin{align*}
					&\eet{C_2}{f + g} ~\preceq~ \eet{C_2}{f} + \eet{C_2}{g}\\
	\implies\;			&\eet{C_1}{\eet{C_2}{f + g}} ~\preceq~ \eet{C_1}{\eet{C_2}{f} + \eet{C_2}{g}}\tag{\autoref{thm:eet-prop}, \textit{Monotonicity}}\\
	\Longleftrightarrow~	&\eet{\COMPOSE{C_1}{C_2}}{f + g} ~\preceq~ \eet{C_1}{\eet{C_2}{f} + \eet{C_2}{g}}\tag{\autoref{table:eet-rules}}\\
	\implies\;			&\eet{\COMPOSE{C_1}{C_2}}{f + g} ~\preceq~ \eet{C_1}{\eet{C_2}{f} + \eet{C_2}{g}}\tag{I.H.~on $C_1$}\\
					&\qquad ~\preceq~  \eet{C_1}{\eet{C_2}{f}} +  \eet{C_1}{\eet{C_2}{g}}\\
	\implies\;			&\eet{\COMPOSE{C_1}{C_2}}{f + g} ~\preceq~  \eet{C_1}{\eet{C_2}{f}} +  \eet{C_1}{\eet{C_2}{g}}\\
	\implies\;			&\eet{\COMPOSE{C_1}{C_2}}{f + g} ~\preceq~  \eet{\COMPOSE{C_1}{C_2}}{f} +  \eet{\COMPOSE{C_1}{C_2}}{g}\tag{\autoref{table:eet-rules}}
\end{align*}

\paragraph{$\ITE{\pguard}{C_1}{C_2}$~:}

We have:
\begin{align*}
			&\eet{\ITE{\pguard}{C_1}{C_2}}{f + g}\\
	~=~ 		&\one + \eval{\pguard} \cdot \eet{C_1}{f + g} + \eval{\neg \pguard} \cdot \eet{C_2}{f + g}\tag{\autoref{table:eet-rules}}\\
	~\preceq~	&\one + \eval{\pguard} \cdot \left(\eet{C_1}{f} + \eet{C_1}{g}\right)\tag{I.H.~on $C_1$ and $C_2$}\\
			&\qquad {} + \eval{\neg \pguard} \cdot \left(\eet{C_2}{f} + \eet{C_2}{g}\right)\\
	~=~		&\one + \eval{\pguard} \cdot \eet{C_1}{f} + \eval{\neg \pguard} \cdot \eet{C_2}{f}\\
			&\qquad {} + \eval{\pguard} \cdot \eet{C_1}{g} + \eval{\neg \pguard} \cdot \eet{C_2}{g}\\
	~\preceq~	&\one + \eval{\pguard} \cdot \eet{C_1}{f} + \eval{\neg \pguard} \cdot \eet{C_2}{f}\\
			&\qquad {} + \one + \eval{\pguard} \cdot \eet{C_1}{g} + \eval{\neg \pguard} \cdot \eet{C_2}{g}\\
	~=~		&\eet{\ITE{\pguard}{C_1}{C_2}}{f}\tag{\autoref{table:eet-rules}}\\
			&\qquad{} + \eet{\ITE{\pguard}{C_1}{C_2}}{g}
\end{align*}

\paragraph{$\WHILEDO{\pguard}{C'}$:} Let 
\begin{align*}
	F_{h} (X) ~=~  \ctert{1} + \eval{\neg \pguard} \cdot h + \eval{\pguard} \cdot \eet{C'}{X}~.
\end{align*}
be the characteristic functional associated to loop $\WHILEDO{\pguard}{C'}$. The
proof boils down to showing that
\begin{align*}
	\lfp F_{f + g} ~\preceq~ \lfp F_{f} + \lfp F_{g}~.
\end{align*}
This inequality follows from the inequality
\begin{align*}
	F_{f + g} (X) ~\preceq~ F_{f}(X) + F_{g}(X)
\end{align*}
by fixed point induction.
We discharge this proof obligation as follows:
\begin{align*}
	F_{f + g} (X) 	~=~		&\ctert{1} + \eval{\neg \pguard} \cdot (f + g) + \eval{\pguard} \cdot \eet{C'}{X} \tag{Definition of $F_{f+g}$}\\
				~\preceq~	&\ctert{1} + \eval{\neg \pguard} \cdot f + \eval{\neg \pguard} \cdot g + \eval{\pguard} \cdot \eet{C'}{X} + \eval{\pguard} \cdot \eet{C'}{X}\\
				~=~		&\ctert{1} + \eval{\neg \pguard} \cdot f + \eval{\pguard} \cdot \eet{C'}{X} + \eval{\neg \pguard} \cdot g + \eval{\pguard} \cdot \eet{C'}{X}\\
				~\preceq~	&\ctert{1} + \eval{\neg \pguard} \cdot f + \eval{\pguard} \cdot \eet{C'}{X} + \one + \eval{\neg \pguard} \cdot g + \eval{\pguard} \cdot \eet{C'}{X}\\
				~=~		&F_{f} (X) + F_{g} (X) \tag{Definition of $F_{f}$ and $F_{g}$}
\end{align*}

\subsection{Scaling of \textbf{\textsf{ert}}}
\label{sec:app-eet-scaling}

We prove 
\begin{align*}
	\min\{1,\, r\} \cdot \eet{C}{f} ~\preceq~  \eet{C}{r \cdot f} ~\preceq~ \max\{1,\, r\} \cdot \eet{C}{f}
\end{align*}
by induction on the program structure.
As the induction base we have the atomic programs:

\paragraph{$\EMPTY$~:} 

We have:
\begin{align*}
	&\min\{r,\, 1\} \cdot f ~\preceq~ r\cdot f ~\preceq~ \max\{r,\, 1\} \cdot f\\
	\Longleftrightarrow~&\min\{r,\, 1\} \cdot \eet{\EMPTY}{f} ~\preceq~ \eet{\EMPTY}{r \cdot f}\\
	&\qquad\preceq~ \max\{r,\, 1\} \cdot \eet{\EMPTY}{f}\tag{\autoref{table:eet-rules}}
\end{align*}

\paragraph{$\SKIP$~:} 

We have:
\begin{align*}
	&\min\{r,\, 1\} + \min\{r,\, 1\} \cdot f ~\preceq~ 1 + r\cdot f ~\preceq~ \max\{r,\, 1\} + \max\{r,\, 1\} \cdot f\\
	\Longleftrightarrow~&\min\{r,\, 1\} \cdot (1 + f) ~\preceq~ 1 + r\cdot f ~\preceq~ \max\{r,\, 1\} \cdot (1 + f)\\
	\Longleftrightarrow~&\min\{r,\, 1\} \cdot \eet{\SKIP}{f} ~\preceq~ \eet{\SKIP}{r \cdot f}\\
	&\qquad\preceq~ \max\{r,\, 1\} \cdot \eet{\SKIP}{f}\tag{\autoref{table:eet-rules}}
\end{align*}

%

\paragraph{$\APPASSIGN x \mu$~:}

We have:
\begin{align*}
	&\min\{r,\, 1\} + \min\{r,\, 1\} \cdot \lambda \sigma.\: \Exp{\llbracket \mu \rrbracket (s)}{\lambda v.\: f\subst{x}{v}}\\
	&\qquad\preceq~ 1 + r \cdot \lambda \sigma.\: \Exp{\llbracket \mu \rrbracket (s)}{\lambda v.\: f\subst{x}{v}}\\
	&\qquad\preceq~ \max\{r,\, 1\} + \max\{r,\, 1\} \cdot \lambda \sigma.\: \Exp{\llbracket \mu \rrbracket (s)}{\lambda v.\: f\subst{x}{v}}\\
	\Longleftrightarrow~&\min\{r,\, 1\} \cdot \big(1 + \lambda \sigma.\: \Exp{\llbracket \mu \rrbracket (s)}{\lambda v.\: f\subst{x}{v}}\big)\\
	&\qquad\preceq~ 1 + r \cdot \lambda \sigma.\: \Exp{\llbracket \mu \rrbracket (s)}{\lambda v.\: f\subst{x}{v}}\\
	&\qquad\preceq~ \max\{r,\, 1\} \cdot \big(1 + \lambda \sigma.\: \Exp{\llbracket \mu \rrbracket (s)}{\lambda v.\: f\subst{x}{v}}\big)\\
	\Longleftrightarrow~&\min\{r,\, 1\} \cdot \eet{\APPASSIGN x \mu}{f} ~\preceq~ 1 + r \cdot \lambda \sigma.\: \Exp{\llbracket \mu \rrbracket (s)}{\lambda v.\: f\subst{x}{v}}\\
	&\qquad\preceq~ \max\{r,\, 1\} \cdot \eet{\APPASSIGN x \mu}{f}\\
	\Longleftrightarrow~&\min\{r,\, 1\} \cdot \eet{\APPASSIGN x \mu}{f} ~\preceq~ 1 + \lambda \sigma.\: r \cdot \Exp{\llbracket \mu \rrbracket (s)}{\lambda v.\: f\subst{x}{v}}\\
	&\qquad\preceq~ \max\{r,\, 1\} \cdot \eet{\APPASSIGN x \mu}{f}\\
	\Longleftrightarrow~&\min\{r,\, 1\} \cdot \eet{\APPASSIGN x \mu}{f} ~\preceq~ 1 + \lambda \sigma.\: \Exp{\llbracket \mu \rrbracket (s)}{ r \cdot \lambda v.\: f\subst{x}{v}}\\
	&\qquad\preceq~ \max\{r,\, 1\} \cdot \eet{\APPASSIGN x \mu}{f}\tag{linearity of \textsf{E}}\\
	\Longleftrightarrow~&\min\{r,\, 1\} \cdot \eet{\APPASSIGN x \mu}{f} ~\preceq~ 1 + \lambda \sigma.\: \Exp{\llbracket \mu \rrbracket (s)}{\lambda v.\: (r \cdot f)\subst{x}{v}}\\
	&\qquad\preceq~ \max\{r,\, 1\} \cdot \eet{\APPASSIGN x \mu}{f}\\
	\Longleftrightarrow~&\min\{r,\, 1\} \cdot \eet{\APPASSIGN x \mu}{f} ~\preceq~ \eet{\APPASSIGN x \mu}{r \cdot f}\\
	&\qquad\preceq~ \max\{r,\, 1\} \cdot \eet{\APPASSIGN x \mu}{f}\tag{\autoref{table:eet-rules}}
\end{align*}
As the induction hypothesis we now assume that for arbitrary but fixed $C_1, C_2 \in \appProgs$ holds
\begin{align*}
	\min\{1,\, r\} \cdot \eet{C_1}{f} ~\preceq~  \eet{C_1}{r \cdot f} ~\preceq~ \max\{1,\, r\} \cdot \eet{C_1}{f}
\end{align*}
and
\begin{align*}
	\min\{1,\, r\} \cdot \eet{C_2}{f} ~\preceq~  \eet{C_2}{r \cdot f} ~\preceq~ \max\{1,\, r\} \cdot \eet{C_2}{f}~,
\end{align*}
for any $f\in\E$ and any $r \in \Rpos$.

We now do the induction step by considering the composed programs:

\paragraph{$\COMPOSE{C_1}{C_2}$~:}

By the induction hypothesis on $C_2$ it holds that:
\begin{align*}
	&\min\{1,\, r\} \cdot \eet{C_2}{f} ~\preceq~ \eet{C_2}{r\cdot f} ~\preceq~ \max\{1,\, r\} \cdot \eet{C_2}{f}\\
	\implies\;&\eet{C_1}{\min\{1,\, r\} \cdot \eet{C_2}{f}} ~\preceq~ \eet{C_1}{\eet{C_2}{r \cdot f}}\\
	& \qquad\preceq~ \eet{C_1}{\max\{1,\, r\} \cdot \eet{C_2}{f}}\tag{\autoref{thm:eet-prop}, \textit{Monotonicity}}\\
	\Longleftrightarrow~&\eet{C_1}{\min\{1,\, r\} \cdot \eet{C_2}{f}} ~\preceq~ \eet{\COMPOSE{C_1}{C_2}}{r\cdot f}\\
	& \quad\preceq~ \eet{C_1}{\max\{1,\, r\} \cdot \eet{C_2}{f}}\tag{\autoref{table:eet-rules}}\\
	\implies\;&\min\big\{1,\, \min\{1,\, r\}\big\} \cdot \eet{C_1}{\eet{C_2}{f}} ~\preceq~ \eet{\COMPOSE{C_1}{C_2}}{r\cdot f}\\
	& \quad\preceq~ \max\big\{1,\, \max\{1,\, r\}\big\} \cdot \eet{C_1}{\eet{C_2}{f}}\tag{I.H.~on $C_1$}\\
	\Longleftrightarrow~&\min\big\{1,\, \min\{1,\, r\}\big\} \cdot \eet{\COMPOSE{C_1}{C_2}}{f} ~\preceq~ \eet{\COMPOSE{C_1}{C_2}}{r\cdot f}\\
	& \quad\preceq~ \max\big\{1,\, \max\{1,\, r\}\big\} \cdot \eet{\COMPOSE{C_1}{C_2}}{f}\tag{\autoref{table:eet-rules}}\\
	\Longleftrightarrow~&\min\{1,\, r\} \cdot \eet{\COMPOSE{C_1}{C_2}}{f} ~\preceq~ \eet{\COMPOSE{C_1}{C_2}}{r\cdot f}\\
	& \quad\preceq~ \max\{1,\, r\} \cdot \eet{\COMPOSE{C_1}{C_2}}{f}
\end{align*}

\paragraph{$\NDCHOICE{C_1}{C_2}$~:}
We have:
\begin{align*}
	\eet{\NDCHOICE{C_1}{C_2}}{r \cdot f} ~=~ \max\big\{\eet{C_1}{r \cdot f},\, \eet{C_2}{r \cdot f}\big\}\tag{\autoref{table:eet-rules}}\\
\end{align*}
By the induction hypothesis on $C_1$ and $C_2$ we then obtain:
\begin{align*}
					&\max\big\{\min\{1,\, r\} \cdot \eet{C_1}{f},\, \min\{1,\, r\} \cdot \eet{C_2}{f}\big\} \\
					&\qquad ~\preceq~ \max\big\{\eet{C_1}{r \cdot f},\, \eet{C_2}{r \cdot f}\big\}\\
					&\qquad ~\preceq~ \max\big\{\max\{1,\, r\} \cdot \eet{C_1}{f},\, \max\{1,\, r\} \cdot \eet{C_2}{f}\big\} \\
	\Longleftrightarrow~	&\min\{1,\, r\} \cdot \max\big\{\eet{C_1}{f},\, \eet{C_2}{f}\big\} \\
					&\qquad ~\preceq~ \max\big\{\eet{C_1}{r \cdot f},\, \eet{C_2}{r \cdot f}\big\}\\
					&\qquad ~\preceq~ \max\{1,\, r\} \cdot \max\big\{\eet{C_1}{f},\, \eet{C_2}{f}\big\}  \\
	\Longleftrightarrow~	&\min\{1,\, r\} \cdot \eet{\NDCHOICE{C_1}{C_2}}{f} \tag{\autoref{table:eet-rules}}\\
					&\qquad ~\preceq~ \eet{\NDCHOICE{C_1}{C_2}}{r \cdot f}\\
					&\qquad ~\preceq~ \max\{1,\, r\} \cdot \eet{\NDCHOICE{C_1}{C_2}}{f} 
\end{align*}

\paragraph{$\ITE{B}{C_1}{C_2}$~:}

We have
\begin{align*}
	&\eet{\ITE{B}{C_1}{C_2}}{r \cdot f}\\
	~=~ &1 + \ind{B} \cdot \eet{C_1}{r \cdot f} + \ind{\neg B} \cdot \eet{C_2}{r \cdot f}
\end{align*}
and
\begin{align*}
	\eet{\ITE{B}{C_1}{C_2}}{f} ~=~ 1 + \ind{B} \cdot \eet{C_1}{f} + \ind{\neg B} \cdot \eet{C_2}{f}~.
\end{align*}
By the induction hypothesis on $C_1$ we have
\begin{align*}
	&\min\{1,\, r\} \cdot \eet{C_1}{f} ~\preceq~ \eet{C_1}{r\cdot f} ~\preceq~\max\{1,\, r\} \cdot \eet{C_1}{f}~,
\end{align*}
and by the induction hypothesis on $C_2$ we have
\begin{align*}
	&\min\{1,\, r\} \cdot \eet{C_2}{f} ~\preceq~ \eet{C_2}{r\cdot f} ~\preceq~\max\{1,\, r\} \cdot \eet{C_2}{f}~.
\end{align*}
From that we can conclude that both
\begin{align*}
	&\min\{1,\, r\} \cdot \ind{B} \cdot \eet{C_1}{f} ~\preceq~ \ind{B} \cdot \eet{C_1}{r\cdot f}\\
	&\qquad\preceq~\max\{1,\, r\} \cdot \ind{B} \cdot \eet{C_1}{f}
\end{align*}
and
\begin{align*}
	&\min\{1,\, r\} \cdot \ind{\neg B} \cdot \eet{C_2}{f} ~\preceq~ \ind{\neg B} \cdot \eet{C_2}{r\cdot f}\\
	&\qquad\preceq~\max\{1,\, r\} \cdot \ind{\neg B} \cdot \eet{C_2}{f}
\end{align*}
hold.
Adding these inequations up gives
\begin{align*}
	&\min\{1,\, r\}  \cdot \big(\ind{B} \cdot \eet{C_1}{f} + \ind{\neg B} \cdot \eet{C_2}{f}\big)\\
	&\qquad\preceq~ \ind{B} \cdot\eet{C_1}{r\cdot f} + \ind{\neg B} \cdot \eet{C_2}{r\cdot f}\\
	&\qquad\preceq~ \max\{1,\, r\} \cdot \big(\ind{B} \cdot\eet{C_1}{f} + \ind{\neg B} \cdot \eet{C_2}{f} \big)\\
	\implies\;&\min\{1,\, r\} + \min\{1,\, r\}  \cdot \big(\ind{B} \cdot \eet{C_1}{f} + \ind{\neg B} \cdot \eet{C_2}{f}\big)\\
	&\qquad\preceq~ 1 + \ind{B} \cdot\eet{C_1}{r\cdot f} + \ind{\neg B} \cdot \eet{C_2}{r\cdot f}\\
	&\qquad\preceq~ \max\{1,\, r\} + \max\{1,\, r\} \cdot \big(\ind{B} \cdot\eet{C_1}{f} + \ind{\neg B} \cdot \eet{C_2}{f} \big)\\
	\Longleftrightarrow~&\min\{1,\, r\}  \cdot \big(1 + \ind{B} \cdot \eet{C_1}{f} + \ind{\neg B} \cdot \eet{C_2}{f}\big)\\
	&\qquad\preceq~ 1 + \ind{B} \cdot\eet{C_1}{r\cdot f} + \ind{\neg B} \cdot \eet{C_2}{r\cdot f}\\
	&\qquad\preceq~ \max\{1,\, r\} \cdot \big(1 + \ind{B} \cdot\eet{C_1}{f} + \ind{\neg B} \cdot \eet{C_2}{f} \big)\\
	\Longleftrightarrow~& \min\{1,\, r\}  \cdot \eet{\ITE{B}{C_1}{C_2}}{f}\\
	&\qquad\preceq~ \eet{\ITE{B}{C_1}{C_2}}{r \cdot f}\\
	&\qquad\preceq~ \max\{1,\, r\} \cdot \eet{\ITE{B}{C_1}{C_2}}{f}\tag{\autoref{table:eet-rules}}
\end{align*}

\paragraph{$\WHILEDO{B}{C_1}$:}
We have
\begin{align*}
\eet{\WHILEDO{B}{C_1}}{f} ~=~ &\mu X.\: 1 + \ind{B} \cdot \eet{C_1}{X} + \ind{\neg B} \cdot f\\
~=~ &\mu F_f
\end{align*}
for $F_f(X) = 1 + \ind{B} \cdot \eet{C_1}{X} + \ind{\neg B} \cdot f$.
We first prove that
\begin{align*}
	\min\{1,\, r\} \cdot F_f^n(\zero) ~\preceq~ F_{r \cdot f}^n(\zero) ~\preceq~ \max\{1,\, r\} \cdot F_f^n(\zero)
\end{align*}
holds for any $f \in \E$, $r\in\Rpos$, and $n\in\mathbb N$.
We prove this by induction on $n$.

As the induction base, we have $n=0$, so
\begin{align*}
	&\min\{1,\, r\} \cdot F_f^0(\zero) ~\preceq~ F_{r \cdot f}^0(\zero) ~\preceq~ \max\{1,\, r\} \cdot F_f^0(\zero)\\
	\Longleftrightarrow~&\min\{1,\, r\} \cdot \zero ~\preceq~ \zero ~\preceq~ \max\{1,\, r\} \cdot \zero\\
	\Longleftrightarrow~&\zero ~\preceq~ \zero ~\preceq~ \zero~,
\end{align*}
which trivially holds.

As the induction hypothesis we now assume that 
\begin{align*}
	\min\{1,\, r\} \cdot F_f^n(\zero) ~\preceq~ F_{r \cdot f}^n(\zero) ~\preceq~ \max\{1,\, r\} \cdot F_f^n(\zero)
\end{align*}
holds for any $f \in \E$, $r\in\Rpos$, and some arbitrary but fixed $n$.

For the induction step we now show that 
\begin{align*}
	\min\{1,\, r\} \cdot F_f^{n+1}(\zero) ~\preceq~ F_{r \cdot f}^{n+1}(\zero) ~\preceq~ \max\{1,\, r\} \cdot F_f^{n+1}(\zero)
\end{align*}
also holds.
For that, consider that by the induction hypothesis we have:
\begin{align*}
	&\min\{1,\, r\} \cdot F_f^n(\zero) ~\preceq~ F_{r \cdot f}^n(\zero) ~\preceq~ \max\{1,\, r\} \cdot F_f^n(\zero)\\
	\implies\;&\eet{C_1}{\min\{1,\, r\} \cdot F_f^n(\zero)} ~\preceq~ \eet{C_1}{F_{r \cdot f}^n(\zero)}\\
	&\quad\preceq~ \eet{C_1}{\max\{1,\, r\} \cdot F_f^n(\zero)}\tag{\autoref{thm:eet-prop}, \textit{Monotonicity}}\\
	\implies\;&\min\{1,\, r\} \cdot \eet{C_1}{F_f^n(\zero)} ~\preceq~ \eet{C_1}{\min\{1,\, r\} \cdot F_f^n(\zero)}\\
	&\quad\preceq~ \eet{C_1}{F_{r \cdot f}^n(\zero)} ~\preceq~ \eet{C_1}{\max\{1,\, r\} \cdot F_f^n(\zero)}\\
	&\quad\preceq~ \max\{1,\, r\} \cdot \eet{C_1}{F_f^n(\zero)}\tag{I.H.~on $C_1$}\\
	\implies\;&\min\{1,\, r\} \cdot \eet{C_1}{F_f^n(\zero)} ~\preceq~ \eet{C_1}{F_{r \cdot f}^n(\zero)}\\
	&\quad\preceq~ \max\{1,\, r\} \cdot \eet{C_1}{F_f^n(\zero)}\\
	\Longleftrightarrow~&\min\{1,\, r\} \cdot \ind{B} \cdot \eet{C_1}{F_f^n(\zero)} ~\preceq~ \ind{B} \cdot \eet{C_1}{F_{r \cdot f}^n(\zero)}\\
	&\quad\preceq~ \max\{1,\, r\} \cdot \ind{B} \cdot \eet{C_1}{F_f^n(\zero)}\\
	\implies\;&\ind{\neg B} \cdot \min\{1,\, r\} \cdot f + \min\{1,\, r\} \cdot \ind{B} \cdot \eet{C_1}{F_f^n(\zero)}\\
	&\quad\preceq~ \ind{\neg B} \cdot r \cdot f + \ind{B} \cdot \eet{C_1}{F_{r \cdot f}^n(\zero)}\\
	&\quad\preceq~ \ind{\neg B} \cdot \max\{1,\, r\} \cdot f + \max\{1,\, r\} \cdot \ind{B} \cdot \eet{C_1}{F_f^n(\zero)}\\
	\implies\;&\min\{1,\, r\} + \ind{\neg B} \cdot \min\{1,\, r\} \cdot f + \min\{1,\, r\} \cdot \ind{B} \cdot \eet{C_1}{F_f^n(\zero)}\\
	&\quad\preceq~ 1 + \ind{\neg B} \cdot r \cdot f + \ind{B} \cdot \eet{C_1}{F_{r \cdot f}^n(\zero)}\\
	&\quad\preceq~ \max\{1,\, r\} + \ind{\neg B} \cdot \max\{1,\, r\} \cdot f + \max\{1,\, r\} \cdot \ind{B} \cdot \eet{C_1}{F_f^n(\zero)}\\
	\Longleftrightarrow~&\min\{1,\, r\} \cdot \big(1 + \ind{\neg B} \cdot f + \ind{B} \cdot \eet{C_1}{F_f^n(\zero)}\big)\\
	&\quad\preceq~ 1 + \ind{\neg B} \cdot r \cdot f + \ind{B} \cdot \eet{C_1}{F_{r \cdot f}^n(\zero)}\\
	&\quad\preceq~ \max\{1,\, r\} \cdot \big(1 + \ind{\neg B} \cdot f + \ind{B} \cdot \eet{C_1}{F_f^n(\zero)}\big)\\
	\Longleftrightarrow~&\min\{1,\, r\} \cdot F_f^{n+1}(\zero) ~\preceq~ F_{r \cdot f}^{n+1}(\zero) ~\preceq~ \max\{1,\, r\} \cdot F_f^{n+1}(\zero)
\end{align*}
So we have shown by induction that 
\begin{align*}
	\min\{1,\, r\} \cdot F_f^n(\zero) ~\preceq~ F_{r \cdot f}^n(\zero) ~\preceq~ \max\{1,\, r\} \cdot F_f^n(\zero)
\end{align*}
holds for any $f \in \E$, $r\in\Rpos$, and $n\in\mathbb N$.
Then also:
\belowdisplayskip=0pt
\begin{align*}
	&\min\{1,\, r\} \cdot \sup_{n\in\mathbb N}F_f^n(\zero) ~\preceq~ \sup_{n\in\mathbb N} F_{r \cdot f}^n(\zero) ~\preceq~ \max\{1,\, r\} \cdot \sup_{n\in\mathbb N}  F_f^n(\zero)\\
	\Longleftrightarrow~&\min\{1,\, r\} \cdot \mu F_f ~\preceq~ \mu F_{r \cdot f} ~\preceq~ \max\{1,\, r\} \cdot \mu F_f\\
	\Longleftrightarrow~&\min\{1,\, r\} \cdot \eet{\WHILEDO{B}{C_1}}{f} ~\preceq~ \eet{\WHILEDO{B}{C_1}}{r \cdot f}\\
	&\qquad\preceq~ \max\{1,\, r\} \cdot \eet{\WHILEDO{B}{C_1}}{f}\tag{\autoref{table:eet-rules}}
\end{align*}
\qed

\subsection{The $\boldsymbol{\omega}$--Complete Partial Order $\boldsymbol{(\E,\, {\preceq})}$}
\label{sec:proof-thm-cpo}

We prove that $(\E,\, {\preceq})$ is an $\omega$--cpo with bottom element $\zero\colon \sigma \mapsto 0$ and top element $\infinity\colon \sigma \mapsto \infty$.
First we prove that $(\E,\, {\preceq})$ is a partial order and for that we first prove that $\preceq$ is reflexive:
\begin{align*}
						&\forall \sigma \in \States\mydot f(\sigma) = f(\sigma)\\
	~\implies\;				&\forall \sigma \in \States\mydot f(\sigma) \leq f(\sigma)\\
	~\Longleftrightarrow~	&f \preceq f 										\tag{Definition of $\preceq$}
\end{align*}
Next, we prove that $\preceq$ is transitive:
\begin{align*}
						&f \preceq g \textnormal{ and } g \preceq h\\
	~\Longleftrightarrow~	&\forall \sigma \in \States\mydot f(\sigma) \geq g(\sigma) \textnormal{ and } \forall \sigma \in \States\mydot g(\sigma) \geq h(\sigma) \tag{Definition of $\preceq$}\\
	~\Longleftrightarrow~	&\forall \sigma \in \States\mydot f(\sigma) \geq g(\sigma) \geq h(\sigma)\\
	~\implies\;				&\forall \sigma \in \States\mydot f(\sigma) \geq h(\sigma)\\
	~\Longleftrightarrow~	&f \preceq h\tag{Definition of $\preceq$}									
\end{align*}
Finally, we prove that $\preceq$ is antisymmetric:
\begin{align*}
						&f \preceq g \textnormal{ and } g \preceq f\\
	~\Longleftrightarrow~	&\forall \sigma \in \States\mydot f(\sigma) \geq g(\sigma) \textnormal{ and } \forall \sigma \in \States\mydot g(\sigma) \geq f(\sigma) \tag{Definition of $\preceq$}\\
	~\Longleftrightarrow~	&\forall \sigma \in \States\mydot f(\sigma) \geq g(\sigma) \geq f(\sigma)\\
	~\implies\;				&\forall \sigma \in \States\mydot f(\sigma) = g(\sigma)\\
	~\Longleftrightarrow~	&f = g
\end{align*}
At last, we have to prove that every $\omega$--chain $S \subseteq \E$ has a supremum.
Such a supremum can be constructed by taking the pointwise supremum
\begin{align*}
	\sup S ~=~ \lambda \sigma\mydot \sup \big\{f(\sigma) ~\big|~ f\in S\big\}~,
\end{align*}
which always exists as every subset of $\PReals^\infty$ has a supremum.
\qed

\subsection{Continuity of  \textbf{\textsf{ert}}}
\label{sec:app-eet-cont}

Let $\langle f_n \rangle$ be an $\omega$-chain of run--times. We prove
\[ \eet{C}{\supn  f_n} ~=~ \supn \eet{C}{f_n} \]
by induction on the structure of $C$.

\vspace{1.5ex} \noindent \textbf{Case $\boldsymbol{C=\EMPTY}$:}%
\vspace{-1.5ex}
\begin{flalign*}
\MoveEqLeft[2] \eet{\EMPTY}{\supn  f_n} \\
&=~\supn  f_n & (\text{def.~$\eetsymbol$})\\
&=~\supn \eet{\EMPTY}{f_n}& (\text{def.~$\eetsymbol$})
\end{flalign*}

\vspace{1.5ex}\noindent \textbf{Case $\boldsymbol{C=\SKIP}$:}%
\vspace{-1.5ex}
\begin{flalign*}
\MoveEqLeft[2] \eet{\SKIP}{\supn  f_n} \\
&=~\ctert{1} + \supn  f_n & (\text{def.~$\eetsymbol$})\\
&=~\supn \ctert{1} + f_n & (\text{$\ctert{1}$ is const.~for $n$}) \\
&=~\supn \eet{\SKIP}{f_n}& (\text{def.~$\eetsymbol$})
\end{flalign*}

\vspace{1.5ex}\noindent \textbf{Case $\boldsymbol{C=\APPASSIGN{x}{\mu}}$:} The
proof relies on the Lebesgue's Monotone Convergence Theorem (LMCT); see
\eg~\cite[p.~567]{schechter:1996}.
\vspace{-1.5ex}
\begin{flalign*}
\MoveEqLeft[2] \eet{\APPASSIGN{x}{\mu}}{\supn  f_n} \\
&=~ \ctert{1} + \lambda \sigma\mydot
\Exp{\pexprdeno{\mu}(\sigma)}{\lambda v.\: (\supn  f_n)\subst{x}{v}(\sigma)} &  (\text{def.~$\eetsymbol$}) \\
&=~ \ctert{1} + \lambda \sigma\mydot
\Exp{\pexprdeno{\mu}(\sigma)}{\supn \lambda v.\: f_n\subst{x}{v}(\sigma)} & \\
&=~ \ctert{1} + \lambda \sigma\mydot \supn
\Exp{\pexprdeno{\mu}(\sigma)}{\lambda v.\: f_n\subst{x}{v}(\sigma)} & (\text{LMCT}) \\
&=~ \supn \ctert{1} + \lambda \sigma\mydot
\Exp{\pexprdeno{\mu}(\sigma)}{\lambda v.\: f_n\subst{x}{v}(\sigma)} & (\text{$\ctert{1}$ is const.~for $n$}) \\
&=~\supn \eet{\APPASSIGN{x}{\mu}}{f_n} & (\text{def.~$\eetsymbol$})
\end{flalign*}

\vspace{1.5ex} \noindent \textbf{Case $\boldsymbol{C=\COMPOSE{C_1}{C_2}}$:}%
\vspace{-1.5ex} 
\begin{flalign*}
\MoveEqLeft[2] \eet{\COMPOSE{C_1}{C_2}}{\supn  f_n} \\
&=~\eet{C_1}{\eet{C_2}{\supn  f_n}} & (\text{def.~$\eetsymbol$})\\
&=~\eet{C_1}{\supn \eet{C_2}{f_n}} & (\text{IH on $C_2$})\\
&=~\supn \eet{C_1}{\eet{C_2}{f_n}} & (\text{IH on $C_1$})\\
&=~\supn \eet{\COMPOSE{C_1}{C_2}}{f_n} &  (\text{def.~$\eetsymbol$})
\end{flalign*}

\vspace{1.5ex} \noindent \textbf{Case $\boldsymbol{C=\ITE{\pguard}{C_1}{C_2}}$:}
The proof relies on a Monotone Sequence Theorem that says that if $\langle a_n
\rangle$ is a monotonic sequence in $\Rposinf$ then the supremum $\supn a_n$
coincides with $\lim_{n \To \infty} a_n$.
\vspace{-1.5ex}\begin{flalign*}
\MoveEqLeft[2] \eet{\ITE{\pguard}{C_1}{C_2}}{\supn  f_n} \\
&=~\ctert{1} + \eval{\pguard}
\cdot \eet{C_1}{\supn f_n} + \eval{\neg \pguard} \cdot \eet{C_2}{\supn f_n} & (\text{def.~$\eetsymbol$})\\
&=~\ctert{1} + \eval{\pguard}
\cdot \supn \eet{C_1}{f_n} + \eval{\neg \pguard} \cdot \supn \eet{C_2}{f_n} &
(\text{IH on $C_1$, $C_2$})\\
&=~\ctert{1} + \eval{\pguard}
\cdot \lim_{n \To \infty} \eet{C_1}{f_n} + \eval{\neg \pguard} \cdot \lim_{n \To \infty} \eet{C_2}{f_n} &
(\text{MCT})\\
&=~\lim_{n \To \infty} \ctert{1} + \eval{\pguard}
\cdot \eet{C_1}{f_n} + \eval{\neg \pguard} \cdot \eet{C_2}{f_n} &
\\
&=~ \supn \ctert{1} + \eval{\pguard}
\cdot \eet{C_1}{f_n} + \eval{\neg \pguard} \cdot \eet{C_2}{f_n} & (\text{MCT})
\\
&=~\supn \eet{\ITE{\pguard}{C_1}{C_2}}{f_n} &  (\text{def.~$\eetsymbol$})
\end{flalign*}

\vspace{1.5ex} \noindent \textbf{Case $\boldsymbol{C=\NDCHOICE{C_1}{C_2}}$:}
\vspace{-1.5ex}\begin{flalign*}
\MoveEqLeft[2] \eet{\NDCHOICE{C_1}{C_2}}{\supn  f_n} \\
&=~\max\, \bigl\{ \eet{C_1}{\supn  f_n},\, \eet{C_2}{\supn  f_n} \bigr\} & (\text{def.~$\eetsymbol$})\\
&=~\max\, \bigl\{\supn \eet{C_1}{f_n},\, \supn \eet{C_2}{f_n} \bigr\}&
(\text{IH on $C_1$, $C_2$})\\
&\leq~\supn \max\, \bigl\{ \eet{C_1}{f_n},\, \eet{C_2}{f_n} \bigr\} & \\
&=~\supn \eet{\NDCHOICE{C_1}{C_2}}{f_n} &  (\text{def.~$\eetsymbol$})
\end{flalign*}
Let $A = \max\, \bigl\{\supn \eet{C_1}{f_n},\, \supn \eet{C_2}{f_n} \bigr\}$. We
prove the remaining inequality $A \geq \supn \max\, \bigl\{ \eet{C_1}{f_n},\,
\eet{C_2}{f_n} \bigr\} $ by contradiction. Assume that
\begin{align*}
A &< \supn \max\, \bigl\{ \eet{C_1}{f_n},\, \eet{C_2}{f_n} \bigr\}%
\intertext{holds. Then there must exist a natural number $m$ such that}%
A &< \max\, \bigl\{ \eet{C_1}{f_{m}},\, \eet{C_2}{f_{m}} \bigr\}
\end{align*}
since if $\max\, \bigl\{ \eet{C_1}{f_n},\, \eet{C_2}{f_n} \bigr\}$ were bounded
from above by $A$ for every $n$, the supremum $\supn \max\, \bigl\{
\eet{C_1}{f_n},\, \eet{C_2}{f_n} \bigr\}$ would also be bounded from above by
$A$. From this fact we can derive the contradiction $A < A$ as follows:
\begin{flalign*}
A &< \max\, \bigl\{\eet{C_1}{f_m},\,  \eet{C_2}{f_m} \bigr\} &\\
& \leq \max\, \bigl\{\supn \eet{C_1}{f_n},\, \supn \eet{C_2}{f_n} \bigr\} &\\
& = A &
\end{flalign*}

\vspace{1.5ex} \noindent \textbf{Case $\boldsymbol{C=\WHILEDO{\pguard}{C'}}$:} Let 
\[
F_f (X) ~=~  \ctert{1} + \eval{\neg \pguard} \cdot f + \eval{\pguard} \cdot \eet{C'}{X}~
\]
be the characteristic functional associated to loop $\WHILEDO{\pguard}{C'}$. The
proof relies on two properties of $F_f$. Fact 1 says that $F_{\supn f_n} ~=~
\supn F_{f_n}$ and follows from a straightforward reasoning. Fact 2 says that
$\supn F_{f_n}$ is a continuous transformer (in $\E \To \E$) and follows from
the fact that $\langle F_{f_n} \rangle$ forms an $\omega$-chain of continuous
transformers (since by IH $\eet{C'}{\cdot}$ is continuous) and continuous
functions are closed under lubs. Finally, we will make use of a continuity
result of the $\lfp$ operator: Fact 3 says that $\lfp \colon [\E \To \E] \To \E$
is itself continuous when restricted to the set of continuous transformers in
$\E \To \E$, denoted $[\E \To \E]$.
\begin{flalign*}
\MoveEqLeft[2] \eet{\WHILEDO{\pguard}{C'}}{\supn  f_n} \\
&=~\lfp (F_{\supn f_n}) & (\text{def.~$\eetsymbol$})\\
&=~\lfp \Bigl(\,\supn F_{f_n} \Bigr) & (\text{Fact 1})\\
&=~\supn \lfp \bigl( F_{f_n} \bigr) & (\text{Facts 2 and 3})\\
&=~\supn \eet{\WHILEDO{\pguard}{C'}}{f_n}& (\text{def.~$\eetsymbol$})
\end{flalign*}

\subsection{Proof of \autoref{thm:eet-soundness}}
\label{sec:thm:eet-soundness}

Before we prove soundness of \eetsymbol{} with respect to the simple operational model of our probabilistic programming language
introduced in Section~\ref{sec:operational}, some preparation is needed.
In particular, we have to consider bounded $\WHILE$ loops that are obtained by successively unrolling it up to a finite number of executions of the loop body.

\begin{lemma}\label{thm:eet-unroll-loop}
Let $\pguard \in \DExprs$, $\stmt \in \Stmt$, and $\mdpRewSink \in \E$. Then
 \[
  \eet{\WHILEDO{\pguard}{\stmt}}{\mdpRewSink} = \eet{\ITE{\pguard}{\stmt;\WHILEDO{\pguard}{\stmt}}{\EMPTY}}{\mdpRewSink}.
 \]
\end{lemma}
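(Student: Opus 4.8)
The plan is to prove the identity by a purely equational calculation: I would simply unfold the right--hand side according to the defining rules of $\eetsymbol$ from \autoref{table:eet-rules} and then recognise that what remains is precisely the characteristic functional of the loop applied to the loop's own expected run--time, at which point the fixed--point equation closes the argument. No induction on the program structure is needed here.

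First I would expand $\eet{\ITE{\pguard}{\stmt;\WHILEDO{\pguard}{\stmt}}{\EMPTY}}{\mdpRewSink}$ using the conditional rule, obtaining $\ctert{1} + \probof{\pguard}{\true} \cdot \eet{\COMPOSE{\stmt}{\WHILEDO{\pguard}{\stmt}}}{\mdpRewSink} + \probof{\pguard}{\false} \cdot \eet{\EMPTY}{\mdpRewSink}$. Then I would simplify the two summands: the rule for $\EMPTY$ gives $\eet{\EMPTY}{\mdpRewSink} = \mdpRewSink$, and the rule for sequential composition gives $\eet{\COMPOSE{\stmt}{\WHILEDO{\pguard}{\stmt}}}{\mdpRewSink} = \eet{\stmt}{\eet{\WHILEDO{\pguard}{\stmt}}{\mdpRewSink}}$. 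Substituting these back yields
\[
\eet{\ITE{\pguard}{\stmt;\WHILEDO{\pguard}{\stmt}}{\EMPTY}}{\mdpRewSink} = \ctert{1} + \probof{\pguard}{\false} \cdot \mdpRewSink + \probof{\pguard}{\true} \cdot \eet{\stmt}{\eet{\WHILEDO{\pguard}{\stmt}}{\mdpRewSink}}.
\]

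The crucial observation is that the expression on the right is exactly the characteristic functional $F_{\mdpRewSink}^{\langle \pguard, \stmt \rangle}$ of \autoref{def:F} evaluated at the argument $X = \eet{\WHILEDO{\pguard}{\stmt}}{\mdpRewSink}$. Since by definition $\eet{\WHILEDO{\pguard}{\stmt}}{\mdpRewSink} = \lfp F_{\mdpRewSink}^{\langle \pguard, \stmt \rangle}$, and since $\lfp F$ is a genuine fixed point of $F$ (its existence guaranteed by the continuity established in \autoref{thm:eet-cont} together with the Kleene Fixed Point Theorem), I can invoke $F_{\mdpRewSink}^{\langle \pguard, \stmt \rangle}\bigl(\lfp F_{\mdpRewSink}^{\langle \pguard, \stmt \rangle}\bigr) = \lfp F_{\mdpRewSink}^{\langle \pguard, \stmt \rangle}$. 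This collapses the right--hand side back to $\eet{\WHILEDO{\pguard}{\stmt}}{\mdpRewSink}$, which is the left--hand side, completing the proof.

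There is essentially no hard part: the entire mathematical content is the fixed--point equation $F(\lfp F) = \lfp F$, and everything else is mechanical rewriting through the transformer rules. The only subtlety worth a single remark is that $\lfp F_{\mdpRewSink}^{\langle \pguard, \stmt \rangle}$ really is a fixed point rather than merely a pre-- or post--fixed point, but this is immediate from the well--definedness discussion preceding \autoref{def:F}. Thus the lemma is a direct consequence of the definition of $\eetsymbol$ on loops.
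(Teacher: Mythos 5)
Your proof is correct and is essentially the paper's own argument run in reverse: the paper starts from $\eet{\WHILEDO{\pguard}{\stmt}}{\mdpRewSink} = \lfp F_{\mdpRewSink}$, applies the fixed--point equation $F(\lfp F) = \lfp F$, and then folds the rules for sequential composition, $\EMPTY$, and the conditional, whereas you unfold the right--hand side with the same three rules and close with the same fixed--point equation. The mathematical content — recognizing the unfolded conditional as the characteristic functional evaluated at the loop's run--time and invoking $\lfp F$ being a genuine fixed point — is identical.
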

%
%
\begin{proof}
 Let $\eetF(X)$ be the characteristic functional corresponding to $\WHILEDO{\pguard}{\stmt}$ as defined in \autoref{def:F}.
 Then,
 \begin{align*}
        & \eet{\WHILEDO{\pguard}{\stmt}}{\mdpRewSink} \\
    ~=~ & \lfp \eetF \tag{\autoref{table:eet-rules}} \\ 
    ~=~ & \eetF(\lfp \eetF) \tag{Def. \lfp} \\
    ~=~ & 1 + \probof{\pguard}{\true} \cdot \eet{\stmt}{\lfp \eetF} + \probof{\pguard}{\false} \cdot \mdpRewSink \tag{\autoref{def:F}} \\
    ~=~ & 1 + \probof{\pguard}{\true} \cdot \eet{\stmt}{\eet{\WHILEDO{\pguard}{\stmt}}{\mdpRewSink}} \tag{\autoref{table:eet-rules}} \\
        &   + \probof{\pguard}{\false} \cdot \mdpRewSink \\
    ~=~ & 1 + \probof{\pguard}{\true} \cdot \eet{\stmt}{\eet{\WHILEDO{\pguard}{\stmt}}{\mdpRewSink}} \tag{\eet{\EMPTY}{\mdpRewSink} = \mdpRewSink} \\
        &   + \probof{\pguard}{\false} \cdot \eet{\EMPTY}{\mdpRewSink} \\
    ~=~ & 1 + \probof{\pguard}{\true} \cdot \eet{\stmt;\WHILEDO{\pguard}{c}}{\mdpRewSink} \tag{\autoref{table:eet-rules}} \\
        &   + \probof{\pguard}{\false} \cdot \eet{\EMPTY}{\mdpRewSink} \\
    ~=~ & \eet{\ITE{\pguard}{\stmt;\WHILEDO{\pguard}{c}}{\EMPTY}}{\mdpRewSink} \tag{\autoref{table:eet-rules}}.
 \end{align*}
 \qed
\end{proof}
\begin{definition}[Bounded $\WHILE$ loops]
Let $\pguard \in \DExprs$, $\stmt \in \Stmt$, $\mdpRewSink \in \E$, and $k \in \Nats$.
Then
\begin{align*}
	\BOUNDEDWHILE{0}{\pguard}{\stmt} ~\eqdef~ 	&\HALT, \textnormal{ and }\\
	\BOUNDEDWHILE{k+1}{\pguard}{\stmt} ~\eqdef~ 	&\ITE{\pguard}{\stmt;\BOUNDEDWHILE{k}{\pguard}{\stmt}}{\EMPTY}~.
\end{align*}
\end{definition}
%
%

To improve readability, let $\stmt' \eqdef \WHILEDO{\pguard}{\stmt}$ and
  $\stmt_{k} \eqdef \BOUNDEDWHILE{k}{\pguard}{\stmt}$ for the remainder of this section.

The following lemma states that the $\eetsymbol$ of a while loop can be expressed in terms of the $\eetsymbol$ of bounded while loops.
\begin{lemma}\label{thm:eet-bounded-loops}
Let $\pguard \in \DExprs$, $\stmt \in \Stmt$, and $\mdpRewSink \in \E$.
Then
 \[ \sup_{k \in \Nats} \eet{\BOUNDEDWHILE{k}{\pguard}{\stmt}}{\mdpRewSink} = \eet{\WHILEDO{\pguard}{\stmt}}{\mdpRewSink}. \]
\end{lemma}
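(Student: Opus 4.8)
The plan is to recognize the finite unrollings $\BOUNDEDWHILE{k}{\pguard}{\stmt}$ as precisely the Kleene iterates of the loop's characteristic functional, after which the identity drops out of the Kleene Fixed Point Theorem. Write $\eetF$ for the characteristic functional of $\WHILEDO{\pguard}{\stmt}$ with respect to $\mdpRewSink$ (\autoref{def:F}), and set $\eetF^0 = \zero$ and $\eetF^{n+1} = \eetF(\eetF^n)$. Since $\eetF$ is continuous by \autoref{thm:eet-cont}, the Kleene Fixed Point Theorem gives $\eet{\WHILEDO{\pguard}{\stmt}}{\mdpRewSink} = \lfp \eetF = \sup_{n} \eetF^n$. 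Hence it suffices to prove that $\eet{\BOUNDEDWHILE{k}{\pguard}{\stmt}}{\mdpRewSink} = \eetF^k$ for every $k \in \Nats$, i.e.\ that the $\eetsymbol$ of the $k$-fold unrolling coincides with the $k$-th approximant of the least fixed point.

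The core step is an induction on $k$ establishing exactly this equality. For the base case $k=0$, the rules of \autoref{table:eet-rules} give $\eet{\BOUNDEDWHILE{0}{\pguard}{\stmt}}{\mdpRewSink} = \eet{\HALT}{\mdpRewSink} = \zero = \eetF^0$; here it is essential that the zero-iteration loop is defined as $\HALT$ rather than $\EMPTY$, so that it contributes no reward and matches the bottom element $\zero$ of the Kleene chain. For the inductive step I would unfold the definition of $\BOUNDEDWHILE{k+1}{\pguard}{\stmt} = \ITE{\pguard}{\stmt;\BOUNDEDWHILE{k}{\pguard}{\stmt}}{\EMPTY}$ and compute, using the rules for conditionals, sequential composition, and $\EMPTY$,
\begin{align*}
 \eet{\BOUNDEDWHILE{k+1}{\pguard}{\stmt}}{\mdpRewSink}
 &~=~ \ctert{1} + \probof{\pguard}{\true} \cdot \eet{\stmt}{\eet{\BOUNDEDWHILE{k}{\pguard}{\stmt}}{\mdpRewSink}} + \probof{\pguard}{\false} \cdot \mdpRewSink \\
 &~=~ \eetF\bigl(\eet{\BOUNDEDWHILE{k}{\pguard}{\stmt}}{\mdpRewSink}\bigr),
\end{align*}
where the last equality is merely \autoref{def:F}. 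Applying the induction hypothesis $\eet{\BOUNDEDWHILE{k}{\pguard}{\stmt}}{\mdpRewSink} = \eetF^k$ then yields $\eetF(\eetF^k) = \eetF^{k+1}$, closing the induction.

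Finally, taking the supremum over $k$ on both sides of $\eet{\BOUNDEDWHILE{k}{\pguard}{\stmt}}{\mdpRewSink} = \eetF^k$ gives $\sup_{k \in \Nats} \eet{\BOUNDEDWHILE{k}{\pguard}{\stmt}}{\mdpRewSink} = \sup_{k} \eetF^k = \eet{\WHILEDO{\pguard}{\stmt}}{\mdpRewSink}$, which is the claim. The argument is essentially routine once the right iterate is identified; the only point requiring care is the bookkeeping in the inductive step, namely checking that the two branches of the conditional reassemble exactly into $\eetF$ applied to the previous iterate, and in particular that the $\probof{\pguard}{\false} \cdot \eet{\EMPTY}{\mdpRewSink} = \probof{\pguard}{\false} \cdot \mdpRewSink$ term matches the continuation summand of $\eetF$. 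I therefore expect the induction to be the main (though modest) obstacle, with no analytic subtleties beyond the continuity already granted by \autoref{thm:eet-cont}.
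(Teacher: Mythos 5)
Your proof is correct and follows essentially the same route as the paper's: both establish $\eet{\BOUNDEDWHILE{k}{\pguard}{\stmt}}{\mdpRewSink} = \eetF^{k}(\zero)$ by induction on $k$ (with the same base case via $\HALT$ and the same unfolding of the conditional in the step) and then conclude via the Kleene Fixed Point Theorem. Your remark that the definition $\BOUNDEDWHILE{0}{\pguard}{\stmt} = \HALT$ is what makes the base case match the bottom element $\zero$ is a correct and worthwhile observation, implicit in the paper's argument.
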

\begin{proof}
  Let $\eetF(X)$ be the characteristic functional corresponding to $\WHILEDO{\pguard}{\stmt}$ as defined in \autoref{def:F}.
  Assume, for the moment, that for each $k \in \Nats$, we have
  $\eet{\stmt_k}{\mdpRewSink} = \eetF^{k}(\ctert{0})$.
  Then, using Kleene's Fixed Point Theorem, we can establish that
  \begin{align*}
    \sup_{k \in \Nats} \eet{\stmt_k}{\mdpRewSink} 
    ~=~  \sup_{k \in \Nats} \eetF^{k}(\zero)
    ~=~  \lfp X . \eetF(X)
    ~=~  \eet{\stmt'}{\mdpRewSink}.
  \end{align*}

  Hence, it suffices to show that 
  $\eet{\stmt_k}{\mdpRewSink} = \eetF^{k}(\ctert{0})$
  for each $k \in \Nats$ by induction over $k$.
  
  \paragraph{Induction Base} $k = 0$.
  \begin{align*}
   \eet{\stmt_k}{\mdpRewSink}
   ~=~ \eet{\HALT}{\mdpRewSink}
   ~=~ \ctert{0}
   ~=~ \eetF^{0}(\ctert{0}).
  \end{align*} 
  
  \paragraph{Induction Hypothesis}
  Assume that $\eet{\stmt_k}{\mdpRewSink} = \eetF^{k}(\ctert{0})$ holds for an arbitrary, fixed $k \in \Nats$.

  \paragraph{Induction Step} $k \mapsto k+1$.
  \begin{align*}
    & \eet{\stmt_{k+1}}{\mdpRewSink} \\
    ~=~ & \eet{\ITE{\pguard}{\stmt;\stmt_k}{\EMPTY}}{\mdpRewSink} \tag{Def. $C_{k+1}$} \\
    ~=~ & 1 + \probof{\pguard}{\true} \cdot \eet{\stmt}{ \eet{\stmt_k}{\mdpRewSink} } 
            + \probof{\pguard}{\false} \cdot \eet{\EMPTY}{\mdpRewSink} \tag{\autoref{table:eet-rules}{}}\\
    ~=~ & 1 + \probof{\pguard}{\true} \cdot \eet{\stmt}{\eetF^{k}(\ctert{0})} 
            + \probof{\pguard}{\false} \cdot \eet{\EMPTY}{\mdpRewSink} \tag{I.H.} \\
    ~=~ & \eetF^{k+1}(\ctert{0}) \tag{\autoref{def:F}}.
  \end{align*}
  \qed
\end{proof}
Furthermore, we will use the following decomposition lemma.
\begin{lemma} \label{thm:exprew-seq}
  Let $C_1,C_2 \in \Stmt$, $\mdpRewSink \in \E$, and $\ps \in \ProgramStates$.
  Then
  \begin{align*}
      & \ExpRew{\rmdpStmt{\mdpRewSink}{\ps}{\stmt_1;\stmt_2}}{\mdpSinkState} 
        ~=~ \ExpRew{\rmdpStmt{g(\stmt_2,f)}{\ps}{\stmt_1}}{\mdpSinkState},
  \end{align*}
  where   
  \begin{align*}
     g(\stmt_2,f) ~\triangleq~ \ExpRew{\lambda \rho . \rmdpStmt{\mdpRewSink}{\rho}{\stmt_2}}{\mdpSinkState}.
  \end{align*}
\end{lemma}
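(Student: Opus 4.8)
The plan is to prove the identity operationally, by decomposing the MDP $\rmdpStmt{f}{\ps}{C_1;C_2}$ into a ``$C_1$--phase'' followed by grafted copies of the $C_2$--MDP, and matching the result against $\rmdpStmt{g}{\ps}{C_1}$, where I abbreviate $g \triangleq \lambda\rho.\,\ExpRew{\rmdpStmt{f}{\rho}{C_2}}{\mdpSinkState}$. First I would record the structural correspondence forced by rules seq$_1$ and seq$_2$: every configuration reachable from $\mdpState{C_1;C_2}{\ps}$ while $C_1$ is still running has the shape $\mdpState{C_1';C_2}{\rho}$, and the sub--MDP on such configurations is isomorphic to the reachable part of $\rmdpStmt{\cdot}{\ps}{C_1}$ with identical one--step probabilities and, crucially, identical rewards, since $\mdpRew(\mdpState{C_1';C_2}{\rho}) = \mdpRew(\mdpState{C_1'}{\rho})$ by \autoref{table:rew-rules}. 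When $C_1$ terminates normally this phase ends in a state $\mdpState{\Terminated;C_2}{\rho}$ of reward $0$, from which seq$_2$ steps to $\mdpState{C_2}{\rho}$; the part reachable from there is exactly a copy of $\rmdpStmt{f}{\rho}{C_2}$ grafted at that point.

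Next I would carry out the reward accounting on the almost--sure branch of the $\ExpRew$ definition. Fixing a scheduler $\mdpSched$, I group the finite paths in $\mdpSPaths{\mdpState{C_1;C_2}{\ps}}{\{\mdpSinkState\}}$ by their uniquely determined $C_1$--prefix $\hat\pi_1$, namely the segment up to and including the first state $\mdpState{\Terminated;C_2}{\rho}$. Along $\hat\pi_1$ the collected reward equals the reward of the corresponding $C_1$--path in $\rmdpStmt{g}{\ps}{C_1}$ excluding its terminal $\mdpState{\Terminated}{\rho}$. Summing, with conditional probabilities, the rewards of all $C_2$--continuations that extend $\hat\pi_1$ is by definition the expected reward of the grafted $\rmdpStmt{f}{\rho}{C_2}$ under the tail of $\mdpSched$, whose supremum is $g(\rho)$; and $g(\rho)$ is precisely the reward $\mdpRew(\mdpState{\Terminated}{\rho})$ attached to normal termination in $\rmdpStmt{g}{\ps}{C_1}$. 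Since schedulers are history dependent and distinct grafted $C_2$--subtrees are reached along pairwise disjoint histories, the supremum over schedulers factorizes: one optimizes the $C_1$--phase and each grafted subtree independently. Grouping by $\hat\pi_1$ then equates the two $\ExpRew$ sums term by term.

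Finally I would reconcile the $\infty$ branch, which I expect to be the main obstacle. I would argue that $\inf_{\mdpSched}\mdpPr{\rmdpMc}{\mdpState{C_1;C_2}{\ps} \models \mdpEventually \mdpSinkState} < 1$ holds iff the right--hand side is already $\infty$: either some scheduler lets the $C_1$--phase diverge with positive probability, and the same choices witness sub--unit reachability in $\rmdpStmt{g}{\ps}{C_1}$, so both sides are $\infty$; or $C_1$ terminates almost surely but some reachable $\rho$ admits a $C_2$--run reaching $\mdpSinkState$ with probability below $1$, whence $g(\rho) = \infty$ and, as $\rho$ is hit with positive probability, the right--hand sum is $\infty$ too. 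A finite right--hand side conversely forces almost--sure termination of both phases and returns us to the factorization above. The genuinely delicate points are this $\infty$ bookkeeping and the justification that the supremum over history--dependent schedulers splits across the disjointly reached $C_2$--subtrees; the reward matching itself is routine given \autoref{table:rew-rules} and \autoref{fig:transition-function}. One remaining point to dispatch is that configurations $\mdpState{\HALT;C_2}{\rho}$ -- reached when $C_1$ halts -- must enter $\mdpSinkState$ directly, so that halting inside $C_1$ aborts the composition and collects the same reward as the direct halt transition on the right.
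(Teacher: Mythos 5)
Your proposal is correct and takes essentially the same approach as the paper's own proof: decompose $\rmdpStmt{\mdpRewSink}{\ps}{\stmt_1;\stmt_2}$ into a $\stmt_1$--phase whose states $\mdpState{\stmt_1';\stmt_2}{\rho}$ have the same transitions and rewards as the corresponding states of $\rmdpStmt{g(\stmt_2,\mdpRewSink)}{\ps}{\stmt_1}$, with a copy of $\rmdpStmt{\mdpRewSink}{\rho}{\stmt_2}$ grafted at each $\mdpState{\Terminated;\stmt_2}{\rho}$, and then replace each graft by its expected reward $g(\stmt_2,\mdpRewSink)(\rho)$. The paper's proof is only a brief sketch of this transformation (it asserts, without further argument, that the replacement preserves the expected reward), so the points you flag as delicate---splitting the supremum over history--dependent schedulers across the disjointly reached $\stmt_2$--subtrees, the sub--unit--reachability/$\infty$ branch of the expected--reward definition, and the $\mdpState{\HALT;\stmt_2}{\rho}$ configurations---are precisely what it leaves implicit, and your treatment of them is sound.
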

\begin{proof}
 The MDP $\rmdpStmt{\mdpRewSink}{\ps}{\stmt_1;\stmt_2}$ is of the following form:
  
  \begin{tikzpicture}[->,>=stealth',shorten >=1pt,node distance=2.7cm,semithick,minimum size=1cm]
\tikzstyle{every state}=[draw=none]

   \node [state, initial, initial text=,] (init) {$\mdpState{\stmt_1;\stmt_2}{\ps}$};  
   \node (initLabel) [node distance=0.5cm,gray,below of=init] {$\mdpRew(\mdpState{\stmt_1}{\ps})$};
   
   \node [state,] (termp) [right of = init] {$\mdpState{\Terminated;\stmt_2}{\ps'}$};
   \node (termpLabel) [node distance=0.5cm,gray,below of=termp] {$0$};
   
   \node [state,] (termp') [below of = termp] {$\mdpState{\Terminated;\stmt_2}{\ps'}$};
   \node (termp'Label) [node distance=0.5cm,gray,below of=termp'] {$0$};
   
   \node [state,] (p) [right of = termp] {$\mdpState{\stmt_2}{\ps'}$};
   \node (pLabel) [node distance=0.5cm,gray,below of=p] {$\mdpRew(\mdpState{\stmt_2}{\ps'})$};
   
   \node [state,] (q) [right of = termp'] {$\mdpState{\stmt_2}{\ps'}$};
   \node (qLabel) [node distance=0.5cm,gray,below of=q] {$\mdpRew(\mdpState{\stmt_2}{\ps'})$};
   
   \node [] (dummy) [below= 0.8 cm of init] {$\vdots$};
   
    \node [] (dummy1) [on grid, right =2cm of p] {$\ldots$};
    \node [] (dummy2) [on grid, right =2cm of q] {$\ldots$};
  \path [] 
      (init) edge [decorate,decoration={snake, post length=2mm}] (termp)
      (init) edge [decorate,decoration={snake, post length=2mm}] (dummy)
      (init) edge [decorate,decoration={snake, post length=2mm}] (termp')
      (termp) edge [] (p)
      (termp') edge [] (q)
      (p) edge [decorate,decoration={snake, post length=2mm}] (dummy1)
      (q) edge [decorate,decoration={snake, post length=2mm}] (dummy2)
  ;
\end{tikzpicture}
  
  Hence, every path starting in $\mdpState{\stmt_1;\stmt_2}{\ps}$ either eventually reaches 
  $\mdpState{\Terminated;\stmt_2}{\ps'}$ for some $\ps' \in \ProgramStates$ and then immediately $\mdpState{\stmt_2}{\ps'}$ or diverges, i.e. never reaches $\mdpSinkState$.
  Since $\mdpState{\stmt_2}{\ps'}$ is the initial state of the MDP $\rmdpStmt{\mdpRewSink}{\ps'}{\stmt_2}$,
  we can transform $\rmdpStmt{\mdpRewSink}{\ps}{\stmt_1;\stmt_2}$ into an MDP $\rmdpStmt{\mdpRewSink'}{\ps}{\stmt_1}$ with the same expected reward by setting
  \begin{align*}
   \mdpRewSink' = \ExpRew{\lambda \rho . \rmdpStmt{\mdpRewSink}{\rho}{\stmt_2}}{\mdpSinkState} = g(\stmt_2,f).
  \end{align*}
  \qed
\end{proof}
The next two lemmas state that the supremum 
\[ sup_{k \in \Nats} \eet{\BOUNDEDWHILE{k}{\pguard}{\stmt}}{\mdpRewSink} \]
is both an upper and a lower bound of the expected reward
\[ \ExpRew{\rmdpStmt{\mdpRewSink}{\ps}{\WHILEDO{\pguard}{\stmt}}}{\mdpSinkState}. \]
\begin{lemma}\label{thm:exprew-bounded-loops-leq}
Let $\pguard \in \DExprs$, $\stmt \in \Stmt$, $\mdpRewSink \in \E$, and $\ps \in \ProgramStates$.
Then
  \[ 
     \sup_{k \in \Nats} \ExpRew{\rmdpStmt{\mdpRewSink}{\ps}{\BOUNDEDWHILE{k}{\pguard}{\stmt}}}{\mdpSinkState}   
     \leq
     \ExpRew{\rmdpStmt{\mdpRewSink}{\ps}{\WHILEDO{\pguard}{\stmt}}}{\mdpSinkState}.
  \] 
\end{lemma}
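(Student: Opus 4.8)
The plan is to first peel the supremum off the left-hand side and reduce the lemma to a single, state-indexed inequality. Concretely, I would fix $k \in \Nats$ and prove the stronger claim that
$\ExpRew{\rmdpStmt{\mdpRewSink}{\ps}{\BOUNDEDWHILE{k}{\pguard}{\stmt}}}{\mdpSinkState} \leq \ExpRew{\rmdpStmt{\mdpRewSink}{\ps}{\WHILEDO{\pguard}{\stmt}}}{\mdpSinkState}$ holds for \emph{every} $\ps \in \ProgramStates$. Quantifying over all initial states is essential here, because the inductive argument will have to compare the two programs under a continuation that ranges over all states. Once this is established, the left-hand side of the lemma is a supremum of quantities each bounded by the right-hand side, so the supremum is bounded by it as well.

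I would prove the reduced claim by induction on $k$. For the base case $k=0$ we have $\BOUNDEDWHILE{0}{\pguard}{\stmt} = \HALT$, whose MDP moves directly to $\mdpSinkState$ with state reward $0$ (rules \lrule{halt} and \lrule{sink}); hence the left-hand side is $0$, which is trivially dominated by the non-negative right-hand side. For the step $k \mapsto k+1$ I would rewrite both expected rewards into a common one-step shape. On the unbounded side, the \lrule{while} rule lets the loop take a single reward-$0$ transition into $\ITE{\pguard}{\COMPOSE{\stmt}{\WHILEDO{\pguard}{\stmt}}}{\EMPTY}$, and since the operational MDP is defined locally (the transitions out of $\mdpState{C}{\rho}$ depend only on $C$ and $\rho$), the sub-MDP rooted at that conditional coincides with the operational MDP of the conditional itself. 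On the bounded side, $\BOUNDEDWHILE{k+1}{\pguard}{\stmt}$ is by definition $\ITE{\pguard}{\COMPOSE{\stmt}{\BOUNDEDWHILE{k}{\pguard}{\stmt}}}{\EMPTY}$. Decomposing the expected reward at the (purely probabilistic) conditional state into its weighted branches and applying the decomposition lemma (\autoref{thm:exprew-seq}) to the sequential composition in the true branch, both quantities reduce to
\[
1 + \probof{\pguard}{\true} \cdot \ExpRew{\rmdpStmt{g}{\ps}{\stmt}}{\mdpSinkState} + \probof{\pguard}{\false} \cdot \mdpRewSink(\ps),
\]
where $g$ is $\lambda \rho.\, \ExpRew{\rmdpStmt{\mdpRewSink}{\rho}{\BOUNDEDWHILE{k}{\pguard}{\stmt}}}{\mdpSinkState}$ in the bounded case and $\lambda \rho.\, \ExpRew{\rmdpStmt{\mdpRewSink}{\rho}{\WHILEDO{\pguard}{\stmt}}}{\mdpSinkState}$ in the unbounded case. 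The induction hypothesis says precisely that the former continuation is pointwise $\preceq$ the latter; since the expected reward of $\stmt$ is monotone in its continuation, the two true-branch terms are ordered as required, while the additive $1$ and the $\false$-branch term coincide, closing the induction.

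The main obstacle is not the induction itself but justifying the two operational manipulations it rests on against the two-case definition of expected reward, in particular the clause sending the value to $\infty$ as soon as the infimal probability of reaching $\mdpSinkState$ falls below $1$. For the \emph{branch decomposition}, I would argue that the two successor continuations ($\COMPOSE{\stmt}{\dots}$ versus $\EMPTY$) make the two subtrees history-disjoint, so a history-dependent scheduler restricts to independent schedulers on each; both the reachability probability and the cumulative reward then split linearly, and the $\infty$-clause is respected branch-by-branch. For \emph{monotonicity in the continuation}, I would note that $\rmdpStmt{g}{\ps}{\stmt}$ and $\rmdpStmt{g'}{\ps}{\stmt}$ with $g \preceq g'$ differ only in the reward assigned to terminal states $\mdpState{\Terminated}{\rho}$: they have identical transition structure, hence identical reachability probabilities (so the $\infty$-clause fires simultaneously), while every finite path collects at least as much reward under $g'$. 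With these two routine-but-careful facts discharged, the per-$k$ inequality and thus the supremum bound follow.
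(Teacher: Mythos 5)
Your proof follows essentially the same route as the paper's: the same reduction to a pointwise, per-$k$ inequality proved by induction on $k$, with the inductive step unfolding both loops into conditionals, splitting the expected reward at the probabilistic branch, applying the sequential-composition decomposition lemma (\autoref{thm:exprew-seq}) to the true branch, and closing via the induction hypothesis together with monotonicity of the expected reward in the continuation. The only difference is that you explicitly discharge the branch decomposition and the continuation-monotonicity facts (including their interaction with the $\infty$-clause in the definition of expected reward), which the paper leaves implicit in its appeals to \autoref{def:operational} and the I.H.; your argument is sound and, if anything, more careful on these points.
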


\begin{proof}
  We prove that
  \[ 
   \ExpRew{\rmdpStmt{\mdpRewSink}{\ps}{\stmt_k}}{\mdpSinkState}
   \leq
   \ExpRew{\rmdpStmt{\mdpRewSink}{\ps}{\stmt'}}{\mdpSinkState}   
  \]
  for each $k \in \Nats$ by induction over $k$.
  \paragraph{Induction Base} $k=0$.
  \begin{align*}
   \ExpRew{\rmdpStmt{\mdpRewSink}{\ps}{\HALT}}{\mdpSinkState}
   ~=~ 0 \leq \ExpRew{\rmdpStmt{\mdpRewSink}{\ps}{\stmt'}}{\mdpSinkState}.
  \end{align*}
  
  \paragraph{Induction Hypothesis}
  Assume that
    \[ 
   \ExpRew{\rmdpStmt{\mdpRewSink}{\ps}{\stmt_k}}{\mdpSinkState}
   \leq
   \ExpRew{\rmdpStmt{\mdpRewSink}{\ps}{\stmt'}}{\mdpSinkState}   
  \]
  holds for an arbitrary, fixed, $k \in \Nats$.
  
  \paragraph{Induction Step} $k \mapsto k+1$.
    \begin{align*}
   & \ExpRew{\rmdpStmt{\mdpRewSink}{\ps}{\stmt_{k+1}}}{\mdpSinkState} \\
   & ~=~ \ExpRew{\rmdpStmt{\mdpRewSink}{\ps}{\ITE{\pguard}{\stmt;\stmt_{k}}{\EMPTY}}}{\mdpSinkState} \tag{Def. $C_{k+1}$} \\
   & ~=~ 1 + \probofInstance{\pguard}{\true}{\ps} \cdot \ExpRew{\rmdpStmt{\mdpRewSink}{\ps}{\stmt;\stmt_k}}{\mdpSinkState} \tag{\autoref{def:operational}} \\
   & \qquad + \probofInstance{\pguard}{\false}{\ps} \cdot \ExpRew{\rmdpStmt{\mdpRewSink}{\ps}{\EMPTY}}{\mdpSinkState}\\
   & ~=~ 1 + \probofInstance{\pguard}{\true}{\ps} \cdot 
   \ExpRew{\rmdpStmt{g(\stmt_k,\mdpRewSink)}{\ps}{\stmt}}{\mdpSinkState} \tag{\autoref{thm:exprew-seq}} \\ 
   & \qquad \qquad
           + \probofInstance{\pguard}{\false}{\ps} \cdot \ExpRew{\rmdpStmt{\mdpRewSink}{\ps}{\EMPTY}}{\mdpSinkState}\\
   & ~\leq~ 1 + \probofInstance{\pguard}{\true}{\ps} \cdot 
   \ExpRew{\rmdpStmt{\ExpRew{\lambda \rho . \rmdpStmt{\mdpRewSink}{\rho}{\stmt'}}{\mdpSinkState} }{\ps}{\stmt}}{\mdpSinkState} \tag{I.H.}
   \\ & \qquad \qquad
           + \probofInstance{\pguard}{\false}{\ps} \cdot \ExpRew{\rmdpStmt{\mdpRewSink}{\ps}{\EMPTY}}{\mdpSinkState}  \\
   & ~=~ 1 + \probofInstance{\pguard}{\true}{\ps} \cdot \ExpRew{\rmdpStmt{\mdpRewSink}{\ps}{\stmt;\stmt'}}{\mdpSinkState} \tag{\autoref{thm:exprew-seq}}
   \\ & \qquad \qquad
           + \probofInstance{\pguard}{\false}{\ps} \cdot \ExpRew{\rmdpStmt{\mdpRewSink}{\ps}{\EMPTY}}{\mdpSinkState}  \\
   & ~=~ \ExpRew{\rmdpStmt{\mdpRewSink}{\ps}{\ITE{\pguard}{\stmt;\stmt'}{\EMPTY}}}{\mdpSinkState} \tag{\autoref{def:operational}} \\
   & ~=~ \ExpRew{\rmdpStmt{\mdpRewSink}{\ps}{\stmt'}}{\mdpSinkState}
  \end{align*}
  where the last step is immediate, because in the MDP $\rmdpStmt{\mdpRewSink}{\ps}{\stmt'}$, state
  $\mdpState{\stmt'}{\ps}$ has 0 reward and immediately reaches 
  $\mdpState{\ITE{\pguard}{\stmt;\stmt'}{\EMPTY}}{\ps}$.
  \qed
\end{proof}
\begin{lemma}\label{thm:exprew-bounded-loops-geq}
Let $\pguard \in \DExprs$, $\stmt \in \Stmt$, $\mdpRewSink \in \E$, and $\ps \in \ProgramStates$.
Then
  \[ 
     \sup_{k \in \Nats} \ExpRew{\rmdpStmt{\mdpRewSink}{\ps}{\BOUNDEDWHILE{k}{\pguard}{\stmt}}}{\mdpSinkState}   
     \geq
     \ExpRew{\rmdpStmt{\mdpRewSink}{\ps}{\WHILEDO{\pguard}{\stmt}}}{\mdpSinkState}.
  \] 
\end{lemma}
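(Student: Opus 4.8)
The plan is to exploit the close structural relationship between the operational MDPs $\rmdpStmt{\mdpRewSink}{\ps}{\WHILEDO{\pguard}{\stmt}}$ and $\rmdpStmt{\mdpRewSink}{\ps}{\BOUNDEDWHILE{k}{\pguard}{\stmt}}$. Writing $s$ for the common initial state and $T = \{\mdpSinkState\}$, put $V \triangleq \ExpRew{\rmdpStmt{\mdpRewSink}{\ps}{\WHILEDO{\pguard}{\stmt}}}{\mdpSinkState}$ and $V_k \triangleq \ExpRew{\rmdpStmt{\mdpRewSink}{\ps}{\BOUNDEDWHILE{k}{\pguard}{\stmt}}}{\mdpSinkState}$, so that the goal is $\sup_k V_k \geq V$. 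The rules of \autoref{fig:transition-function} show that the two MDPs coincide on the entire portion reachable within $k$ unrollings of the loop; they differ only in that $\BOUNDEDWHILE{k}{\pguard}{\stmt}$ replaces the would-be $(k{+}1)$-st unrolling by a $\HALT$-transition straight into $\mdpSinkState$. Consequently any scheduler $\mdpSched$ of the unbounded loop restricts to a scheduler of $\BOUNDEDWHILE{k}{\pguard}{\stmt}$, and the reward it collects along a path of the bounded MDP equals the reward collected along the matching prefix of the unbounded run, truncated at the $k$-th unrolling. In particular, every finite path that reaches $\mdpSinkState$ in the unbounded MDP within $k$ unrollings is matched, with identical probability and identical cumulative reward, by a genuinely terminating path of $\BOUNDEDWHILE{k}{\pguard}{\stmt}$.

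First I would dispose of the trivial case: if $V_k = \infty$ for some $k$, then $\sup_k V_k = \infty \geq V$ and we are done. So assume $V_k < \infty$ for all $k$. Since on its true branch $\BOUNDEDWHILE{1}{\pguard}{\stmt}$ runs $\stmt;\HALT$, finiteness of $V_1$ forces the loop body $\stmt$ to reach $\mdpSinkState$ surely and with finite expected reward under every scheduler; equivalently, no run of $\stmt$ diverges with positive probability. Hence the only way an unbounded run can fail to reach $\mdpSinkState$ is to perform infinitely many outer iterations, and each such iteration contributes at least one unit of reward (for evaluating the guard). This is what makes truncated rewards grow without bound along diverging runs.

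Now I would split according to the two clauses defining $\ExpRew$. If $\inf_{\mdpSched}\mdpPr{\rmdpMc}{s \models \mdpEventually T} = 1$, then $V = \sup_{\mdpSched}\sum_{\fpath \in \mdpSPaths{s}{T}}\mdpPr{\rmdpMc}{\fpath}\cdot\mdpRew(\fpath)$ and, under each $\mdpSched$, almost every run reaches $T$ after finitely many unrollings. Given $\mdpSched$ and $\varepsilon > 0$, I would pick a \emph{finite} subfamily $\Pi_0 \subseteq \mdpSPaths{s}{T}$ whose partial sum is within $\varepsilon$ of $\sum_{\fpath}\mdpPr{\rmdpMc}{\fpath}\mdpRew(\fpath)$ (or exceeds $1/\varepsilon$ if the latter is infinite); all paths in $\Pi_0$ terminate within some common bound $k_0$, so the mimicking scheduler on $\BOUNDEDWHILE{k_0}{\pguard}{\stmt}$ reproduces exactly these paths, giving $V_{k_0} \geq \sum_{\fpath \in \Pi_0}\mdpPr{\rmdpMc}{\fpath}\mdpRew(\fpath)$. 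Letting $\varepsilon \to 0$ and taking the supremum over $\mdpSched$ yields $\sup_k V_k \geq V$. If instead $\inf_{\mdpSched}\mdpPr{\rmdpMc}{s \models \mdpEventually T} < 1$, then $V = \infty$; choosing $\mdpSched$ with divergence probability $q > 0$, the previous paragraph guarantees that divergence proceeds through infinitely many outer iterations, so the family of runs not yet terminated after $k$ unrollings has probability at least $q$ and is mapped by the mimicking scheduler to paths of $\BOUNDEDWHILE{k}{\pguard}{\stmt}$ that hit the $\HALT$-cap having already accumulated reward at least $k$. Thus $V_k \geq q\,k \to \infty$, and again $\sup_k V_k = \infty = V$.

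The technical heart, and the main obstacle, is the interaction with the two-case definition of expected reward: one must ensure that the truncations increase to the true cumulative reward $\mdpRew(\ipath,\mdpEventually T)$ even on diverging runs, where this value is $\infty$. This is exactly where the standing assumption $V_k < \infty$ is indispensable, since it rules out body divergence and thereby guarantees that every diverging run contributes unboundedly many reward-bearing guard evaluations. The scheduler-correspondence bookkeeping (restricting $\mdpSched$ to the truncated MDP and matching path probabilities and rewards) is routine and can be organized by induction on $k$ with the help of the decomposition Lemma~\ref{thm:exprew-seq}, exactly as in the companion Lemma~\ref{thm:exprew-bounded-loops-leq}; alternatively the entire limiting step can be packaged through the Monotone Convergence Theorem applied to the truncated rewards.
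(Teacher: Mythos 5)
Your proof is correct, and its engine is the same as the paper's: every terminating path of the MDP of $\WHILEDO{\pguard}{\stmt}$ that uses at most $k$ unrollings is reproduced, with identical probability and cumulative reward, in the MDP of $\BOUNDEDWHILE{k}{\pguard}{\stmt}$, and taking the supremum over $k$ then recovers the whole path sum. The paper establishes this correspondence by an explicit induction on the number of visits to states with first component $\WHILEDO{\pguard}{\stmt}$, whereas you obtain it by inspection of the transition rules together with an $\varepsilon$-approximation by finite families of terminating paths; these are interchangeable pieces of bookkeeping. Where you genuinely differ from --- and in fact improve on --- the paper's proof is in respecting the two-clause definition of $\ExpRew{\rmdp}{s \models \mdpEventually T}$. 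The paper matches only paths that pass through some state $\mdpState{\Terminated}{\ps'}$ and implicitly assumes that the expected reward of the unbounded loop is given by the sup-of-sums clause; it says nothing about the case $\inf_{\mdpSched}\,\mdpPr{\rmdpMc}{s \models \mdpEventually T} < 1$, in which the right-hand side equals $\infty$ by definition and path matching alone proves nothing. Your argument for that case --- once body divergence is excluded, a run that diverges with probability $q > 0$ must perform infinitely many unrollings, each passing a guard-evaluation state of reward $1$, so your $V_k$ satisfies $V_k \geq q \cdot k \to \infty$ --- is exactly the missing piece, and it is what the paper's proof silently skips. One small repair is needed: you derive the absence of body divergence from $V_1 < \infty$ alone, but $V_1 < \infty$ only constrains runs of the body started in the initial state $\ps$; divergence of the body inside the $j$-th iteration, from states reached there, is only excluded by $V_{j+1} < \infty$. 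Since you assume $V_k < \infty$ for \emph{all} $k$ anyway (and your closing paragraph invokes precisely this standing assumption), the fix is merely to cite the right index rather than $V_1$.
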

\begin{proof}
  Let $\ipath$ be a path in the MDP $\rmdpStmt{\mdpRewSink}{\ps}{\stmt'}$ with $\mdpRew(\ipath) > 0$ 
  starting in $\mdpState{\stmt'}{\ps}$.
  Then, there exists a finite prefix $\fpath$ of $\ipath$ reaching a state $\mdpState{\Terminated}{\ps'}$
  for some $\ps' \in \ProgramStates$ with $\mdpRew(\ipath) = \mdpRew(\fpath)$.
  Since $\fpath$ is finite, only finitely many states with first component $\stmt'$, say $k$, 
  are visited. 
  We show that a corresponding path $\fpath'$ with 
  $\mdpRew(\fpath) = \mdpRew(\fpath')$
  exists in the MDP
  $\rmdpStmt{\mdpRewSink}{\ps}{\stmt_k}$ by induction over $k \geq 1$.
  
  \paragraph{Induction Base} $k = 1$.
  The only finite path $\fpath$ in the MDP $\rmdpStmt{\mdpRewSink}{\ps}{\stmt'}$ reaching a state with first component $\Terminated$ is
  \[ \fpath = \mdpState{\stmt'}{\ps} \mdpState{\ITE{\pguard}{\stmt;\stmt'}{\EMPTY}}{\ps}
              \mdpState{\EMPTY}{\ps} \mdpState{\Terminated}{\ps}
  \]
  The corresponding path in the MDP $\rmdpStmt{\mdpRewSink}{\ps}{\stmt_k}$ is
  \[
     \fpath' = \mdpState{\stmt_k}{\ps} \mdpState{\EMPTY}{\ps} \mdpState{\Terminated}{\ps}
  \]
  with $\mdpRew(\fpath') = \mdpRew(\fpath) = 1 + \mdpRewSink(\ps)$.
  
  \paragraph{Induction Hypothesis}
  For every path $\fpath$ reaching a state with first component $\Terminated$ and positive reward in the MDP $\rmdpStmt{\mdpRewSink}{\ps}{\stmt'}$
  visiting $k > 1$ (for an arbitrary, but fixed $k$) states with first component $\stmt'$, there exists a path
  $\fpath'$ reaching a state with first component $\Terminated$ in the MDP $\rmdpStmt{\mdpRewSink}{\ps}{\stmt_k}$ with
  $\mdpRew(\fpath') = \mdpRew(\fpath)$.
  
  \paragraph{Induction Step} $k \mapsto k+1$.
  Every finite path $\fpath$ in the MDP $\rmdpStmt{\mdpRewSink}{\ps}{\stmt'}$ as described above
  is of the form 
  \[ \fpath = \mdpState{\stmt'}{\ps} \mdpState{\ITE{\pguard}{\stmt;\stmt'}{\EMPTY}}{\ps}
              \mdpState{\stmt;\stmt'}{\ps} ... \mdpState{\stmt'}{\ps'} ... \mdpState{\Terminated}{\ps''}
  \]
  such that $k$ states with first component $\stmt'$ are visited starting from state
  $\mdpState{\stmt'}{\ps'}$.
  Let $\fpath_2$ be the path starting in this state.
  By I.H. there exists a corresponding path $\fpath_2'$ in the MDP 
  $\rmdpStmt{\mdpRewSink}{\ps'}{\stmt_k}$ with 
  $\mdpRew(\fpath_2) = \mdpRew(\fpath_2')$.
  Now,
  \[ \fpath' = 
     \mdpState{\stmt_{k+1}}{\ps} 
     \ldots
     \mdpState{\Terminated;\stmt_k}{\ps'}
     \fpath_2'
  \]
  is a path in $\rmdpStmt{\mdpRewSink}{\ps'}{\stmt_{k+1}}$ with 
  $\mdpRew(\fpath') = \mdpRew(\fpath)$.

  Hence, for every path $\fpath$ with positive reward in the MDP $\rmdpStmt{\mdpRewSink}{\ps}{\stmt'}$,
  there exists a corresponding path $\fpath'$ in the MDP $\rmdpStmt{\mdpRewSink}{\ps}{\stmt_k}$ for some $k \in \Nats$.
  Thus, we include all paths with positive reward in the MDP $\rmdpStmt{\mdpRewSink}{\ps}{\stmt'}$ by taking the supremum
  of the expected reward of the MDPs $\rmdpStmt{\mdpRewSink}{\ps}{\stmt_k}$ over $k \in \Nats$. 
  \qed
\end{proof}
We are now in a position to show soundness of $\eet{C}{f}$ with respect to $\rmdpStmt{\mdpRewSink}{\ps_0}{\stmt}$ by induction on the structure of $C$, 
where the proof for the case of the while loop reduces basically to application of \autoref{thm:exprew-bounded-loops-leq} and \autoref{thm:exprew-bounded-loops-geq}.
\begingroup
\def\thetheorem{\ref{thm:eet-soundness}}
\begin{theorem}[Soundness of \eetsymbol]
Let $\pguard \in \DExprs$, $\stmt \in \Stmt$, and $\mdpRewSink \in \E$. Then, for each $\ps \in \ProgramStates$, we have
  \[ 
      \ExpRew{\rmdpStmt{\mdpRewSink}{\ps}{\stmt}}{\mdpSinkState} ~=~ \eet{\stmt}{\mdpRewSink}(\ps)~.
  \]
\end{theorem}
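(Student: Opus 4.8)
The plan is to prove the identity by induction on the structure of $\stmt$, handling the atomic and compositional constructs by direct inspection of the operational MDP and reserving the real work for the loop. For the atomic programs I would read the admissible paths off the rules in \autoref{fig:transition-function} together with the rewards in \autoref{table:rew-rules}: from $\mdpState{\EMPTY}{\ps}$ the chain moves to $\mdpState{\Terminated}{\ps}$ (reward $0$) and collects $\mdpRewSink(\ps)$, giving $\mdpRewSink(\ps)$; from $\mdpState{\SKIP}{\ps}$ it first collects $1$, giving $1+\mdpRewSink(\ps)$; from $\mdpState{\HALT}{\ps}$ it jumps straight to $\mdpSinkState$, giving $0$; and for $\APPASSIGN{x}{\pexpr}$ the state has reward $1$ and branches to $\mdpState{\Terminated}{\ps\subst{x}{v}}$ with probability $\probofInstance{\pexpr}{v}{\ps}$, giving $1+\sum_v \probofInstance{\pexpr}{v}{\ps}\cdot\mdpRewSink(\ps\subst{x}{v})$. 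Each of these matches \autoref{table:eet-rules}.

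For the compositional constructs the key tool is the decomposition \autoref{thm:exprew-seq}. For $\COMPOSE{C_1}{C_2}$ it rewrites the expected reward as $\ExpRew{\rmdpStmt{g}{\ps}{C_1}}{\mdpSinkState}$ with $g=\ExpRew{\lambda\rho.\rmdpStmt{\mdpRewSink}{\rho}{C_2}}{\mdpSinkState}$; the induction hypothesis on $C_2$ identifies $g$ with $\eet{C_2}{\mdpRewSink}$, and the hypothesis on $C_1$ instantiated with this continuation yields $\eet{C_1}{\eet{C_2}{\mdpRewSink}}(\ps)$. For $\ITE{\pguard}{C_1}{C_2}$ the initial ($1$-reward) state branches with probabilities $\probofInstance{\pguard}{\true}{\ps}$ and $\probofInstance{\pguard}{\false}{\ps}$ into two disjoint sub-MDPs, so by \autoref{def:operational} the reward factors as $1+\probofInstance{\pguard}{\true}{\ps}\cdot\eet{C_1}{\mdpRewSink}(\ps)+\probofInstance{\pguard}{\false}{\ps}\cdot\eet{C_2}{\mdpRewSink}(\ps)$ once the two hypotheses are applied. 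For $\NDCHOICE{C_1}{C_2}$ the sole non-determinism is the first $\actL/\actR$ action at a $0$-reward state; since the expected reward maximises over schedulers, the optimum selects the larger branch, giving $\max\{\eet{C_1}{\mdpRewSink}(\ps),\eet{C_2}{\mdpRewSink}(\ps)\}$.

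The loop $\WHILEDO{\pguard}{C'}$ is the crux, and I would route both sides through the finite unrollings $C_k=\BOUNDEDWHILE{k}{\pguard}{C'}$. On the transformer side \autoref{thm:eet-bounded-loops} gives $\eet{\WHILEDO{\pguard}{C'}}{\mdpRewSink}=\sup_k\eet{C_k}{\mdpRewSink}$, and on the operational side \autoref{thm:exprew-bounded-loops-leq} and \autoref{thm:exprew-bounded-loops-geq} jointly pin $\ExpRew{\rmdpStmt{\mdpRewSink}{\ps}{\WHILEDO{\pguard}{C'}}}{\mdpSinkState}$ to $\sup_k\ExpRew{\rmdpStmt{\mdpRewSink}{\ps}{C_k}}{\mdpSinkState}$. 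It then remains to match the two suprema term by term, i.e.\ to show $\ExpRew{\rmdpStmt{\mdpRewSink}{\ps}{C_k}}{\mdpSinkState}=\eet{C_k}{\mdpRewSink}(\ps)$ for every $k$. Since the $C_k$ are \emph{not} syntactic subterms of $\WHILEDO{\pguard}{C'}$, I would prove this by a separate induction on $k$: the base $C_0=\HALT$ is an atomic case, and the step unfolds $C_{k+1}=\ITE{\pguard}{\COMPOSE{C'}{C_k}}{\EMPTY}$ by re-running the conditional, sequential-composition and $\EMPTY$ arguments, supplying the required component equalities from the outer structural hypothesis on the body $C'$ and the inner hypothesis on $C_k$.

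I expect the loop to be the principal obstacle, for two reasons. First, the term-by-term matching is needed precisely because the unrollings escape the structural induction, so the auxiliary induction on $k$ must be arranged to appeal only to already-established cases. Second, every supremum must be handled in $\Rposinf$, allowing the value $\infty$; this is exactly where \autoref{thm:exprew-bounded-loops-geq} does the delicate work of exhibiting, for each reward-bearing path of the full loop, a matching path in some sufficiently deep unrolling, whereas \autoref{thm:exprew-bounded-loops-leq} and the path factorisation through the intermediate states $\mdpState{\COMPOSE{\Terminated}{C_2}}{\ps'}$ underlying \autoref{thm:exprew-seq} are comparatively routine but must be secured first.
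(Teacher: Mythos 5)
Your proposal is correct and follows essentially the same route as the paper's proof: structural induction with direct path analysis for the atomic cases, the decomposition lemma (\autoref{thm:exprew-seq}) for sequential composition, scheduler-maximisation for $\NDCHOICE{C_1}{C_2}$, and the loop handled by matching $\sup_k$ over the bounded unrollings via \autoref{thm:eet-bounded-loops} on the transformer side and \autoref{thm:exprew-bounded-loops-leq}/\autoref{thm:exprew-bounded-loops-geq} on the operational side. Your explicit nested induction on $k$ for the term-by-term equality $\ExpRew{\rmdpStmt{\mdpRewSink}{\ps}{\BOUNDEDWHILE{k}{\pguard}{\stmt}}}{\mdpSinkState} = \eet{\BOUNDEDWHILE{k}{\pguard}{\stmt}}{\mdpRewSink}(\ps)$ is in fact a slightly cleaner framing than the paper's, which labels that step simply ``I.H.'' even though the unrollings are not syntactic subterms.
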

\addtocounter{theorem}{-1}
\endgroup
\begin{proof}
  We have to show that 
  \[ 
      \ExpRew{\rmdpStmt{\mdpRewSink}{\ps}{\stmt}}{\mdpSinkState} = \eet{\stmt}{\mdpRewSink}(\ps)
  \]
  holds for each $\stmt \in \Stmt$, $\ps \in \ProgramStates$ and $\mdpRewSink \in \E$.
  We prove the claim by induction on the structure of program statements $\stmt \in \Stmt$.
  The base cases are $\stmt = \EMPTY$, $\stmt = \SKIP$ $\stmt = \ASSIGN{x}{\pexpr}$ and $\stmt = \HALT$.
  
  \paragraph{The empty program} $\stmt = \EMPTY$.
   
  The MDP $\rmdpStmt{\mdpRewSink}{\ps}{\EMPTY}$ contains exactly one infinite
    path $\ipath$ which has the following form:
    
    \begin{tikzpicture}[->,>=stealth',shorten >=1pt,node distance=2.5cm,semithick,minimum size=2cm]
\tikzstyle{every state}=[draw=none]
  \draw[white, use as bounding box] (-1.6,-0.85) rectangle (6.5,.55);
  
   \node [state, initial, initial text=,] (init) {$\mdpState{\EMPTY}{\ps}$};  
   \node (initLabel) [node distance=0.5cm,gray,below of=init] {$0$};
   
   \node [state] (exit) [right of=init] {$\mdpState{\Terminated}{\ps}$};
   \node (exitLabel) [node distance=0.5cm,gray,below of=exit] {$\mdpRewSink(\ps)$};
   
   \node [state] (sink) [right of=exit] {$\mdpSinkState$};
   \node (sinkLabel) [node distance=0.5cm,gray,below of=sink] {$0$};

  \path [] 
      (init) edge [] node [above=-0.8cm] {\scriptsize{$1$}} (exit)
      (exit) edge [] node [above=-0.8cm] {\scriptsize{$1$}} (sink)
      (sink) edge [loop right] node [right=-0.8cm] {\scriptsize{$1$}} (sink)
  ;
\end{tikzpicture}
    
    
   Moreover, $\mdpSPaths{\mdpState{\EMPTY}{\ps}}{\mdpSinkState} = \{ \mdpState{\EMPTY}{\ps} \mdpState{\Terminated}{\ps} \mdpSinkState \}$.
   Thus, we have
   \begin{align*}
    & \ExpRew{\rmdpStmt{\mdpRewSink}{\ps}{\EMPTY}}{\mdpSinkState} \\
    & ~=~ \sup_{\mdpSched} \sum_{\fpath \in \mdpSPaths{\mdpState{\EMPTY}{\ps}}{\mdpSinkState}} \mdpPr{}{\fpath} \cdot \mdpRew(\fpath) \\
    & ~=~ \sup_{\mdpSched} (1 \cdot (\mdpRew(\mdpState{\EMPTY}{\ps} \mdpState{\Terminated}{\ps} \mdpSinkState)))  \\
    & ~=~ 1 \cdot (0+\mdpRewSink(\ps)+0) = \mdpRewSink(\ps) \\
    & ~=~ \eet{\EMPTY}{\mdpRewSink}(\ps).
   \end{align*}
  
  \paragraph{The effectless time-consuming program} $\stmt = \SKIP$.
  
  The MDP $\rmdpStmt{\mdpRewSink}{\ps}{\SKIP}$ contains exactly one infinite $\ipath$ which has the following form:
  
    \begin{tikzpicture}[->,>=stealth',shorten >=1pt,node distance=2.5cm,semithick,minimum size=2cm]
\tikzstyle{every state}=[draw=none]
  \draw[white, use as bounding box] (-1.6,-0.85) rectangle (6.5,.55);
  
   \node [state, initial, initial text=,] (init) {$\mdpState{\SKIP}{\ps}$};  
   \node (initLabel) [node distance=0.5cm,gray,below of=init] {$1$};
   
   \node [state] (exit) [right of=init] {$\mdpState{\Terminated}{\ps}$};
   \node (exitLabel) [node distance=0.5cm,gray,below of=exit] {$\mdpRewSink(\ps)$};
   
   \node [state] (sink) [right of=exit] {$\mdpSinkState$};
   \node (sinkLabel) [node distance=0.5cm,gray,below of=sink] {$0$};

  \path [] 
      (init) edge [] node [above=-0.8cm] {\scriptsize{$1$}} (exit)
      (exit) edge [] node [above=-0.8cm] {\scriptsize{$1$}} (sink)
      (sink) edge [loop right] node [right=-0.8cm] {\scriptsize{$1$}} (sink)
  ;
\end{tikzpicture}
    
   Moreover, $\mdpSPaths{\mdpState{\SKIP}{\ps}}{\mdpSinkState} = \{ \mdpState{\SKIP}{\ps} \mdpState{\Terminated}{\ps} \mdpSinkState \}$.
   Thus, we have
   \begin{align*}
    & \ExpRew{\rmdpStmt{\mdpRewSink}{\ps}{\SKIP}}{\mdpSinkState} \\
    & ~=~ \sup_{\mdpSched} \sum_{\fpath \in \mdpSPaths{\mdpState{\SKIP}{\ps}}{\mdpSinkState}} \mdpPr{}{\fpath} \cdot \mdpRew(\fpath) \\
    & ~=~ \sup_{\mdpSched} (1 \cdot (\mdpRew(\mdpState{\SKIP}{\ps} \mdpState{\Terminated}{\ps} \mdpSinkState)))  \\
    & ~=~ 1 \cdot (1+\mdpRewSink(\ps)+0) = 1 + \mdpRewSink(\ps) \\
    & ~=~ \eet{\SKIP}{\mdpRewSink}(\ps).
   \end{align*}

   \paragraph{The probabilistic assignment} $\stmt = \ASSIGN{x}{\pexpr}$.
   
   For some $n \in \Nats$, the MDP $\rmdpStmt{\mdpRewSink}{\ps}{\ASSIGN{x}{\pexpr}}$ is of the following form:
   
   \begin{tikzpicture}[->,>=stealth',shorten >=1pt,node distance=2.7cm,semithick,minimum size=1cm]
\tikzstyle{every state}=[draw=none]
  
   \node [state, initial, initial text=] (init) {$\mdpState{\ASSIGN{x}{\pexpr}}{\ps}$};  
   \node [gray] (initLabel) [node distance=0.5cm, below of = init] {$1$};  
   
   \node [state] (p) [right of = init] {$\mdpState{\Terminated}{\ps\subst{x}{v_1}}$};
   \node [gray] (pLabel) [node distance=0.5cm, below of = p] {$\mdpRewSink(\ps\subst{x}{v_1})$};

   \node [state] (q) [below of = p] {$\mdpState{\Terminated}{\ps\subst{x}{v_n}}$};
   \node [gray] (qLabel) [node distance=0.5cm, below of = q] {$\mdpRewSink(\ps\subst{x}{v_n})$};

   \node [state] (sink) [right of=p] {$\mdpSinkState$};
   \node (sinkLabel) [node distance=0.5cm,gray,below of=sink] {$0$};
      
    \node [] (dummy1) [node distance=1.3cm, below of = p] {$\vdots$};
  \path [] 
      (init) edge [] node [above] {\scriptsize{$p_1$}} (p)
      (init) edge [] node [above] {\scriptsize{$p_n$}} (q)
      (p) edge [] node [above=-0.8cm] {\scriptsize{$1$}} (sink)
      (q) edge [] node [above=-0.8cm] {\scriptsize{$1$}} (sink)
      (sink) edge [loop right] node [right=-0.8cm] {\scriptsize{$1$}} (sink)
  ;
\end{tikzpicture}
   
   Hence, it consists of paths
   \[ \ipath = \mdpState{\ASSIGN{x}{\pexpr}}{\ps} \mdpState{\Terminated}{\ps\subst{x}{v_i}} \mdpSinkState \mdpSinkState \ldots \]
   with $\mdpPr{}{\ipath} = p_i$ for each $v_i \in \Vals$ with $\sem{\pguard}(\ps)(v_i) = p_i > 0$ and
   \[ \sum_{k=1}^{n} p_{k} = 1. \]
   Moreover,
   \[
      \mdpSPaths{\mdpState{\ASSIGN{x}{\pexpr}}{\ps}}{\mdpSinkState} = \{ \fpath_i ~|~ \fpath_i =   \mdpState{\ASSIGN{x}{\pexpr}}{\ps} \mdpState{\Terminated}{\ps\subst{x}{v_i}} \mdpSinkState \}.
   \]
  For each $\fpath_{i} \in \mdpSPaths{\mdpState{\ASSIGN{x}{\pexpr}}{\ps}}{\mdpSinkState}$, $\mdpRew(\fpath_i) = 1 + \mdpRewSink\subst{x}{v_i}$.
  Thus, we have 
   \begin{align*}
    & \ExpRew{\rmdpStmt{\mdpRewSink}{\ps}{\ASSIGN{x}{\pexpr}}}{\mdpSinkState} \\
    & ~=~ \sup_{\mdpSched} \sum_{\fpath_{i} \in \mdpSPaths{\mdpState{\ASSIGN{x}{\pexpr}}{\ps}}{\mdpSinkState}} \mdpPr{}{\fpath_{i}} \cdot \mdpRew(\fpath_{i}) \\
    & ~=~ \sum_{\fpath_{i} \in \mdpSPaths{\mdpState{\ASSIGN{x}{\pexpr}}{\ps}}{\mdpSinkState}} \mdpPr{}{\fpath_{i}} \cdot \mdpRew(\fpath_{i}) \\
    & ~=~ \sum_{\fpath_{i} \in \mdpSPaths{\mdpState{\ASSIGN{x}{\pexpr}}{\ps}}{\mdpSinkState}} \mdpPr{}{\fpath_{i}} \cdot (1 + \mdpRewSink\subst{x}{v_i})  \\
    & ~=~ 1 + \sum_{\fpath_{i} \in \mdpSPaths{\mdpState{\ASSIGN{x}{\pexpr}}{\ps}}{\mdpSinkState}} \mdpPr{}{\fpath_{i}} \cdot \mdpRewSink\subst{x}{v_i}  \\
    & ~=~ 1 + \sum_{\fpath_{i} \in \mdpSPaths{\mdpState{\ASSIGN{x}{\pexpr}}{\ps}}{\mdpSinkState}} \sem{\pguard}(\ps)(v_i) \cdot \mdpRewSink\subst{x}{v_i}  \\
    & ~=~ 1 + \ev{\sem{\pguard}(\ps)}{\lambda v_i. \mdpRewSink\subst{x}{v_i}} \\
    & ~=~ \eet{\ASSIGN{x}{E}}{\mdpRewSink}(\ps)
   \end{align*}

  \paragraph{The faulty program} $\stmt = \HALT$.
  
    The MDP $\rmdpStmt{\mdpRewSink}{\ps}{\HALT}$ is of the following form:
    
    \begin{tikzpicture}[->,>=stealth',shorten >=1pt,node distance=2.5cm,semithick,minimum size=2cm]
\tikzstyle{every state}=[draw=none]
  \draw[white, use as bounding box] (-1.6,-0.85) rectangle (6.5,.55);
  
   \node [state, initial, initial text=,] (init) {$\mdpState{\HALT}{\ps}$};  
   \node (initLabel) [node distance=0.5cm,gray,below of=init] {$0$};
   
   \node [state] (sink) [right of=init] {$\mdpSinkState$};
   \node (sinkLabel) [node distance=0.5cm,gray,below of=sink] {$0$};

  \path [] 
      (init) edge [] node [above=-0.8cm] {\scriptsize{$1$}} (sink)
      (sink) edge [loop right] node [right=-0.8cm] {\scriptsize{$1$}} (sink)
  ;
\end{tikzpicture}

   Hence, $\mdpSPaths{\mdpState{\HALT}{\ps}}{\mdpSinkState} = \{ \mdpState{\HALT}{\ps} \mdpSinkState \}$.
   Thus, we have
   \begin{align*}
    & \ExpRew{\rmdpStmt{\mdpRewSink}{\ps}{\HALT}}{\mdpSinkState} \\
    & ~=~ \sup_{\mdpSched} \sum_{\fpath \in \mdpSPaths{\mdpState{\HALT}{\ps}}{\mdpSinkState}} \mdpPr{}{\fpath} \cdot \mdpRew(\fpath) \\
    & ~=~ \sup_{\mdpSched} (1 \cdot (\mdpRew(\mdpState{\HALT}{\ps} \mdpSinkState)))  \\
    & ~=~ 1 \cdot (0+0) = 0 \\
    & ~=~ \eet{\HALT}{\mdpRewSink}(\ps).
   \end{align*}

  \paragraph{Induction Hypothesis:} 
    For all (substatements) $\stmt' \in \Stmt$ of $\stmt$ and $\ps \in \ProgramStates$ and $\mdpRewSink : \ProgramStates \to \E$, we have
   \[ 
      \ExpRew{\rmdpStmt{\mdpRewSink}{\ps}{\stmt'}}{\mdpSinkState} = \eet{\stmt'}{\mdpRewSink}(\ps).
   \]

   For the induction step, we have to consider sequential composition, conditionals, non--deterministic choice
   and loops.
   
  \paragraph{The sequential composition} $\stmt = \stmt_1;\stmt_2$.
%
%
%
  \begin{align*}
       & \ExpRew{\rmdpStmt{\mdpRewSink}{\ps}{\stmt_1;\stmt_2}}{\mdpSinkState} \\
   ~=~ & \ExpRew{\rmdpStmt{\ExpRew{\lambda \rho . \rmdpStmt{\mdpRewSink}{\rho}{\stmt_2}}{\mdpSinkState}}{\ps}{\stmt_1}}{\mdpSinkState} \tag{\autoref{thm:exprew-seq}} \\   
   ~=~& \ExpRew{\rmdpStmt{\lambda \rho . \eet{\stmt_2}{\mdpRewSink}(\rho)}{\ps}{\stmt_1}}{\mdpSinkState} \tag{I.H. on $\stmt_2$} \\   
   ~=~& \ExpRew{\rmdpStmt{\eet{\stmt_2}{\mdpRewSink}}{\ps}{\stmt_1}}{\mdpSinkState}  \\   
   ~=~& \eet{\stmt_1}{\eet{\stmt_2}{\mdpRewSink}}(\ps) \tag{I.H. on $\stmt_1$} \\
   ~=~& \eet{\stmt_1;\stmt_2}{\mdpRewSink}(\ps).
  \end{align*}

  \paragraph{The conditional} $\stmt = \ITE{\pguard}{\stmt_1}{\stmt_2}$.
  
  The MDP $\rmdpStmt{\mdpRewSink}{\ps}{\ITE{\pguard}{\stmt_1}{\stmt_2}}$ is of the following form:
  
  \begin{tikzpicture}[->,>=stealth',shorten >=1pt,node distance=2.7cm,semithick,minimum size=1cm]
\tikzstyle{every state}=[draw=none]
  \draw[white, use as bounding box] (-2.5,-3.5) rectangle (8.8,.8);

   \node [state, initial, initial text=,] (init) {$\mdpState{\ITE{\pguard}{\stmt_1}{\stmt_2}}{\ps}$};  
   \node (initLabel) [node distance=0.5cm,gray,below of=init] {$1$};

   \node [state,] (p) [node distance = 4.7cm, right of = init] {$\mdpState{\stmt_1}{\ps}$};
   \node (pLabel) [node distance=0.5cm,gray,below of=p] {$\mdpRew(\mdpState{\stmt_1}{\ps})$};
   
   \node [state,] (q) [below of = init] {$\mdpState{\stmt_2}{\ps}$};
   \node (qLabel) [node distance=0.5cm,gray,below of=q] {$\mdpRew(\mdpState{\stmt_2}{\ps})$};
   
    \node [] (dummy1) [on grid, right =2cm of p] {$\ldots$};
    \node [] (dummy2) [on grid, right =2cm of q] {$\ldots$};
  \path [] 
      (init) edge [] node [below=-0.8cm] {\scriptsize{$\probof{\pguard}{\true}$}} (p)
      (initLabel) edge [] node [right] {\scriptsize{$\probof{\pguard}{\false}$}} (q)
      (p) edge [decorate,decoration={snake, post length=2mm}] (dummy1)
      (q) edge [decorate,decoration={snake, post length=2mm}] (dummy2)
  ;
\end{tikzpicture}
  
  Thus, every path of the MDP $\rmdpStmt{\mdpRewSink}{\ps}{\ITE{\pguard}{\stmt_1}{\stmt_2}}$ starting in the initial state $\mdpState{\ITE{\pguard}{\stmt_1}{\stmt_2}}{\ps}$
  either reaches $\mdpState{\stmt_1}{\ps}$ with probability $\probofInstance{\pguard}{\true}{\ps}$ or $\mdpState{\stmt_2}{\ps}$ with probability $\probofInstance{\pguard}{\false}{\ps}$.
  These states are the initial states of the MDPs $\rmdpStmt{\mdpRewSink}{\ps}{\stmt_1}$ and $\rmdpStmt{\mdpRewSink}{\ps}{\stmt_2}$, respectively.
  Hence, 
  \begin{align*}
   & \ExpRew{\rmdpStmt{\mdpRewSink}{\ps}{\ITE{\pguard}{\stmt_1}{\stmt_2}}}{\mdpSinkState} \\
   & ~=~ 1 + \probofInstance{\pguard}{\true}{\ps} \cdot \ExpRew{\rmdpStmt{\mdpRewSink}{\ps}{\stmt_1}}{\mdpSinkState} \\
   & \qquad\qquad + \probofInstance{\pguard}{\false}{\ps} \cdot \ExpRew{\rmdpStmt{\mdpRewSink}{\ps}{\stmt_2}}{\mdpSinkState}  \\
   & ~=~ 1 + \probofInstance{\pguard}{\true}{\ps} \cdot \eet{\stmt_1}{\mdpRewSink}(\ps) 
           + \probofInstance{\pguard}{\false}{\ps} \cdot \eet{\stmt_2}{\mdpRewSink}(\ps) \tag{I.H.} \\
   & ~=~ \eet{\ITE{\pguard}{\stmt_1}{\stmt_2}}{\mdpRewSink}(\ps).
  \end{align*}  
  
 \paragraph{The non--deterministic choice} $\stmt = \NDCHOICE{\stmt_1}{\stmt_2}$.
 
 \begin{tikzpicture}[->,>=stealth',shorten >=1pt,node distance=2.7cm,semithick,minimum size=1cm]
\tikzstyle{every state}=[draw=none]
  \draw[white, use as bounding box] (-2.5,-3.5) rectangle (8.8,.8);

   \node [state, initial, initial text=,] (init) {$\mdpState{\NDCHOICE{\stmt_1}{\stmt_2}}{\ps}$};  
   \node (initLabel) [node distance=0.5cm,gray,below of=init] {$1$};

   \node [state,] (p) [node distance = 4.7cm, right of = init] {$\mdpState{\stmt_1}{\ps}$};
   \node (pLabel) [node distance=0.5cm,gray,below of=p] {$\mdpRew(\mdpState{\stmt_1}{\ps})$};
   
   \node [state,] (q) [below of = init] {$\mdpState{\stmt_2}{\ps}$};
   \node (qLabel) [node distance=0.5cm,gray,below of=q] {$\mdpRew(\mdpState{\stmt_2}{\ps})$};
   
    \node [] (dummy1) [on grid, right =2cm of p] {$\ldots$};
    \node [] (dummy2) [on grid, right =2cm of q] {$\ldots$};
  \path [] 
      (init) edge [] node [below=-0.8cm] {\scriptsize{$\actL, 1$}} (p)
      (initLabel) edge [] node [right] {\scriptsize{$\actR, 1$}} (q)
      (p) edge [decorate,decoration={snake, post length=2mm}] (dummy1)
      (q) edge [decorate,decoration={snake, post length=2mm}] (dummy2)
  ;
\end{tikzpicture}
 
 Every path of the MDP $\rmdpStmt{\mdpRewSink}{\ps}{\NDCHOICE{\stmt_1}{\stmt_2}}$ starting in the initial state $\mdpState{\NDCHOICE{\stmt_1}{\stmt_2}}{\ps}$ either reaches
 $\mdpState{\stmt_1}{\ps}$ by taking action $\actL$ or $\mdpState{\stmt_2}{\ps}$ by taking action $\actR$ with probability 1.
 Hence,
 \begin{align*}
   & \ExpRew{\rmdpStmt{\mdpRewSink}{\ps}{\NDCHOICE{\stmt_1}{\stmt_2}}}{\mdpSinkState} \\
   & ~=~  \sup_{\mdpSched} \{ \ExpRew{\rmdpStmt{\mdpRewSink}{\ps}{\stmt_1}}{\mdpSinkState}, \ExpRew{\rmdpStmt{\mdpRewSink}{\ps}{\stmt_2}}{\mdpSinkState} \} \\
   & ~=~  \sup_{\mdpSched} \{ \eet{\stmt_1}{\mdpRewSink}(\ps), \eet{\stmt_2}{\mdpRewSink}(\ps) \} \tag{I.H.} \\
   & ~=~  \max \{ \eet{\stmt_1}{\mdpRewSink}(\ps), \eet{\stmt_2}{\mdpRewSink}(\ps) \} \\
   & ~=~  \eet{\NDCHOICE{\stmt_1}{\stmt_2}}{\mdpRewSink}(\ps).
 \end{align*}

 \paragraph{The loop} $\stmt = \WHILEDO{\pguard}{\stmt'}$.
 
 For any natural number $k \geq 1$ and $\ps \in \ProgramStates$, we have
 \begin{align*}
    & \eet{\WHILE^{<k}~\pguard~\Do~\stmt'}{\mdpRewSink}(\ps) \\
    & ~=~ \eet{\ITE{\pguard}{\stmt';\BOUNDEDWHILE{k-1}{\pguard}{\stmt'}}{\EMPTY}}{\mdpRewSink}(\ps) \\
    & ~=~ 1 + \probofInstance{\pguard}{\true}{\ps} \cdot \eet{\stmt';\BOUNDEDWHILE{k-1}{\pguard}{\stmt'}}{\mdpRewSink}(\ps) \\
    & \qquad\qquad+  \probofInstance{\pguard}{\false}{\ps} \cdot \eet{\EMPTY}{\mdpRewSink}(\ps) \\
    & ~=~ 1 + \probofInstance{\pguard}{\true}{\ps} \cdot \ExpRew{\rmdpStmt{\mdpRewSink}{\ps}{\stmt';\BOUNDEDWHILE{k-1}{\pguard}{\stmt'}}}{\mdpSinkState} \tag{I.H.} \\
    & \qquad\qquad+ \probofInstance{\pguard}{\false}{\ps} \cdot \ExpRew{\rmdpStmt{\mdpRewSink}{\ps}{\EMPTY}}{\mdpSinkState}  \\
    & ~=~ \ExpRew{\rmdpStmt{\mdpRewSink}{\ps}{\BOUNDEDWHILE{k}{\pguard}{\stmt}}}{\mdpSinkState}.
 \end{align*}
 Together with the already proven proposition 
 \[ \eet{\HALT}{\mdpRewSink}(\ps) = \ExpRew{\rmdpStmt{\mdpRewSink}{\ps}{\HALT}}{\mdpSinkState}, \]
 we can establish that
 \begin{align*}
  & \eet{\WHILEDO{\pguard}{\stmt'}}{f}(\ps) \\
  & ~=~ \sup_{k \in \Nats} \eet{\BOUNDEDWHILE{k}{\pguard}{\stmt}}{\mdpRewSink}(\ps) \tag{\autoref{thm:eet-bounded-loops}} \\
  & ~=~ \sup_{k \in \Nats} \ExpRew{\rmdpStmt{\mdpRewSink}{\ps}{\BOUNDEDWHILE{k}{\pguard}{\stmt}}}{\mdpSinkState} \\
  & ~=~ \ExpRew{\rmdpStmt{\mdpRewSink}{\ps}{\WHILEDO{\pguard}{\stmt'}}}{\mdpSinkState} 
        \tag{\autoref{thm:exprew-bounded-loops-leq},~\autoref{thm:exprew-bounded-loops-geq}}.
 \end{align*} 
 \qed
\end{proof}
\subsection{\eetsymbol{} of Deterministic Programs} \label{sec:proof-thm-eet-axiomatic-soundness}

We will make use of the following Lemmas

\begin{lemma}\label{thm:det-eet-sum}
Let $\stmt_1\in \DetStmt$ terminate on $\ps \in \ProgramStates$ and let $\stmt_2 \in \DetStmt$ terminate on $\opSem{\stmt_1}{\ps}$.
Then
\[ \eet{\stmt_1;\stmt_2}{\ctert{0}}(\ps) = \eet{\stmt_1}{\ctert{0}}(\ps) + \eet{\stmt_2}{\ctert{0}}(\opSem{\stmt_1}{\ps}). \]
\end{lemma}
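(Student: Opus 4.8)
The plan is to reduce the statement to a single, more informative lemma describing how $\eetsymbol$ acts on its continuation for deterministic terminating programs, namely that for every $C\in\DetStmt$, every $\ps$ on which $C$ terminates, and every continuation $g\in\E$,
\[
  \eet{C}{g}(\ps) ~=~ \eet{C}{\ctert{0}}(\ps) + g(\opSem{C}{\ps})~. \tag{$\ast$}
\]
Once $(\ast)$ is available, the lemma is immediate: by the sequential--composition rule of Table~\ref{table:eet-rules} we have $\eet{\stmt_1;\stmt_2}{\ctert{0}} = \eet{\stmt_1}{\eet{\stmt_2}{\ctert{0}}}$, and applying $(\ast)$ to $\stmt_1$ at $\ps$ with the continuation $g=\eet{\stmt_2}{\ctert{0}}$ yields exactly $\eet{\stmt_1}{\ctert{0}}(\ps)+\eet{\stmt_2}{\ctert{0}}(\opSem{\stmt_1}{\ps})$. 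Intuitively, $(\ast)$ records that a deterministic program has a single computation path, so its continuation is merely evaluated in the unique final state $\opSem{C}{\ps}$ and added to the program's own run--time.

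I would prove $(\ast)$ by structural induction on $C$ over the deterministic grammar. The base cases $\SKIP$ and $\ASSIGN{x}{E}$ are direct unfoldings of Table~\ref{table:eet-rules} together with $\opSem{\SKIP}{\ps}=\ps$ and $\opSem{\ASSIGN{x}{E}}{\ps}=\ps\subst{x}{\eval{E}(\ps)}$. For conditionals the argument is routine: determinacy of the guard makes exactly one of $\probof{\pguard}{\true}(\ps),\probof{\pguard}{\false}(\ps)$ equal to $1$, collapsing the weighted sum to the taken branch, to which the induction hypothesis applies and for which $\opSem{\ITE{\pguard}{\stmt_1}{\stmt_2}}{\ps}$ is the final state of that branch. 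For $C_1;C_2$ one applies the induction hypothesis first to $C_1$ with continuation $\eet{C_2}{g}$ and then to $C_2$ at $\opSem{C_1}{\ps}$, using $\opSem{C_1;C_2}{\ps}=\opSem{C_2}{\opSem{C_1}{\ps}}$. A convenient way to close each case is to observe that the derived expression for $\eet{C}{g}(\ps)$ has the shape $(\text{a quantity not involving } g) + g(\opSem{C}{\ps})$, whence instantiating $g=\ctert{0}$ identifies that quantity as $\eet{C}{\ctert{0}}(\ps)$.

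The only delicate case is the loop $W=\WHILEDO{\pguard}{C'}$, where $\eet{W}{g}$ is a least fixed point rather than a finite syntactic unfolding. Here I would run a second, inner induction on the number $N(\ps)$ of loop--body executions that $W$ performs from $\ps$, a rank that is well defined precisely because $W$ is deterministic and terminates on $\ps$. The engine of this inner induction is the unrolling identity $\eet{W}{g}=\eet{\ITE{\pguard}{C';W}{\EMPTY}}{g}$ of Lemma~\ref{thm:eet-unroll-loop}. If $N(\ps)=0$ the guard is false at $\ps$, the unrolling takes the $\EMPTY$ branch, $\opSem{W}{\ps}=\ps$, and $(\ast)$ reduces to $\eet{W}{g}(\ps)=1+g(\ps)$. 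If $N(\ps)=m+1$ the guard is true, so $C'$ terminates on $\ps$, $W$ terminates on $\ps'\triangleq\opSem{C'}{\ps}$ with rank $m$, and $\opSem{W}{\ps}=\opSem{W}{\ps'}$; the unrolling gives $\eet{W}{g}(\ps)=1+\eet{C'}{\eet{W}{g}}(\ps)$, to which I apply the outer structural hypothesis on the body $C'$ and then the inner hypothesis at $\ps'$. Factoring the result into a $g$--independent part plus $g(\opSem{W}{\ps})$ and specializing $g=\ctert{0}$ again closes the case.

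The main obstacle is exactly this nested induction for the loop. I must justify, from the operational definition of $\opSem{\cdot}{\cdot}$, that termination of $W$ on $\ps$ with a true guard forces termination of the body $C'$ on $\ps$, termination of $W$ on $\opSem{C'}{\ps}$ with a \emph{strictly smaller} rank, and the propagation $\opSem{W}{\ps}=\opSem{W}{\opSem{C'}{\ps}}$. These compositional facts about $\opSem$---which follow from the deterministic one--step unrolling of the $\WHILE$ rule in Figure~\ref{fig:transition-function}---are what make $N(\ps)$ a sound measure and what let the structural and rank inductions interlock lexicographically without circularity (the loop case uses $(\ast)$ for the smaller program $C'$ and for the same program $W$ at a smaller rank).
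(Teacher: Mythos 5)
Your proposal is correct, but it takes a genuinely different route from the paper. The paper dispatches this lemma in one line: by Theorem~\ref{thm:eet-soundness}, all three quantities are expected rewards of operational MDPs, and for deterministic programs the MDP $\rmdpStmt{\ctert{0}}{\ps}{\COMPOSE{\stmt_1}{\stmt_2}}$ degenerates to a single path that passes through $\mdpState{\COMPOSE{\Terminated}{\stmt_2}}{\opSem{\stmt_1}{\ps}}$, so the cumulative reward splits additively at that point (the sequencing decomposition of Lemma~\ref{thm:exprew-seq}, specialized to the single-path case). You instead stay on the denotational side: you strengthen the statement to the continuation-propagation law $(\ast)$ --- the deterministic analogue of \emph{propagation of constants} from Theorem~\ref{thm:eet-prop}, with the continuation evaluated at the unique final state --- and prove it by structural induction, handling loops via Lemma~\ref{thm:eet-unroll-loop} together with an inner induction on the iteration count. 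What the paper's route buys is brevity: the heavy lifting was already done in proving soundness, so the lemma is a corollary of the operational correspondence. What your route buys is a stronger, reusable lemma (it immediately gives $\eet{\stmt}{g}$ for \emph{arbitrary} continuations $g$, not just $\ctert{0}$) and independence from the expected-reward machinery for the additive decomposition itself. The price is the nested induction and the operational side conditions you correctly flag but do not discharge: termination of the body on $\ps$, termination of the loop on $\opSem{C'}{\ps}$ at strictly smaller rank, $\opSem{W}{\ps}=\opSem{W}{\opSem{C'}{\ps}}$, and the analogous facts for sequencing and conditionals. Since $\opSem{\cdot}{\cdot}$ is \emph{defined} through the MDP, some inspection of Figure~\ref{fig:transition-function} is unavoidable on your route too, so the two proofs meet at the operational level in the end. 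Two points to make explicit if you write this up: collapsing the untaken branch of a conditional (and of the loop unrolling) uses the convention $0\cdot\infty=0$, since the untaken branch may well have infinite $\eetsymbol$; and identifying the $g$-independent part with $\eet{C}{\ctert{0}}(\ps)$ by instantiating $g=\ctert{0}$ remains valid even if that part is $\infty$, because addition on $\Rposinf$ then absorbs the $g$-term on both sides.
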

\begin{proof}
 Immediate by inspection of the MDP $\rmdpStmt{\ctert{0}}{\ps}{\stmt_1;\stmt_2}$ and Theorem~\ref{thm:eet-soundness}.
 \qed
\end{proof}

\begin{lemma} \label{thm:det-eet-loop}
 For each $\ps \in \ProgramStates$ with $\ExpToFun{\pguard}(\ps) = 1$ such that $\WHILEDO{\pguard}{\stmt}$ terminates on $\ps$, we have 
 \begin{align*}
  \eet{\WHILEDO{\pguard}{\stmt}}{\ctert{0}}(\ps) \geq \eet{\stmt}{\ctert{0}}(\ps) + \eet{\WHILEDO{\pguard}{\stmt}}{\ctert{0}}(\opSem{\stmt}{\ps}).
 \end{align*}
\end{lemma}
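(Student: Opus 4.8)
The plan is to reduce the statement to a single unrolling of the loop at $\ps$, combined with a \emph{continuation-additivity} property for deterministic terminating programs. Write $g \eqdef \eet{\WHILEDO{\pguard}{\stmt}}{\ctert{0}}$ and let $F_{\ctert{0}}^{\langle\pguard,\stmt\rangle}$ be the characteristic functional of Definition~\ref{def:F}. First I would use the fixed-point identity $g = \lfp F_{\ctert{0}}^{\langle\pguard,\stmt\rangle} = F_{\ctert{0}}^{\langle\pguard,\stmt\rangle}(g)$, that is
\[
g ~=~ \ctert{1} + \eval{\neg\pguard}\cdot\ctert{0} + \eval{\pguard}\cdot\eet{\stmt}{g}~.
\]
Evaluating at $\ps$ and using the hypothesis $\ExpToFun{\pguard}(\ps)=1$ (so $\eval{\neg\pguard}(\ps)=0$ and $\eval{\pguard}(\ps)=1$), this collapses to $\eet{\WHILEDO{\pguard}{\stmt}}{\ctert{0}}(\ps) = 1 + \eet{\stmt}{g}(\ps)$. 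Since the guard holds at $\ps$ and the loop terminates on $\ps$, its first iteration terminates, so $\stmt$ terminates on $\ps$ and $\opSem{\stmt}{\ps}$ is defined.

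Second, I would establish the key sublemma: for every $\stmt\in\DetStmt$ terminating on a state $\sigma$ and every continuation $h\in\E$,
\[
\eet{\stmt}{h}(\sigma) ~=~ \eet{\stmt}{\ctert{0}}(\sigma) + h(\opSem{\stmt}{\sigma})~.
\]
The cleanest route is through the operational model and Theorem~\ref{thm:eet-soundness}. Because $\stmt$ is deterministic (all transition probabilities are $1$) and terminates on $\sigma$, the MDP $\rmdpStmt{h}{\sigma}{\stmt}$ has a unique finite path from $\mdpState{\stmt}{\sigma}$ to $\mdpSinkState$, and this path visits exactly one bare terminal state $\mdpState{\Terminated}{\opSem{\stmt}{\sigma}}$; intermediate terminations occur only as states $\mdpState{\COMPOSE{\Terminated}{\stmt_2}}{\cdot}$, which carry reward $0$. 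The MDPs $\rmdpStmt{h}{\sigma}{\stmt}$ and $\rmdpStmt{\ctert{0}}{\sigma}{\stmt}$ share the same transition structure and differ only in the reward of bare terminal states, namely $h(\cdot)$ versus $0$. Hence the cumulative rewards of the unique terminating path differ by exactly $h(\opSem{\stmt}{\sigma})$, giving
\[
\ExpRew{\rmdpStmt{h}{\sigma}{\stmt}}{\mdpSinkState} ~=~ \ExpRew{\rmdpStmt{\ctert{0}}{\sigma}{\stmt}}{\mdpSinkState} + h(\opSem{\stmt}{\sigma})~,
\]
and the sublemma follows by applying Theorem~\ref{thm:eet-soundness} to both expected rewards.

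Finally, I would combine the two steps. Applying the sublemma with $\sigma=\ps$ and $h=g$ to the unrolled identity yields
\[
\eet{\WHILEDO{\pguard}{\stmt}}{\ctert{0}}(\ps) ~=~ 1 + \eet{\stmt}{\ctert{0}}(\ps) + g(\opSem{\stmt}{\ps}) ~\geq~ \eet{\stmt}{\ctert{0}}(\ps) + \eet{\WHILEDO{\pguard}{\stmt}}{\ctert{0}}(\opSem{\stmt}{\ps})~,
\]
which is the claim (discarding the additive $1$). The main obstacle is the sublemma: the operational argument hinges on the uniqueness of the terminating path and on correctly identifying where the continuation reward is collected. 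If one prefers a syntactic proof, the sublemma can instead be shown by induction on the structure of $\stmt$, where the base cases ($\SKIP$, $\ASSIGN{x}{E}$), sequential composition, and conditional are routine, but the nested-loop case is delicate and requires the bounded-unrolling characterisation (Lemma~\ref{thm:eet-bounded-loops}) together with an inner induction on the number of iterations until that loop's guard fails.
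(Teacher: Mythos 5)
Your proposal is correct and follows essentially the same route as the paper: the paper likewise unrolls the loop once (its unrolling lemma is proved by exactly the fixed-point identity you invoke), uses $\ExpToFun{\pguard}(\ps)=1$ to collapse the conditional, and then applies an additivity fact for deterministic terminating programs---Lemma~\ref{thm:det-eet-sum}, $\eet{\stmt_1;\stmt_2}{\ctert{0}}(\ps) = \eet{\stmt_1}{\ctert{0}}(\ps) + \eet{\stmt_2}{\ctert{0}}(\opSem{\stmt_1}{\ps})$, justified there by the very ``inspection of the MDP plus Theorem~\ref{thm:eet-soundness}'' argument you spell out---before discarding the additive $\one$. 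Your continuation-additivity sublemma with arbitrary $h \in \E$ is a mild generalization of Lemma~\ref{thm:det-eet-sum} (which is the special case $h = \eet{\stmt_2}{\ctert{0}}$, since $\eet{\stmt_1;\stmt_2}{\ctert{0}} = \eet{\stmt_1}{\eet{\stmt_2}{\ctert{0}}}$), but the substance of the argument is identical.
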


\begin{proof}
 \begin{align*}
  & \eet{\WHILEDO{\pguard}{\stmt}}{\ctert{0}}(\ps) \\
  & ~=~ \eet{\ITE{\pguard}{\stmt;\WHILEDO{\pguard}{\stmt}}{\EMPTY}}{\ctert{0}}(\ps)  \tag{\autoref{thm:eet-unroll-loop}} \\
  & ~=~ \one + \ExpToFun{\pguard}(\ps) \cdot \eet{\stmt;\WHILEDO{\pguard}{\stmt}}{\ctert{0}}(\ps) + \ExpToFun{\neg\pguard}(\ps) \cdot \zero \\
  & ~=~ \one + \eet{\stmt;\WHILEDO{\pguard}{\stmt}}{\ctert{0}}(\ps) \tag{$\ExpToFun{\pguard}(\ps) = 1$} \\
  & ~=~ \one + \eet{\stmt}{\ctert{0}}(\ps) + \eet{\WHILEDO{\pguard}{\stmt}}{\ctert{0}}(\opSem{\stmt}{\ps} \tag{\autoref{thm:det-eet-sum}} \\
  & ~\geq~ \eet{\stmt}{\ctert{0}}(\ps) + \eet{\WHILEDO{\pguard}{\stmt}}{\ctert{0}}(\opSem{\stmt}{\ps}
 \end{align*}
 \qed
\end{proof}
\begingroup
\def\thetheorem{\ref{thm:eet-axiomatic-soundness}}
\begin{theorem}[Soundness of \eetsymbol{} for deterministic programs]
 For all $\stmt \in \DetStmt$ and assertions $P,Q$, we have
 \[
  \hprove \hrtriple{P}{\stmt}{Q} \text{ implies } \nprove \htriple{P}{\stmt}{\eet{\stmt}{\ctert{0}}}{Q}.
 \]
\end{theorem}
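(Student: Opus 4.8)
The plan is to induct on the structure of $C$, transforming the given total-correctness derivation of $\hrtriple{P}{C}{Q}$ into a Nielson derivation in which every node carries as its run-time the \eetsymbol{} of the corresponding subprogram. This is natural because the rules of \autoref{fig:nielson-inference-rules}, with their gray parts erased, are exactly the total-correctness Hoare rules; a Hoare derivation and the intended Nielson derivation therefore share the same skeleton, and it only remains to attach the gray run-time annotations and discharge the gray side conditions. Applications of the consequence rule $\lrule{cons}$ lift for free: since $\eet{C}{\ctert{0}}$ does not depend on the pre- or postcondition, the same annotation is reused with $k = 1$, so that $\eet{C}{\ctert{0}} \leq 1 \cdot \eet{C}{\ctert{0}}$ holds trivially; for compound statements the intermediate assertions needed as induction premises are furnished by inverting the Hoare derivation.

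For the base cases, a Hoare proof of $\hrtriple{P}{\SKIP}{Q}$ forces $P \Rightarrow Q$; since $\eet{\SKIP}{\ctert{0}} = \ctert{1}$, rule $\lrule{SKIP}$ followed by $\lrule{cons}$ gives $\nprove \htriple{P}{\SKIP}{\ctert{1}}{Q}$, and the assignment case is analogous via $\lrule{Assgn}$ and $\eet{\ASSIGN{x}{E}}{\ctert{0}} = \ctert{1}$. For the conditional the induction hypothesis supplies $\nprove \htriple{P \wedge \pguard}{C_1}{\eet{C_1}{\ctert{0}}}{Q}$ and $\nprove \htriple{P \wedge \neg\pguard}{C_2}{\eet{C_2}{\ctert{0}}}{Q}$. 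Writing $E = \eet{\ITE{\pguard}{C_1}{C_2}}{\ctert{0}} = \ctert{1} + \eval{\pguard} \cdot \eet{C_1}{\ctert{0}} + \eval{\neg\pguard} \cdot \eet{C_2}{\ctert{0}}$, one has $P \wedge \pguard \Rightarrow \eet{C_1}{\ctert{0}} \leq E$ and $P \wedge \neg\pguard \Rightarrow \eet{C_2}{\ctert{0}} \leq E$ (the extra summand $\ctert{1}$ for the guard evaluation only loosens the bound). Two applications of $\lrule{cons}$ raise both branch run-times to the common $E$, after which $\lrule{if}$ yields $\nprove \htriple{P}{\ITE{\pguard}{C_1}{C_2}}{E}{Q}$.

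The sequential and loop cases carry the weight and both turn on expressing a continuation's run-time back in the initial state. For $\COMPOSE{C_1}{C_2}$ the hypothesis gives $\nprove \htriple{P}{C_1}{\eet{C_1}{\ctert{0}}}{R}$ and $\nprove \htriple{R}{C_2}{\eet{C_2}{\ctert{0}}}{Q}$. Matching $\lrule{Seq}$, I take $E_1 = \eet{C_1}{\ctert{0}}$, $E_2 = \eet{C_2}{\ctert{0}}$, and let $E_2'$ be $E_2$ pulled back along the state transformer, $E_2'(\ps) = \eet{C_2}{\ctert{0}}(\opSem{C_1}{\ps})$. As $C_1$ is deterministic, $E_2(\opSem{C_1}{\ps}) = E_2'(\ps)$ holds identically, so freezing $E_2' = u$ in the precondition guarantees $E_2 \leq u$ in the postcondition; this yields the first premise of $\lrule{Seq}$, the second being the hypothesis for $C_2$. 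The rule then delivers the run-time $E_1 + E_2'$, which by \autoref{thm:det-eet-sum} equals $\eet{\COMPOSE{C_1}{C_2}}{\ctert{0}}$. The loop case is the same idea driven by unrolling: with $E = \eet{\WHILEDO{\pguard}{C}}{\ctert{0}}$, $E_1 = \eet{C}{\ctert{0}}$ and $E'(\ps) = E(\opSem{C}{\ps})$, Lemmas~\ref{thm:eet-unroll-loop} and~\ref{thm:det-eet-loop} give, on every $P(z{+}1)$-state (where $\pguard$ holds), $E = \ctert{1} + E_1 + E' \geq E_1 + E'$, while on every $P(0)$-state (where $\neg\pguard$ holds) $E = \ctert{1} \geq 1$; freezing $E' = u$ again transports the value to $E \leq u$ after one body execution. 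Feeding the body triple $\nprove \htriple{P(z{+}1)}{C}{E_1}{P(z)}$ from the hypothesis, augmented with the frozen variable, into $\lrule{while}$ produces $\nprove \htriple{\exists z \mydot P(z)}{\WHILEDO{\pguard}{C}}{E}{P(0)}$.

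The main obstacle is the fresh-variable bookkeeping in $\lrule{Seq}$ and $\lrule{while}$: one must justify that the component/body triple obtained from the hypothesis may be strengthened by the frame $E' = u$ in its precondition and a matching $\leq u$ in its postcondition. This is exactly where determinism of $\DetStmt$ is indispensable, since it makes $\opSem{C}{\cdot}$ a partial \emph{function}, so that the pulled-back expression satisfies the transport equation identically rather than as a mere inequality and the frozen value $u$ is carried exactly through the unique computation. A secondary, more cosmetic point is that $\eet{C}{\ctert{0}}$ must be admitted as a legitimate run-time expression in Nielson's sense; for loops this amounts to reading the least fixed point semantically, in keeping with the treatment of run-times as elements of $\E$ throughout the paper.
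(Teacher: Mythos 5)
Your proposal is correct and follows essentially the same route as the paper's proof: structural induction with inversion of the Hoare derivation, base cases via the axioms, the conditional via the consequence rule (with the guard's extra $\ctert{1}$ only loosening the bound), and the sequencing and loop cases discharged by freezing a run-time expression in a fresh logical variable $u$, with Lemma~\ref{thm:det-eet-sum} and Lemma~\ref{thm:det-eet-loop} doing the real work. The only (immaterial) difference is that you define $E_2'$ and $E'$ as semantic pullbacks along $\opSem{\cdot}{\cdot}$, whereas the paper defines them as the differences $\eet{\COMPOSE{\stmt_1}{\stmt_2}}{\ctert{0}} - \eet{\stmt_1}{\ctert{0}}$ and $\eet{\WHILEDO{\pguard}{\stmt_1}}{\ctert{0}} - \eet{\stmt_1}{\ctert{0}}$; by those same lemmas the two choices agree (on the terminating states under the precondition), so the lemma is merely invoked for the side condition in your version and for the transported postcondition in the paper's.
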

\addtocounter{theorem}{-1}
\endgroup
\begin{proof}
 By induction on the structure of program statements $\stmt \in \DetStmt$.
 
 The base cases $\stmt = \SKIP$ and $\stmt = \ASSIGN{x}{E}$ are immediate, because
 \[ \eet{\SKIP}{\ctert{0}} = \eet{\ASSIGN{x}{E}} = 1 \] 
 and 
 $\htriple{P}{\SKIP}{1}{P}$ as well as $\htriple{P}{\ASSIGN{x}{E}}{1}{P}$ are axioms.
 
 \paragraph{Induction hypothesis}
 Assume that for each sub-statement $\stmt'$ of $\stmt$ and each pair of assertions $P,Q$, we have
 \[
  \hprove \hrtriple{P}{\stmt'}{Q} \text{ implies } \nprove \htriple{P}{\stmt'}{\eet{\stmt'}{\ctert{0}}}{Q}.
 \]
 
 For the induction step, we have to consider the sequential composition of programs, conditionals and loops. 

 \paragraph{Sequential Composition} $\stmt' = \stmt_1;\stmt_2$
 Assume that 
 \[ \hprove \hrtriple{P}{\stmt_1;\stmt_2}{Q}. \]
 Then, there exists an intermediate assertion $R$ such that 
 \[ \hprove \hrtriple{P}{\stmt_1}{R} \text{ and } \hprove \hrtriple{R}{\stmt_2}{Q}. \]
 By I.H., we know that 
 \[ \nprove \htriple{P}{\stmt_1}{\eet{\stmt_1}{\ctert{0}}}{R} \text{ and } \nprove \htriple{R}{\stmt_2}{\eet{\stmt_2}{\ctert{0}}}{Q}. \]
 Now, let $E_2' = \eet{\stmt_1;\stmt_2}{\ctert{0}} - \eet{\stmt_1}{\ctert{0}}$ and consider the triple
 \[ \htriple{P \wedge E_2' = u}{\stmt_1}{\eet{\stmt_1}{\ctert{0}}}{R \wedge \eet{\stmt_2}{\ctert{0}} \leq u} \]
 where $u$ is a fresh logical variable.
 Since $u$ does not occur in $P$, we know that for each $\ps \in \ProgramStates$ with $\ps \models P$,
 $\ps \models P \wedge E_2' = u$.
 Then, 
 \begin{align*}  
  & \ps \models P \wedge E_2' = u \text{ and } \opSem{\stmt_1}{\ps} \models R \wedge \eet{\stmt_2}{\ctert{0}}(\opSem{\stmt_1}{\ps}) \leq u \\
  ~\Leftrightarrow~ & \ps \models P \wedge E_2' = u \text{ and } \opSem{\stmt_1}{\ps} \models R \text{ and} \\
  & \eet{\stmt_2}{\ctert{0}}(\opSem{\stmt_1}{\ps}) \leq \eet{\stmt_1;\stmt_2}{\ctert{0}}(\ps) - \eet{\stmt_1}{\ctert{0}}(\ps) \tag{Def. of $E_2'$} \\
  ~\Leftrightarrow~ & \ps \models P \wedge E_2' = u \text{ and } \opSem{\stmt_1}{\ps} \models R \text{ and} \\
  &  \eet{\stmt_1}{\ctert{0}}(\ps) + \eet{\stmt_2}{\ctert{0}}(\opSem{\stmt_1}{\ps}) \leq \eet{\stmt_1;\stmt_2}{\ctert{0}}(\ps) \tag{\autoref{thm:det-eet-sum}}\\
  ~\Rightarrow~ & \models_{E} \htriple{P \wedge E_2' = u}{\stmt_1}{\eet{\stmt_1}{\ctert{0}}}{R \wedge \eet{\stmt_2}{\ctert{0}} \leq u}
 \end{align*}
 Since both the triples
 \[ \htriple{P \wedge E_2' = u}{\stmt_1}{\eet{\stmt_1}{\ctert{0}}}{R \wedge \eet{\stmt_2}{\ctert{0}} \leq u} \]
 and 
 \[ \htriple{R}{\stmt_2}{\eet{\stmt_2}{\ctert{0}}}{Q}, \]
 are valid, we may apply to rule of sequential composition to conclude 
 \begin{align*}
  & \nprove \htriple{P}{\stmt_1;\stmt_2}{\eet{\stmt_1}{\ctert{0}}+E_2'}{Q} \\
  ~\Leftrightarrow~ & \nprove \htriple{P}{\stmt_1;\stmt_2}{\eet{\stmt_1;\stmt_2}{\ctert{0}}}{Q}. \tag{Def. of $E_2'$}
 \end{align*}
 
 \paragraph{Conditionals} $\stmt' = \ITE{\pguard}{\stmt_1}{\stmt_2}$
 Assume that 
 \[ \hprove \hrtriple{P}{\ITE{\pguard}{\stmt_1}{\stmt_2}}{Q}. \]
 Then, we also know that 
 \[ \hprove \hrtriple{P \wedge \pguard}{\stmt_1}{Q} \text{ and } \hprove \hrtriple{P \wedge \neg \pguard}{\stmt_2}{Q}. \]
 By I.H., we know that 
 \begin{align*}
  & \nprove \htriple{P \wedge \pguard}{\stmt_1}{\eet{\stmt_1}{\ctert{0}}}{Q} \text{ and } \\
  & \nprove \htriple{P \wedge \neg \pguard}{\stmt_2}{\eet{\stmt_2}{\ctert{0}}}{Q}.
 \end{align*}
 Now let
 \begin{align*}
   E := & \eet{\ITE{\pguard}{\stmt_1}{\stmt_2}}{\ctert{0}} \\
      = & \one + \ExpToFun{\pguard} \cdot \eet{\stmt_1}{\ctert{0}} + \ExpToFun{\neg\pguard} \cdot \eet{\stmt_2}{\ctert{0}}. \\
 \end{align*}
 Then $E \geq \eet{\stmt_1}{\ctert{0}}$ and $E \geq \eet{\stmt_2}{\ctert{0}}$ (for the states satisfying precondition $P$).
 Hence, we can apply the rule of consequence to conclude 
 \[ \nprove \htriple{P \wedge \pguard}{\stmt_1}{E}{Q} \text{ and } \nprove \htriple{P \wedge \neg \pguard}{\stmt_2}{E}{Q}. \]
 Now, applying the rule for conditionals, yields 
 \[ \nprove \htriple{P}{\ITE{\pguard}{\stmt_1}{\stmt_2}}{E}{Q}. \]

 \paragraph{Loops} $\stmt' = \WHILEDO{\pguard}{\stmt_1}$.
 Assume that 
 \[ \hprove \hrtriple{P}{\WHILEDO{\pguard}{\stmt_1}}{Q}. \]
 Then, there exists an assertion $R(z)$ such that $P \Rightarrow \exists z \mydot R(z)$, $R(0) \Rightarrow Q$ and
 \[ \hprove \hrtriple{\exists z \mydot R(z)}{\WHILEDO{\pguard}{\stmt_1}}{R(0)}. \]
 By the \WHILE-rule of Hoare logic for total correctness, we have
 \begin{align*}
  \hprove \hrtriple{R(z+1)}{\stmt_1}{R(z)}. \tag{$\ast$}
 \end{align*}
 By I.H., we obtain
 \[ \nprove \htriple{R(z+1)}{\stmt_1}{E_1}{R(z)} \]
 where $E_1 = \eet{\stmt_1}{\ctert{0}}$.
 Now, let
 \[ E' := \eet{\WHILEDO{\pguard}{\stmt_1}}{\ctert{0}} - \eet{\stmt_1}{\ctert{0}} \]
 and 
 \[ E := \eet{\WHILEDO{\pguard}{\stmt_1}}{\ctert{0}}. \]
 
 Our goal is to apply the $\WHILE$-rule in order to show 
 \[ \nprove \htriple{\exists z \mydot R(z)}{\WHILEDO{\pguard}{\stmt_1}}{E}{R(0)}. \]
 We first check the side conditions of this rule for our choice of $E'$, i.e. we show that
 $R(0) \Rightarrow \neg \pguard \wedge E \geq 1$ as well as $R(z+1) \Rightarrow \pguard \wedge E \geq E_1 + E'$ are valid.
 
 If $R(0)$ is valid then $\neg \pguard$ is valid due to $(\ast)$ and
 \[ \eval{E}(\ps) = \eet{\WHILEDO{\pguard}{\stmt_1}}{\ctert{0}}(\ps) = 1 \geq 1. \]
 Furthermore, if $R(z+1)$ is valid for some $z \in \Nats$, then $\pguard$ is valid by $(\ast)$ and for each $\ps \in \ProgramStates$, we have
 \begin{align*}
    & \eval{E}(\ps) ~=~ \eet{\WHILEDO{\pguard}{\stmt_1}}{\ctert{0}}(\ps) \\
    & ~=~ \eet{\WHILEDO{\pguard}{\stmt_1}}{\ctert{0}}(\ps) - \eet{\stmt_1}{\ctert{0}}(\ps) + \eet{\stmt_1}{\ctert{0}}(\ps) \\
    & ~=~ \eval{E'}(\ps) + \eval{E_1}(\ps).
 \end{align*}
 
 Hence, the side conditions of the $\WHILE$-rule hold. In order to apply the rule, we also have to show 
 validity of the triple
 \[ \htriple{R(z+1) + E' = u}{\stmt_1}{\eet{\stmt_1}{\ctert{0}}}{R(z) \wedge E \leq u} \]
 where $u$ is a fresh logical variable.
 Since $u$ does not occur in $P$, we know that for each $\ps \in \ProgramStates$ with $\ps \models P$,
 $\ps \models R(z+1) \wedge E' = u$.
 Then 
 \begin{align*}  
  & \ps \models R(z+1) \wedge E' = u \text{ and } \opSem{\stmt_1}{\ps} \models R(z) \wedge E(\opSem{\stmt_1}{\ps}) \leq u \\
  ~\Leftrightarrow~ & \ps \models R(z+1) \wedge E' = u \text{ and }
                      \opSem{\stmt_1}{\ps} \models R(z) \text{ and} \\
  &  E(\opSem{\stmt_1}{\ps}) \leq E'(\ps) \tag{Def. of $u$} \\
  ~\Leftrightarrow~ & \ps \models R(z+1) \wedge E' = u \text{ and }
                      \opSem{\stmt_1}{\ps} \models R(z) \text{ and} \\
  &  E(\opSem{\stmt_1}{\ps}) \leq \eet{\WHILEDO{\pguard}{\stmt_1}}{\ctert{0}}(\ps) -\eet{\stmt_1}{\ctert{0}}(\ps) \tag{Def. of $E'$} \\
  ~\Leftrightarrow~ & \ps \models R(z+1) \wedge E' = u \text{ and }
                      \opSem{\stmt_1}{\ps} \models R(z) \text{ and} \\
  &  E(\opSem{\stmt_1}{\ps}) + \eet{\stmt_1}{\ctert{0}}(\ps) \leq \eet{\WHILEDO{\pguard}{\stmt_1}}{\ctert{0}}(\ps) \\
  &  \text{(\autoref{thm:det-eet-loop})} \\
  ~\Rightarrow~ & \models_{E} \htriple{R(z+1) \wedge E' = u}{\stmt_1}{\eet{\stmt_1}{\ctert{0}}}{R(z) \wedge E \leq u}.
 \end{align*}
 Thus we may apply the $\WHILE$-rule to conclude 
 \begin{align*}
   \nprove \htriple{\exists z \mydot R(z)}{\WHILEDO{\pguard}{\stmt_1}}{\eet{\WHILEDO{\pguard}{\stmt_1}}{\ctert{0}}}{R(0)}.
 \end{align*}
 By assumption, the implications $P \Rightarrow \exists z \mydot R(z)$ as well as $R(0) \Rightarrow Q$ are valid, i.e.
 \begin{align*}
   \nprove \htriple{P}{\WHILEDO{\pguard}{\stmt_1}}{\eet{\WHILEDO{\pguard}{\stmt_1}}{\ctert{0}}}{Q}.
 \end{align*}
 \qed
\end{proof}
\begingroup
\def\thetheorem{\ref{thm:det-eet-completeness}}
\begin{theorem}[Completeness of \eetsymbol{} w.r.t. Nielson]
    For all $\stmt \in \DetStmt$, assertions $P,Q$ and deterministic expressions $E$,
  $\nprove \htriple{P}{\stmt}{E}{Q}$ implies that there exists a natural number $k$ such that for all $\ps \in \ProgramStates$ satisfying $P$, we have 
  \[\eet{\stmt}{\ctert{0}}(\ps) ~\leq~ k \cdot (\eval{E}(\ps))~. \]
\end{theorem}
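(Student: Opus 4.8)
The plan is to induct on the derivation of $\nprove \htriple{P}{\stmt}{E}{Q}$, handling each rule of \autoref{fig:nielson-inference-rules}. The axioms $\lrule{skip}$ and $\lrule{assgn}$ are immediate with $k=1$, since $\eet{\SKIP}{\ctert{0}} = \eet{\ASSIGN{x}{E}}{\ctert{0}} = \one$ and the attached run--time expression is the constant $1$. The rule $\lrule{cons}$ is routine: if the premise gives $k'$ with $\eet{\stmt}{\ctert{0}}(\ps) \leq k' \cdot \eval{E'}(\ps)$ and the side condition supplies $\eval{E'}(\ps) \leq k_0 \cdot \eval{E}(\ps)$, then $k = k'k_0$ works. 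For $\lrule{seq}$ I would use \autoref{thm:det-eet-sum}: given $\ps \models P$, instantiating the fresh variable $u$ with $\eval{E_2'}(\ps)$ makes $\ps$ satisfy the left premise's precondition $P \wedge E_2' = u$, so the induction hypothesis bounds $\eet{\stmt_1}{\ctert{0}}(\ps)$ by $k_1 \eval{E_1}(\ps)$; the postcondition $Q \wedge E_2 \leq u$ and the hypothesis on $\stmt_2$ then bound $\eet{\stmt_2}{\ctert{0}}(\opSem{\stmt_1}{\ps})$ by $k_2 \eval{E_2}(\opSem{\stmt_1}{\ps}) \leq k_2 \eval{E_2'}(\ps)$. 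Since \autoref{thm:det-eet-sum} splits $\eet{\stmt_1;\stmt_2}{\ctert{0}}(\ps)$ into the sum of these two quantities, $k = \max\{k_1,k_2\}$ suffices.

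The genuine difficulty is in $\lrule{if}$ and especially $\lrule{while}$, because the $\eetsymbol$--calculus charges one time unit per guard evaluation while Nielson's expressions do not. To absorb this overhead I would first prove an auxiliary lemma: whenever $\nprove \htriple{P}{\stmt}{E}{Q}$, there is a constant $c > 0$ with $\eval{E}(\ps) \geq c$ for all $\ps \models P$. This follows by the same structural induction; the only subtle case is the loop, where the side condition $P(0) \Rightarrow E \geq 1$ forces a positive lower bound even for loops running zero iterations, while $E \geq E_1 + E'$ propagates the body's bound upward. With this lemma, $\lrule{if}$ is easy: for deterministic $\pguard$ the value $\eet{\ITE{\pguard}{\stmt_1}{\stmt_2}}{\ctert{0}}(\ps)$ equals $1 + \eet{\stmt_i}{\ctert{0}}(\ps)$ for the selected branch $\stmt_i$, and the extra $1$ is dominated by $\tfrac{1}{c}\eval{E}(\ps)$.

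For the loop I would first observe that the descent built into $\lrule{while}$ forces termination: a state $\ps_0 \models P(z_0)$ has $\ps_i \models P(z_0-i)$ along the run $\ps_{i+1} = \opSem{\stmt}{\ps_i}$ (using that the body premises are valid, hence the body terminates on $P(z{+}1)$ states), reaching $\ps_{z_0} \models P(0) \Rightarrow \neg\pguard$ after exactly $z_0$ iterations. Unrolling via \autoref{thm:eet-unroll-loop} and \autoref{thm:det-eet-sum} then yields the exact identity $\eet{\WHILEDO{\pguard}{\stmt}}{\ctert{0}}(\ps_0) = (z_0+1) + \sum_{i=0}^{z_0-1}\eet{\stmt}{\ctert{0}}(\ps_i)$, in which the summand $z_0+1$ collects the per--iteration and final guard evaluations that Nielson ignores. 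Reading the side conditions along this chain gives $\eval{E}(\ps_i) \geq \eval{E_1}(\ps_i) + \eval{E}(\ps_{i+1})$ and $\eval{E}(\ps_{z_0}) \geq 1$, which telescopes to $\eval{E}(\ps_0) \geq \sum_{i=0}^{z_0-1}\eval{E_1}(\ps_i) + 1$.

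Applying the auxiliary lemma to the body, $\eval{E_1}(\ps_i) \geq c_1 > 0$, bounds the iteration count by $z_0 \leq \tfrac{1}{c_1}\sum_{i=0}^{z_0-1}\eval{E_1}(\ps_i) \leq \tfrac{1}{c_1}\eval{E}(\ps_0)$, so the guard overhead $z_0+1$ is itself $O(\eval{E}(\ps_0))$. Combining this with the induction hypothesis $\eet{\stmt}{\ctert{0}}(\ps_i) \leq k_1 \eval{E_1}(\ps_i)$, the telescoped bound, and $\eval{E}(\ps_0) \geq 1$ gives $\eet{\WHILEDO{\pguard}{\stmt}}{\ctert{0}}(\ps_0) \leq (k_1 + \tfrac{1}{c_1} + 1)\,\eval{E}(\ps_0)$, so $k = \lceil k_1 + \tfrac{1}{c_1} + 1\rceil$ works. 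I expect this loop case to be the main obstacle: the interplay between telescoping the $\lrule{while}$ side conditions and the strictly positive lower bound on the body (needed to control the number of iterations) is exactly where completeness could fail if guard evaluations were charged without that bound. The remaining steps are routine bookkeeping.
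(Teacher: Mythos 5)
Your proposal is correct, and its skeleton (induction over the derivation, the axiom and \lrule{cons} cases, and the use of \autoref{thm:det-eet-sum} with the fresh variable $u$ instantiated to $\eval{E_2'}(\ps)$ in the \lrule{seq} case) coincides with the paper's proof in Appendix~\ref{sec:proof-thm-eet-axiomatic-soundness}. Where you genuinely diverge is the loop case: the paper runs a complete induction on the variant $z$ of Nielson's \lrule{while} rule, proving directly that $(k'+1)\cdot\eval{E}(\ps) \geq \eet{\WHILEDO{\pguard}{\stmt_1}}{\ctert{0}}(\ps)$ for every $\ps \models R(z)$, unfolding one iteration at a time via \autoref{thm:det-eet-loop}; you instead unroll the entire terminating execution $\ps_0,\dots,\ps_{z_0}$, telescope the side conditions $E \geq E_1 + E'$ down to $\eval{E}(\ps_0) \geq \sum_i \eval{E_1}(\ps_i) + 1$, and bound the iteration count $z_0$ separately through your auxiliary lemma that derivable run--time expressions are bounded below by a positive constant on their preconditions. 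The two are morally the same computation (the paper's inner induction is the folded form of your telescoping), but your version makes explicit exactly where the guard--evaluation overhead is absorbed, and this is in fact a point the paper glosses over: its \lrule{if} case silently passes from $k + k\cdot\eval{E}(\ps)$ to $3k\cdot\eval{E}(\ps)$, and its loop step needs $\eval{E_1}(\ps) \geq 1$ to turn $(k'+1)\cdot\eval{E_1}(\ps)$ into $1 + \eet{\stmt_1}{\ctert{0}}(\ps)$ --- both of which require precisely the positive lower bound your auxiliary lemma supplies (or, alternatively, the observation that every $\DetStmt$ program has $\eet{\stmt}{\ctert{0}} \succeq \ctert{1}$, so the induction hypothesis itself forces $\eval{E}(\ps) \geq \nicefrac{1}{k}$). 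Two minor caveats in your sketch: the \lrule{seq} case of the auxiliary lemma is also not entirely trivial, since bounding $E_1 + E_2'$ from below requires the same detour through the postcondition $E_2 \leq u$ and validity of the premise (so the loop is not the ``only'' subtle case); and both your termination/descent argument and the paper's step $\opSem{\stmt_1}{\ps} \models R(z)$ rest on soundness of Nielson's calculus, which should be stated as an assumption or cited. Neither affects correctness.
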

\addtocounter{theorem}{-1}
\endgroup
\begin{proof}
  We show the following claim by induction on the structure of program statements:
  For all assertions $P,Q$, $\stmt \in \Stmt_{\det}$ and arithmetic expressions $E$, 
  \[ \nprove \htriple{P}{\stmt}{E}{Q} \]
  implies that there exists $k \in \Nats$ such that for all $\ps \in \ProgramStates$, we have 
  \[ \eet{\stmt}{\ctert{0}}(\ps) \leq k \cdot \eval{E}(\ps). \]
  Clearly this implies \autoref{thm:det-eet-completeness}.
  
  \paragraph{The effectless program} $\stmt = \SKIP$.
  Assume 
  \[ \nprove \htriple{P}{\SKIP}{E}{Q}. \] 
  Then there exists an assertion $R$ such that
  \[ \nprove \htriple{R}{\SKIP}{\one}{R} \]
  and 
  $P \Rightarrow P' \wedge 1 \leq k \cdot E$ are $R \Rightarrow Q$ are valid for some $k \in \Nats$.
  Hence there exists a $k \in \Nats$ such that $\eet{\SKIP}{\ctert{0}}(\ps) = 1 \leq k \cdot \eval{E_1}(\ps)$ for each $\ps \in \ProgramStates$.
  
  \paragraph{The assignment} $\stmt = \ASSIGN{x}{E}$.
  Analogous to $\SKIP$.
  
  \paragraph{The sequential composition} $\stmt = \stmt_1;\stmt_2$.
  Assume 
  \[ \nprove \htriple{P}{\stmt_1;\stmt_2}{E}{Q}. \] 
  Then there exists an assertion $R$ and arithmetic expressions $E_1,E_2,E_2'$ such that 
  \[  \nprove \htriple{P \wedge E_2' = u}{\stmt_1}{E_1}{R \wedge E_2 \leq u} \] 
  and 
  \[  \nprove \htriple{R}{\stmt_2}{E_2}{Q} \] 
  for some fresh logical variable $u$.
  By I.H. there exists a $k \in \Nats$ such that for all $\ps \in \ProgramStates$,
  \begin{align*}
   \eet{\stmt_1}{\ctert{0}}(\ps) \leq k \cdot \eval{E_1}(\ps) \text{ and } \eet{\stmt_2}{\ctert{0}}(\ps) \leq k \cdot \eval{E_2}(\ps). \tag{$\ast$}
  \end{align*}
  In particular, 
  \begin{align*}
     \eet{\stmt_2}{\ctert{0}}(\opSem{\stmt_1}{\ps}) \leq  k \cdot \eval{E_2}(\opSem{\stmt_1}{\ps}) \leq k \cdot \eval{E_2'}(\ps).  \tag{$\dag$}
  \end{align*}
  Hence
  \begin{align*}
    & \eet{\stmt_1;\stmt_2}{\ctert{0}}(\ps) \\
    & ~=~ \eet{\stmt_1}{\ctert{0}}(\ps) + \eet{\stmt_2}{\ctert{0}}(\opSem{\stmt_1}{\ps}) \tag{\autoref{thm:det-eet-sum}}\\
    & ~\leq~ k \cdot \eval{E_1}(\ps) + \eet{\stmt_2}{\ctert{0}}(\opSem{\stmt_1}{\ps}) \tag{by $\ast$} \\
    & ~\leq~ k \cdot \eval{E_1}(\ps) + k \cdot \eval{E_2'}(\ps) \tag{by $\dag$} \\ 
    & ~=~ k \cdot (\eval{E_1}(\ps) + \eval{E_2'}(\ps)).
  \end{align*}

  \paragraph{Conditionals} $\stmt' = \ITE{\pguard}{\stmt_1}{\stmt_2}$
  Assume 
  \[ \nprove \htriple{P}{\ITE{\pguard}{\stmt_1}{\stmt_2}}{E}{Q}. \] 
  Then
  \[
    \nprove \htriple{P \wedge \pguard}{\stmt_1}{E}{Q} \text{ and } \nprove \htriple{P \wedge \neg \pguard}{\stmt_2}{E}{Q}.
  \]
  By I.H. there exists a $k \in \Nats$ such that for each $\ps \in \ProgramStates$ and $i \in \{1,2\}$
  \[ \eet{\stmt_i}{\ctert{0}}(\ps) \leq k \cdot \eval{E_1}(\ps). \tag{$\ast$} \]
  Thus
  \begin{align*}
    & \eet{\ITE{\pguard}{\stmt_1}{\stmt_2}}{\zero}(\ps) \\
    & ~=~ \one + \ExpToFun{\pguard}(\ps) \cdot \eet{\stmt_1}{\ctert{0}}(\ps) + \ExpToFun{\neg\pguard}(\ps) \cdot \eet{\stmt_2}{\ctert{0}}(\ps) \tag{\autoref{table:eet-rules}} \\
    & ~\leq~ k + \ExpToFun{\pguard}(\ps) \cdot k \cdot \eval{E_1}(\ps) + \ExpToFun{\neg\pguard}(\ps) \cdot k \cdot \eval{E_1}(\ps) \tag{by $\ast$} \\
    & ~\leq~ (3 \cdot k) \cdot \eval{E_1}(\ps).
  \end{align*}

  \paragraph{Loops} $\stmt' = \WHILEDO{\pguard}{\stmt_1}$. 
  Assume 
  \[ \nprove \htriple{P}{\WHILEDO{\pguard}{\stmt_1}}{E}{Q}. \]
  Then there exists an assertion $R(z)$ such that $P \Rightarrow \exists z \mydot R(z)$ 
  and $R(0) \Rightarrow Q$ are valid.
  Furthermore, there exists $z \in \Nats$ such that
  \begin{align*}
   \nprove \htriple{R(z+1) \wedge E' = u}{\stmt_1}{E_1}{R(z) \wedge E \leq u} \tag{$\ast$}  
  \end{align*}
  for some fresh logical variable $u$.
  Additionally, for each $z \in \Nats$, the side conditions 
  \begin{align*}
  R(z+1) \Rightarrow \pguard \wedge E \geq E_1 + E' \text{ as well as } R(0) \Rightarrow \neg \pguard \wedge E \geq 1 \tag{$\dag$}
  \end{align*}
  are valid.
  Our goal is to show that there exists $k \in \Nats$ such that for all $\ps \in \ProgramStates$,
  \begin{align*} 
    \eet{\WHILEDO{\pguard}{\stmt_1}}{\ctert{0}}(\ps) \leq k \cdot \eval{E}(\ps).  
    \tag{$\spadesuit$}  
  \end{align*}
  By I.H. there exists a $k' \in \Nats$ such that for each $\ps \in \ProgramStates$, 
  \begin{align*} 
    \eet{\stmt_1}{\ctert{0}}(\ps) \leq k' \cdot \eval{E_1}(\ps). 
    \tag{$\clubsuit$}  
  \end{align*}

  We now show by complete induction over $z \in \Nats$ that for all $\ps \in \ProgramStates$ with
  $\ps \models R(z)$ and 
  \begin{align*}
     \nprove \htriple{R(z)}{\WHILEDO{\pguard}{\stmt_1}}{E}{R(0)}, 
  \end{align*}
  we have $(k'+1) \cdot \eval{E}(\ps) \geq \eet{\WHILEDO{\pguard}{\stmt_1}}{\ctert{0}}(\ps)$.
  
  For the base case $z=0$ the side condition $R(0) \Rightarrow \neg \pguard \wedge E \geq 1$
  yields 
  \begin{align*}
    & (k'+1) \cdot  \eval{E}(\ps) \\
    & ~\geq~ \one \tag{by $\dag$} \\
    & ~=~ \one + \ExpToFun{\pguard}(\ps) \cdot \eet{\stmt_1}{\ctert{0}}(\ps) + \ExpToFun{\neg\pguard}(\ps) \cdot \zero \\
    & ~=~ \eet{\WHILEDO{\pguard}{\stmt_1}}{\ctert{0}}(\ps). \tag{\autoref{table:eet-rules}}
  \end{align*}
  
  Now let $\ps \models R(z+1)$.
  Then the side condition $R(z+1) \Rightarrow \pguard \wedge E \geq E_1 + E'$ yields 
  \begin{align*}
   & (k'+1) \cdot \eval{E}(\ps) \\
   & ~\geq~ (k'+1) \cdot (\eval{E_1}(\ps) + \eval{E'}(\ps)) \tag{by $\dag$} \\
   & ~\geq~ (k'+1) \cdot \eval{E_1}(\ps) + (k'+1) \cdot \eval{E}(\opSem{\stmt_1}{\ps}) \tag{Postcondition of ($\ast$)} \\
   & ~\geq~ (k'+1) \cdot \eval{E_1}(\ps) + \eet{\WHILEDO{\pguard}{\stmt_1}}{\ctert{0}}(\opSem{\stmt_1}{\ps}) \\
   & \qquad \text{(I.H., $\opSem{\stmt_1}{\ps} \models R(z)$)} \\
   & ~\geq~ \eval{E_1}(\ps) + \eet{\WHILEDO{\pguard}{\stmt_1}}{\ctert{0}}(\opSem{\stmt_1}{\ps}) \tag{by $\clubsuit$} \\
   & ~=~ \eet{\WHILEDO{\pguard}{\stmt_1}}{\ctert{0}}(\ps) \tag{\autoref{thm:det-eet-loop}}
  \end{align*}
  which completes the inner induction and implies $(\spadesuit)$ for $k = k'+1$.
  \qed
\end{proof}

\section{Omitted Calculations}
\subsection{Invariant Verification for the Random Walk}
\label{sec:app-random-walk}

First we verify that 
\begin{align*}
	I_n ~=~ \one + \sum_{k=0}^{n} \,\eval{x \,{>}\, k} \cdot a_{n,k}
\end{align*}
is a lower
$\omega$--invariant of the loop with respect to $\ctert{0}$ if for all $n \geq 0$,
\begin{align}
 a_{0,0} &\:=\: 1 \label{eq:rwalk-rec1}\\[0.5ex]
 a_{n+1,0} &\:=\: 2 + \tfrac{1}{2} \cdot (a_{n,0} + a_{n,1})\label{eq:rwalk-rec2}\\[0.5ex]
a_{n+1,k} &\:=\: \tfrac{1}{2} \cdot (a_{n,k-1} + a_{n,k+1}), \text{ for all } 1 \,{\leq}\,
  k \,{\leq}\, n{+}1\label{eq:rwalk-rec3}\\[0.5ex]
a_{n,k} &\:=\: 0, \text{ for all } k > n~. \label{eq:rwalk-rec4}
\end{align}
Let $F$ be the characteristic functional of the loop with respect to
\ctert{0}. Then for the first condition $F(\zero) \succeq I_0$ we have
\begin{align*}
F(\zero) 	~=~
  &\one + \eval{x \leq 0} \cdot \zero + \eval{x > 0} \cdot \eet{C}{\zero}\\
~=~	&\one + \eval{x > 0} \cdot \Bigl(\one + \tfrac{1}{2} \cdot
          \zero\subst{x}{x-1} + \tfrac{1}{2} \cdot
          \zero\subst{x}{x+1}\Bigr) \\
~=~	&\one + \eval{x > 0} \cdot \one\\
~=~	&\one + \eval{x > 0} \cdot a_{0,0} = I_0 \tag{Eq.~\ref{eq:rwalk-rec1}}
\end{align*}
For the second condition $F(I_n) \succeq I_{n+1}$, consider
\begin{align*}
F(I_n) ~=~%
				&\one + \eval{x \leq 0} \cdot \zero + \eval{x >
                                  0} \cdot \eet{C}{I_n}\\
  ~=~	&\one + \eval{x > 0} \cdot \Bigl(\one + \tfrac{1}{2} \cdot
          I_n\subst{x}{x-1} + \tfrac{1}{2} \cdot I_n\subst{x}{x+1}\Bigr) \\
  ~=~	&\one + \eval{x > 0} \cdot \left(\ctert{2} + \tfrac{1}{2} \cdot
          \sum_{k=0}^{n} \,\eval{x{-}1 \,{>}\, k} \cdot a_{n,k} + \tfrac{1}{2} \cdot \sum_{k=0}^{n} \,\eval{x{+}1 \,{>}\, k} \cdot a_{n,k}\right) \\
  ~=~	&\one + \eval{x > 0} \cdot \left(\ctert{2} + \tfrac{1}{2} \cdot
          \sum_{k=1}^{n+1} \,\eval{x \,{>}\, k} \cdot a_{n,k-1} + \tfrac{1}{2} \cdot \sum_{k=-1}^{n-1} \,\eval{x\,{>}\, k} \cdot a_{n,k+1}\right) \\
  ~=~	&\one + \eval{x > 0} \cdot \biggl(\ctert{2} + \tfrac{1}{2} \cdot \eval{x
          \,{>}\, {-}1} \cdot a_{n,0} + \tfrac{1}{2} \cdot \eval{x
          \,{>}\, 0} \cdot a_{n,1} \biggr.\\
  & \biggl. \qquad \qquad \qquad \quad\  +\: \tfrac{1}{2} \cdot \sum_{k=1}^{n-1} \,\eval{x \,{>}\, k} \cdot
  (a_{n,k-1} + a_{n,k+1}) \biggr.\\
  & \biggl. \qquad \qquad \qquad \quad\  +\: \tfrac{1}{2} \cdot \eval{x
          \,{>}\, n} \cdot a_{n,n-1} + \tfrac{1}{2} \cdot \eval{x
          \,{>}\, n+1} \cdot a_{n,n}  \biggr)%
\intertext{We distribute $\eval{x \,{>}\, 0}$ and use the fact that $\eval{x
          \,{>}\, 0} \cdot \eval{x \,{>}\, a} = \eval{x \,{>}\, \max \{0,a\}}$
    to obtain}
  ~=~	&\one +  \eval{x \,{>}\, 0} \,{\cdot}\,  \bigl(
          \ctert{2} + \tfrac{1}{2} \,{\cdot}\, ( a_{n,0} \,{+}\, a_{n,1} )\bigr) + \tfrac{1}{2} \,{\cdot} \sum_{k=1}^{n-1} \,\eval{x \,{>}\, k} \cdot
  (a_{n,k-1} \,{+}\, a_{n,k+1}) \\
  & \qquad \qquad \quad  +\: \tfrac{1}{2} \cdot \eval{x
          \,{>}\, n} \cdot a_{n,n-1} + \tfrac{1}{2} \cdot \eval{x
          \,{>}\, n+1} \cdot a_{n,n} \\
  ~=~	&\one +  \eval{x \,{>}\, 0} \,{\cdot}\,  a_{n+1,0} + \tfrac{1}{2} \,{\cdot} \sum_{k=1}^{n-1} \,\eval{x \,{>}\, k} \cdot
  (a_{n,k-1} \,{+}\, a_{n,k+1}) \tag{Eq.~\ref{eq:rwalk-rec2}}\\
  & \qquad \qquad \quad  +\: \tfrac{1}{2} \cdot \eval{x
          \,{>}\, n} \cdot a_{n,n-1} + \tfrac{1}{2} \cdot \eval{x
          \,{>}\, n+1} \cdot a_{n,n}  \\
  ~=~	&\one +  \eval{x \,{>}\, 0} \,{\cdot}\,  a_{n+1,0} + \tfrac{1}{2} \,{\cdot} \sum_{k=1}^{n-1} \,\eval{x \,{>}\, k} \cdot
  (a_{n,k-1} \,{+}\, a_{n,k+1}) \tag{Eq.~\ref{eq:rwalk-rec4}} \\
  & \qquad\qquad\quad  +\: \tfrac{1}{2} \cdot \eval{x
          \,{>}\, n} \cdot (a_{n,n-1} + a_{n,n+1}) + \tfrac{1}{2} \cdot \eval{x
          \,{>}\, n+1} \cdot (a_{n,n} + a_{n,n+2}) \\
  ~=~	&\one +  \eval{x \,{>}\, 0} \,{\cdot}\,  a_{n+1,0} + \tfrac{1}{2} \,{\cdot} \sum_{k=1}^{n+1} \,\eval{x \,{>}\, k} \cdot
  (a_{n,k-1} \,{+}\, a_{n,k+1}) \\
  ~=~	&\one +  \eval{x \,{>}\, 0} \,{\cdot}\,  a_{n+1,0} + \sum_{k=1}^{n+1} \,\eval{x \,{>}\, k} \cdot
  a_{n+1,k} \tag{Eq.~\ref{eq:rwalk-rec3}}\\
  ~=~	&\one +  \sum_{k=0}^{n+1} \,\eval{x \,{>}\, k} \cdot
  a_{n+1,k} ~=~ I_{n+1}
\end{align*}
Now we show that 
\[
a_{n,k} = \frac{1}{2^{n}} \left[- \binom{n}{\lfloor\!\frac{n-k}{2}\!\rfloor} + 2 \,
\sum_{i=0}^{n-k}
2^i \binom{n-i}{\lfloor\!\frac{n-i-k}{2}\!\rfloor} \right]
\]
satisfies the above recursion. Here, we assume that $\binom{n}{m}$ is $0$
whenever $m < 0$. Equations~\ref{eq:rwalk-rec1} and \ref{eq:rwalk-rec4} are
immediate. 
For \autoref{eq:rwalk-rec2}, i.e.\ $a_{n+1,0} = 2 + \nicefrac{1}{2} \cdot (a_{n,0} + a_{n,1})$, we make use of the identity
\begin{align*}
  {k \choose \lfloor \frac{k+1}{2} \rfloor} ~ = ~ {k \choose \lfloor \frac{k}{2} \rfloor} \tag{$\bigstar$}
\end{align*}
which is shown by a simple case analysis on $k$ being even or odd. Then
\begin{align*}
 a_{n+1,0} ~=~ & \frac{1}{2^{n+1}} \left[ - {n+1 \choose \lfloor \frac{n+1}{2} \rfloor} + 2 \sum_{i=0}^{n+1} 2^i {n+1-i \choose \lfloor \frac{n+1-i}{2} \rfloor} \right] \tag{Def. $a_{n+1,0}$} \\
           ~=~ & \frac{1}{2^{n+1}} \Bigg[
                     - {n \choose \lfloor \frac{n+1}{2} \rfloor} - {n \choose \lfloor \frac{n+1}{2} \rfloor -1} \tag*{$\left(\text{\small ${{k+1}\choose{\ell+1}} = {{k}\choose{\ell}} + {{k}\choose{\ell+1}}$}\right)$} \\
               & \qquad \qquad + 2 \sum_{i=0}^{n+1} 2^i \left( {n-i \choose \lfloor \frac{n+1-i}{2} \rfloor} + {n-i \choose \lfloor \frac{n+1-i}{2} \rfloor - 1} \right)
                 \Bigg] \\
	~=~ & \frac{1}{2^{n+1}} \Bigg[
                     - {n \choose \lfloor \frac{n+1}{2} \rfloor} - {n \choose \lfloor \frac{n+1}{2} \rfloor -1} \\
               & \qquad \qquad + 2^{n+2} + 2 \sum_{i=0}^{n} 2^i \left( {n-i \choose \lfloor \frac{n+1-i}{2} \rfloor} + {n-i \choose \lfloor \frac{n+1-i}{2} \rfloor - 1} \right)
                 \Bigg] \\
           ~=~ & 2 + \frac{1}{2} \cdot \Bigg( \frac{1}{2^n} \left[ - {n \choose \lfloor \frac{n-1}{2} \rfloor} + 2 \sum_{i=0}^{n-1} 2^i {n-i \choose \lfloor \frac{n-i-1}{2} \rfloor} \right] \\
               & \qquad \qquad + \frac{1}{2^n} \left[ - {n \choose \lfloor \frac{n+1}{2} \rfloor} + 2 \sum_{i=0}^{n} 2^i {n-i \choose \lfloor \frac{n+1-i}{2} \rfloor} \right] \Bigg) \tag*{$\left(\text{by {\small ${{0}\choose{-1}} = 0$} and {\small $\lfloor \frac{n+1}{2} \rfloor  - 1 = \lfloor \frac{n-1}{2} \rfloor$}}\right)$}  \\
           ~=~ & 2 + \frac{1}{2} \left( a_{n,1} + \frac{1}{2^n} \left[ - {n \choose \lfloor \frac{n+1}{2} \rfloor} + 2 \sum_{i=0}^{n} 2^i {n-i \choose \lfloor \frac{n+1-i}{2} \rfloor} \right] \right) \tag{Def. $a_{n,1}$} \\
           ~=~ & 2 + \frac{1}{2} \left( a_{n,1} + \frac{1}{2^n} \left[ - {n \choose \lfloor \frac{n}{2} \rfloor} + 2 \sum_{i=0}^{n} 2^i {n-i \choose \lfloor \frac{n-i}{2} \rfloor} \right] \right) \tag{using $\bigstar$} \\
           ~=~ & 2 + \frac{1}{2} \left( a_{n,1} + a_{n,0} \right) \tag{Def. $a_{n,0}$}
\end{align*}

It remains to show \autoref{eq:rwalk-rec3}, i.e.\ $a_{n+1,k} = \tfrac{1}{2} \cdot (a_{n,k-1} + a_{n,k+1}), \text{ for all } 1 \leq  k \leq n+1$:
\begin{align*}
 a_{n+1,k} ~=~ & \frac{1}{2^{n+1}} \left[ - {n+1 \choose \lfloor \frac{n+1-k}{2} \rfloor} + 2 \sum_{i=0}^{n+1-k} 2^i {n+1-i \choose \lfloor \frac{n+1-i-k}{2} \rfloor} \right] \tag{Def. $a_{n+1,k}$} \\
           ~=~ & \frac{1}{2^{n+1}} \left[ - {n+1 \choose \lfloor \frac{n+1-k}{2} \rfloor} + 2^{n+2-k} + 2 \sum_{i=0}^{n-k} 2^i {n+1-i \choose \lfloor \frac{n+1-i-k}{2} \rfloor} \right] \tag*{$\left(\text{\small ${{m}\choose{0}} = 1$}\right)$}   \\
           ~=~ & \frac{1}{2^{n+1}} \Bigg[ - {n \choose \lfloor \frac{n-(k-1)}{2} \rfloor} - {n \choose \lfloor \frac{n-(k+1)}{2} \rfloor} + 2^{n+2-k} \tag*{$\left(\text{\small ${{k+1}\choose{\ell+1}} = {{k}\choose{\ell}} + {{k}\choose{\ell+1}}$}\right)$} \\
               & \qquad \qquad + 2 \sum_{i=0}^{n-k} 2^i {n-i \choose \lfloor \frac{n-i-(k-1)}{2} \rfloor} + 2 \sum_{i=0}^{n-k} 2^i {n-i \choose \lfloor \frac{n-i-(k+1)}{2} \rfloor} \Bigg] \\
           ~=~ & \frac{1}{2^{n+1}} \Bigg[ - {n \choose \lfloor \frac{n-(k-1)}{2} \rfloor} - {n \choose \lfloor \frac{n-(k+1)}{2} \rfloor} + 2^{n+2-k} \tag*{$\left(\text{\small ${{m}\choose{-1}} = 0$}\right)$} \\
               & \qquad \qquad + 2 \sum_{i=0}^{n-k} 2^i {n-i \choose \lfloor \frac{n-i-(k-1)}{2} \rfloor} + 2 \sum_{i=0}^{n-(k+1)} 2^i {n-i \choose \lfloor \frac{n-i-(k+1)}{2} \rfloor} \Bigg] \\
           ~=~ & \frac{1}{2} \left( a_{n,k+1} \tag{Def. $a_{n,k+1}$} + \frac{1}{2^n} \left[ - {n \choose \lfloor \frac{n-(k-1)}{2} \rfloor} + 2 \sum_{i=0}^{n-(k-1)} 2^i {n-i \choose \lfloor \frac{n-i-(k-1)}{2} \rfloor} \right] \right) \\
           ~=~ & \frac{1}{2} \left( a_{n,k+1} + a_{n,k-1} \right) \tag{Def. $a_{n,k-1}$}
\end{align*}

Finally we prove that $\lim_{n \rightarrow \infty} a_{n,0} = \infty$. The crux of
the proof is showing that for all $n \geq 2$, $a_{n,0} \geq 1 + \mathcal{H}_{\lfloor\nicefrac n
  2\rfloor}$ where $\mathcal{H}_m$ denotes the $m$-th
Harmonic number, i.e.\
\begin{align*}
	\mathcal{H}_m ~=~ \sum_{k=1}^m \tfrac{1}{k}~.
\end{align*}
The result then follows since $\lim_{m \rightarrow \infty}
\mathcal{H}_m = \infty$. Calculations go as follows:
\begin{align*}
a_{n,0}
&~=~ \frac{1}{2^{n}} \left[- \binom{n}{\lfloor\!\frac{n}{2}\!\rfloor} + 2 \,
\sum_{i=0}^{n}
2^i \binom{n-i}{\lfloor\!\frac{n-i}{2}\!\rfloor} \right] \\
&~=~ \frac{1}{2^{n}} \left[- \binom{n}{\lfloor\!\frac{n}{2}\!\rfloor} + 2 \,
\sum_{k=0}^{n}
2^{n-k} \binom{k}{\lfloor\!\frac{k}{2}\!\rfloor} \right]
  \tag{Take $k=n-i$}\\
&~\geq~ \frac{1}{2^{n}} \left[- 2^n + 2 \,
\sum_{k=0}^{n}
2^{n-k} \binom{k}{\lfloor\!\frac{k}{2}\!\rfloor} \right] \tag*{$\left(\text{${{n}\choose{\lfloor\!\nicefrac{n}{2}\!\rfloor}} \leq 2^n$}\right)$}  \\
&~=~ -1 + 2 \sum_{k=0}^{n}
2^{-k} \binom{k}{\lfloor\!\frac{k}{2}\!\rfloor}\\
&~\geq~ -1 + 2 \sum_{j=0}^{\lfloor\nicefrac n 2\rfloor}
2^{-2j} \binom{2j}{j} \tag{Keep only even $k$'s}\\
&~\geq~ 1 + 2 \sum_{j=1}^{\lfloor\nicefrac n 2\rfloor}
2^{-2j} \binom{2j}{j} \tag{Extract $j=0$ out of the sum}\\
&~\geq~ 1 + 2 \sum_{j=1}^{\lfloor\nicefrac n 2\rfloor}
2^{-2j} \frac{2^{2j-1}}{\sqrt{j}} \tag*{$\left(\text{Stirling approximation: $\binom{2j}{j} \geq \frac{2^{2j-1}}{\sqrt{j}}$}\right)$} \\
&~=~ 1 + \sum_{j=1}^{\lfloor\nicefrac n 2\rfloor}
\frac{1}{\sqrt{j}}\\
&~\geq~ 1 + \sum_{j=1}^{\lfloor\nicefrac n 2\rfloor}
\frac{1}{j} = 1 + \mathcal{H}_{\lfloor\nicefrac n 2\rfloor}~. \tag{$\sqrt{j} \leq j$}
\end{align*}
\subsection{Invariant Verification for the Inner Loop of the Coupon Collector Algorithm}
\label{sec:app-coupon-collector-inner}

By definition of $\eetsymbol$ (cf. \autoref{table:eet-rules}), $\eet{C_\textnormal{in}}{f} = \lfp F_{f}(X)$, where
the characteristic functional $F_{f}(X)$ is given by:
\begin{align*}
 & F_{f}(X) \\
 & ~=~ \one + \cond{\cp{i} = 0} \cdot f + \cond{\cp{i} \neq 0} \cdot \eet{\APPASSIGN{i}{\mathtt{Unif}[1 \ldots N]}}{X} \\
 & ~=~ \one + \cond{\cp{i} = 0} \cdot f + \cond{\cp{i} \neq 0} \cdot \left( \one + \frac{1}{N} \cdot \sum_{k=1}^{N} X[i/k] \right) \\
\end{align*}

For simplicity, let
\begin{align*}
 G(f) = \sum_{j=1}^{N} \cond{\cp{j}=0}\cdot f[i/j].
\end{align*}

Moreover, recall the $\omega$--invariant of the inner loop proposed in \autoref{subsec:coupon}:

\begin{align*}
J_\textnormal{n}^f 	~=~ 	\ctert{1} ~+~ & \cond{\cp{i} = 0} \cdot f \\
        ~+~ & \cond{\cp{i} \neq 0} \cdot \sum_{k = 0}^{n}\left(\frac{\nocol}{N}\right)^k \left(2 + \frac{1}{N} \cdot \sum_{j=1}^{N} \eval{\cp{j}=0} \cdot f[i/j]\right) \\
        ~=~ 	\ctert{1} ~+~ & \cond{\cp{i} = 0} \cdot f \\
        ~+~ & \cond{\cp{i} \neq 0} \cdot \sum_{k = 0}^{n}\left(\frac{\nocol}{N}\right)^k \left(2 + \frac{G(f)}{N} \right) \\
\end{align*}

Our goal is to to apply \autoref{thm:omega-inv} to show that $J_{n}^{f}$ is a lower as well as an upper $\omega$--invariant.
We first show $F_{f}(\zero) = J_\textnormal{0}^{f}$:

\begin{align*}
 F_{f}(\zero) & ~=~ \one + \cond{\cp{i} = 0} \cdot f + \cond{\cp{i} \neq 0} \cdot \left( \one + \frac{1}{N} \cdot \sum_{k=1}^{N} \zero[i/k] \right) \\
          & ~=~ \one + \cond{\cp{i} = 0} \cdot f + \cond{\cp{i} \neq 0} \\
          & ~=~ J_\textnormal{0}^{f}
\end{align*}

Furthermore, we have to prove $F_{f}(J_\textnormal{n}^{f}) = J_\textnormal{n+1}^{f}$:

\begin{align*}
 & F_{f}(J_\textnormal{n}^{f}) \\
 & ~=~ \one + \cond{\cp{i} = 0} \cdot f + \cond{\cp{i} \neq 0} \cdot \left( \one + \frac{1}{N} \cdot \sum_{k=1}^{N} {J_\textnormal{n}^{f}}[i/k] \right) \tag{Def. $F_{f}$} \\
 & ~=~ \one + \cond{\cp{i} = 0} \cdot f 
            + \cond{\cp{i} \neq 0} \cdot ( \one + \frac{1}{N} \cdot \sum_{k=1}^{N} ( 
            \ctert{1} + \cond{\cp{k} = 0} \cdot f[i/k] \\
 &   \qquad + \cond{\cp{k} \neq 0} \cdot \sum_{\ell = 0}^{n}\left(\frac{\nocol}{N}\right)^{\ell} \cdot \left(2 + \frac{G(f)[i/k]}{N} \right)
            ) ) \tag{Def. $J_\textnormal{n}^f$} \\
 & ~=~ \one + \cond{\cp{i} = 0} \cdot f 
            + \cond{\cp{i} \neq 0} \cdot ( \one + \frac{1}{N} \cdot \sum_{k=1}^{N} ( 
            \ctert{1} + \cond{\cp{k} = 0} \cdot f[i/k] \\
 &   \qquad + \cond{\cp{k} \neq 0} \cdot \sum_{\ell = 0}^{n}\left(\frac{\nocol}{N}\right)^{\ell} \cdot \left(2 + \frac{G(f)}{N} \right)
            ) ) \tag{$k$ does not occur in $G(f)$} \\
 & ~=~ \one + \cond{\cp{i} = 0} \cdot f + \ctert{2} \cdot \cond{\cp{i} \neq 0} + \frac{\cond{\cp{i} \neq 0}}{N} \cdot \sum_{k=1}^{N} ( 
            \cond{\cp{k} = 0} \cdot f[i/k] \\
 &   \qquad + \cond{\cp{k} \neq 0} \cdot \sum_{\ell = 0}^{n}\left(\frac{\nocol}{N}\right)^{\ell} \cdot \left(2 + \frac{G(f)}{N} \right)
            ) \\
 & ~=~ \one + \cond{\cp{i} = 0} \cdot f + \ctert{2} \cdot \cond{\cp{i} \neq 0} \\
 &   \qquad + \frac{\cond{\cp{i} \neq 0}}{N} \cdot \sum_{k=1}^{N} ( \cond{\cp{k} = 0} \cdot f[i/k]) \\
 &   \qquad + \frac{\cond{\cp{i} \neq 0}}{N} \cdot \sum_{k=1}^{N} \cond{\cp{k} \neq 0} \cdot \sum_{\ell = 0}^{n}\left(\frac{\nocol}{N}\right)^{\ell} \cdot \left(2 + \frac {G(f)} N\right) \\
 & ~=~ \one + \cond{\cp{i} = 0} \cdot f + \ctert{2} \cdot \cond{\cp{i} \neq 0} \\
 &   \qquad + \frac{\cond{\cp{i} \neq 0}}{N} \cdot \sum_{k=1}^{N} ( \cond{\cp{k} = 0} \cdot f[i/k]) \\
 &   \qquad + \frac{\cond{\cp{i} \neq 0} \nocol}{N} \cdot \sum_{\ell = 0}^{n}\left(\frac{\nocol}{N}\right)^{\ell} \cdot \left(2 + \frac {G(f)} N\right) \tag{Def. $\nocol$} \\
 & ~=~ \one + \cond{\cp{i} = 0} \cdot f + \cond{\cp{i} \neq 0} \left( 2 + \frac{G(f)}{N} \right) \\
 &   \qquad + \frac{\cond{\cp{i} \neq 0} \nocol}{N} \cdot \sum_{\ell = 0}^{n}\left(\frac{\nocol}{N}\right)^{\ell} \cdot \left(2 + \frac {G(f)} N\right) \tag{Def. $G$} \\
 & ~=~ \one + \cond{\cp{i} = 0} \cdot f + \cond{\cp{i} \neq 0} \left( 2 + \frac{G(f)}{N} \right) \\
 &   \qquad + \cond{\cp{i} \neq 0} \cdot \sum_{\ell = 1}^{n+1}\left(\frac{\nocol}{N}\right)^{\ell} \cdot \left(2 + \frac {G(f)} N\right) \\
 & ~=~ \one + \cond{\cp{i} = 0} \cdot f + \cond{\cp{i} \neq 0} \cdot \sum_{\ell = 0}^{n+1}\left(\frac{\nocol}{N}\right)^{\ell} \cdot \left(2 + \frac {G(f)} N\right) \\
 & ~=~ J_\textnormal{n+1}^{f}
\end{align*}

Now, by \autoref{thm:omega-inv}, we obtain
\begin{align*}
 J^g = \lim_{n \to \infty} J_\textnormal{n}^g ~\preceq~ \eet{\stmt_{\textnormal{in}}}{g} ~\preceq~ \lim_{n \to \infty} J_\textnormal{n}^g = J^g.
\end{align*}

\subsection{Invariant Verification for the Outer Loop of the Coupon Collector Algorithm}
\label{sec:app-coupon-collector-outer}

We start by computing the characteristic functional $H$ of loop $C_\mathit{out}$ with
respect to run--time $\ctert{0}$:
\begin{align*}
 & H(Y) ~=~ \ctert{1} + \cond{x \leq 0} \cdot \zero + \cond{x > 0} \cdot \eet{\stmt_{\textnormal{in}};\ASSIGN{\cp{i}}{1};\ASSIGN{x}{x-1}}{Y} \\
 & ~=~ \ctert{1}  + \cond{x > 0} \cdot J^Y[x/x-1,\, \cp{i}/1] \tag{replace $g$ by $Y$ in ($\dagger$)}\\
 & ~=~ \ctert{1}  + \cond{x > 0} \cdot (3  + \cond{\cp{i} = 0} \cdot Y[x/x-1, \cp{i}/1] \tag{Def. $J^Y$} \\
 & ~+~  \cond{\cp{i} \neq 0} 
          \sum_{k=0}^{\infty} \left(\frac{\nocol}{N}\right)^{k} \left( 2 + \frac{1}{N} \sum_{j=1}^{N} \eval{\cp{j}=0} \cdot Y[x/x-1,\, \cp{i}/1,\, i/j] \right)
\end{align*}

Next we note a useful relationship between the number of collected and the number of missing coupons:
\begin{align*}
 N - \nocol ~=~ \sum_{i=1}^{N} (1 - \cond{\cp{i} \neq 0}) ~=~ \sum_{i=1}^{N} \cond{\cp{i} = 0}~. \tag{$\spadesuit$}
\end{align*}

Recall the $\omega$--invariant proposed for the outer loop in \autoref{subsec:coupon}:
\begin{align*}
 I_{n} 
 & ~=~ 1 + \sum_{\ell = 0}^{n} \cond{x > \ell} \cdot \left( 3 + \cond{n \neq 0} + 2 \cdot \sum_{k=0}^{\infty} \left( \frac{\nocol+\ell}{N} \right)^{k} \right) \\
 & \qquad - 2 \cdot \cond{\cp{i}=0} \cdot \cond{x > 0} \cdot \sum_{k=0}^{\infty} \left( \frac{\nocol}{N} \right)^{k}                      
\end{align*}
In order to keep calculations readable, let 
\begin{align*}
 {^{n} K_{{i}}^{{j}}} := \sum_{\ell = {i}}^{n} \cond{x > \ell {+j}}
              \cdot \left( 3 + \cond{n \neq 0} + 2 \cdot \sum_{k=0}^{\infty} \left( \frac{\nocol{+j}+\ell}{N} \right)^{k} \right).                
\end{align*}
Then, our proposed invariant can be written as
\[ I_{n} = 1 + {^{n}K_{0}^{0}} - 2 \cdot \cond{\cp{i}=0} \cdot \cond{x > 0} \cdot \sum_{k=0}^{\infty} \left( \frac{\nocol}{N} \right)^{k}. \]

Our goal is to to apply \autoref{thm:omega-inv} to show that $I_{n}$ is a lower as well as an upper $\omega$--invariant.
We first show $H(\zero) = I_\textnormal{0}$:
 
\begin{align*}
  H(\zero) & ~=~ 1 + \cond{x > 0} \cdot (3 + \cond{\cp{i} = 0} \cdot \zero[x/x-1, \cp{i}/1] \\
 &  \qquad    + 
          \cond{\cp{i} \neq 0}
          \cdot 
          \sum_{k=0}^{\infty} \left(\frac{\nocol}{N}\right)^{k} \\
 &        \cdot 
          \left(
              2 + \sum_{j=1}^{N} \frac{\cond{\cp{j} = 0}}{N} \cdot \zero[x/x-1, \cp{i}/1, i/j]
          \right) \\
 &        ) \\
 & ~=~ 1 + \cond{x > 0} \cdot \left(3 + 0 + \cond{\cp{i} \neq 0} \cdot \sum_{k=0}^{\infty} \left(\frac{\nocol}{N}\right)^{k} \cdot (2 + 0)\right) \\
 & ~=~ 1 + \cond{x > 0} \cdot \left(3 + 2 \cdot \cond{\cp{i} \neq 0} \cdot \sum_{k=0}^{\infty} \left(\frac{\nocol}{N}\right)^{k} \right) \\
 & ~=~ 1 + \cond{x > 0} \cdot \left(3 + 2 \cdot \sum_{k=0}^{\infty} \left(\frac{\nocol}{N}\right)^{k} \right) \tag{by $\spadesuit$} \\
 & \qquad - \cond{x > 0} \cdot \cond{\cp{i} = 0} \cdot \left( 2 \cdot \sum_{k=0}^{\infty} \left(\frac{\nocol}{N}\right)^{k}\right) \\
 & ~=~ 1 + {^{0}K_{0}^{0}} - 2 \cdot \cond{\cp{i}=0} \cdot \cond{x > 0} \cdot \sum_{k=0}^{\infty} \left( \frac{\nocol}{N} \right)^{k} \tag{Def. $^{0}K_{0}^{0}$} \\
 & ~=~ I_{0} \tag{Def. $I_{0}$}
\end{align*}

Before proving the reamining proof obligation, we note that
\begin{align*}
 & \cond{\cp{i} = 0} \cdot \nocol[\cp{i} / 1] \\
 & ~=~ \cond{\cp{i} = 0} \cdot \left( \sum_{j=1}^{N} \cond{\cp{j} \neq 0}[\cp{i}/1] \right) \tag{Def. $\nocol$} \\
 & ~=~ \cond{\cp{i} = 0} \cdot \left( \sum_{j=1}^{i-1} \cond{\cp{j} \neq 0} + \cond{\cp{i} \neq 0}[\cp{i}/1] + \sum_{j=i+1}^{N} \cond{\cp{j} \neq 0} \right) \\
 & ~=~ \cond{\cp{i} = 0} \cdot \left( 1 + \sum_{j=1}^{N} \cond{\cp{j} \neq 0} \right) \tag{ $\cond{\cp{i} = 0} \cdot \cond{\cp{i} \neq 0} = 0$ }\\
 & ~=~ \cond{\cp{i} = 0} \cdot \left( 1 + \nocol \right). \tag{Def. $\nocol$}
\end{align*}

As a result of this, we obtain
\begin{align*}
  & \cond{\cp{i} = 0} \cdot I_{n}[x/x-1, \cp{i}/1] \tag{$\clubsuit$} \\
  & ~=~ \cond{\cp{i} = 0} \cdot I_{n}[x/x-1, \cp{i}/1, i/j] \\
  & ~=~ \cond{\cp{i} = 0} \cdot (1+{^n K_{0}^{1}}).
\end{align*}

We are now in a position to show $H(I_{n}) = I_{n+1}$. 
\begin{align*}
 & H(I_{n}) \\
 & ~=~ 1 + \cond{x > 0} \cdot (3 + \cond{\cp{i} = 0} \cdot I_{n}[x/x-1, \cp{i}/1] \tag{Def. $H(Y)$} \\
 & \qquad + 
          \cond{\cp{i} \neq 0}
          \cdot 
          \sum_{k=0}^{\infty} \left(\frac{\nocol}{N}\right)^{k} \\
 & \qquad       \cdot 
          \left(
              2 + \sum_{j=1}^{N} \frac{\cond{\cp{j} = 0}}{N} \cdot I_{n}[x/x-1, \cp{i}/1, i/j]
          \right) \\
 & \qquad ) \\
 & \\
 & ~=~ 1 + \cond{x > 0} \cdot (3 + \cond{\cp{i} = 0} \cdot {(1+{^n K_{0}^{1}})} \tag{by $\clubsuit$}  \\
 & \qquad       + \cond{\cp{i} \neq 0} \cdot \sum_{k=0}^{\infty} \left(\frac{\nocol}{N}\right)^{k}
          \cdot \left(
              2 + \sum_{j=1}^{N} \frac{\cond{\cp{j} = 0}}{N} \cdot {(1+{^n K_{0}^{1}})}
          \right) \\
 & \qquad       ) \\
 & ~=~ 1 + \cond{x > 0} \cdot (3 + \cond{\cp{i} = 0} \cdot {(1+{^{n+1} K_{1}^{0}})} \tag{$^n K_{0}^{1} = ^{n+1} K_{1}^{0}$}  \\
 & \qquad       + \cond{\cp{i} \neq 0} \cdot \sum_{k=0}^{\infty} \left(\frac{\nocol}{N}\right)^{k}
          \cdot \left(
              2 + \sum_{j=1}^{N} \frac{\cond{\cp{j} = 0}}{N} \cdot {(1+{^{n+1} K_{1}^{0}})}
          \right) \\
 & \qquad       ) \\
 & \\
 & ~=~ 1 + \cond{x > 0} \cdot (3 + \cond{\cp{i} = 0} \cdot {(1+{^{n+1} K_{1}^{0}})}  \tag{by $\spadesuit$} \\
 & \qquad     + \cond{\cp{i} \neq 0} \cdot \sum_{k={0}}^{\infty} \left(\frac{\nocol}{N}\right)^{k} \cdot \left(
            2 + \left(1 - \frac{\nocol}{N}\right) \cdot \left(1+{^{n+1} K_{1}^{0}} \right)
          \right) \\
 & \qquad ) \\
 & \\
 & ~=~ 1 + \cond{x > 0} \cdot (3 + \cond{\cp{i} = 0} \cdot {(1+{^{n+1} K_{1}^{0}})}  \\
 & \qquad     + \cond{\cp{i} \neq 0} \cdot \sum_{k={0}}^{\infty} \left(\frac{\nocol}{N}\right)^{k} \cdot \left(
            3 + {^{n+1} K_{1}^{0}} - \frac{\nocol}{N} \left(1 + {^{n+1} K_{1}^{0}}\right) \right) \\
 & \qquad ) \\
 & \\
 & ~=~ 1 + \cond{x > 0} \cdot (4 + {^{n+1} K_{1}^{0}}  \\
 & \qquad     + \cond{\cp{i} \neq 0} \cdot \sum_{k={1}}^{\infty} \left(\frac{\nocol}{N}\right)^{k} \cdot \left(
            3 + {^{n+1} K_{1}^{0}} - \frac{\nocol}{N} \left(1 + {^{n+1} K_{1}^{0}}\right) \right) \\
 & \qquad     + \cond{\cp{i} \neq 0} \cdot \left( 2 - \frac{\nocol}{N} \left(1 + {^{n+1} K_{1}^{0}}\right) \right) ) \\
 & \\
 & ~=~ 1 + \cond{x > 0} \cdot (4 + {^{n+1} K_{1}^{0}}  \\
 & \qquad     + \cond{\cp{i} \neq 0} \cdot \sum_{k={1}}^{\infty} \left(\frac{\nocol}{N}\right)^{k} \cdot \left(
            (1 - \frac{\nocol}{N}) \cdot \left(1 + {^{n+1} K_{1}^{0}}\right) \right) \\
 & \qquad     + \cond{\cp{i} \neq 0} \cdot \left( 2 - \frac{\nocol}{N} \left(1 + {^{n+1} K_{1}^{0}}\right) \right) + 2 \cdot \cond{\cp{i} \neq 0} \cdot \sum_{k={1}}^{\infty} \left(\frac{\nocol}{N}\right)^{k} ) \\
 & \\
 & ~=~ 1 + \cond{x > 0} \cdot (4 + {^{n+1} K_{1}^{0}}  \\
 & \qquad     + \cond{\cp{i} \neq 0} \cdot \left(1 + {^{n+1} K_{1}^{0}}\right) \cdot \left(\sum_{k={1}}^{\infty} \left(\frac{\nocol}{N}\right)^{k} - \sum_{k={1}}^{\infty} \left(\frac{\nocol}{N}\right)^{k+1}\right) \\
 & \qquad     + \cond{\cp{i} \neq 0} \cdot \left( 2 - \frac{\nocol}{N} \left(1 + {^{n+1} K_{1}^{0}}\right) \right) + 2 \cdot \cond{\cp{i} \neq 0} \cdot \sum_{k={1}}^{\infty} \left(\frac{\nocol}{N}\right)^{k} ) \\
 & \\
 & ~=~ 1 + \cond{x > 0} \cdot (4 + {^{n+1} K_{1}^{0}} + \cond{\cp{i} \neq 0} \cdot \left(1 + {^{n+1} K_{1}^{0}}\right) \cdot \left(\frac{\nocol}{N}\right) \\
 & \qquad     + \cond{\cp{i} \neq 0} \cdot \left( 2 - \frac{\nocol}{N} \left(1 + {^{n+1} K_{1}^{0}}\right) \right)  + 2 \cdot \cond{\cp{i} \neq 0} \cdot \sum_{k={1}}^{\infty} \left(\frac{\nocol}{N}\right)^{k} ) \\
 & \\
 & ~=~ 1 + \cond{x > 0} \cdot \left(4 + {^{n+1} K_{1}^{0}} + 2 \cdot \cond{\cp{i} \neq 0} + 2 \cdot \cond{\cp{i} \neq 0} \cdot \sum_{k={1}}^{\infty} \left(\frac{\nocol}{N}\right)^{k} \right) \\
 & \\
 & ~=~ 1 + \cond{x > 0} \cdot \left(4 + {^{n+1} K_{1}^{0}} + 2 \cdot \cond{\cp{i} \neq 0} \cdot \sum_{k={0}}^{\infty} \left(\frac{\nocol}{N}\right)^{k} \right) \\
 & \\
 & ~=~ 1 + {^{n+1} K_{1}^{0}}
         + \cond{x > 0} \left(4 + 2 \cdot \cond{\cp{i} \neq 0} \cdot \sum_{k={0}}^{\infty} \left(\frac{\nocol}{N}\right)^{k} \right) \\
 & \\
 & ~=~ 1 + {^{n+1} K_{1}^{0}} + \cond{x > 0} \left(4 + 2 \cdot \cond{\cp{i} \neq 0} \cdot \sum_{k={0}}^{\infty} \left(\frac{\nocol}{N}\right)^{k} \right) \\
 & \qquad + (2-2) \cdot \cond{x > 0} \cdot \cond{\cp{i} = 0} \cdot \sum_{k={0}}^{\infty} \left(\frac{\nocol}{N}\right)^{k} \\
 & \\
 & ~=~ 1 + {^{n+1} K_{1}^{0}}
         + \cond{x > 0} \left( 4 + 2 \cdot \sum_{k={0}}^{\infty} \left(\frac{\nocol}{N}\right)^{k} \right) \\
 & \qquad - 2 \cdot \cond{x > 0} \cdot \cond{\cp{i} = 0} \cdot \sum_{k={0}}^{\infty} \left(\frac{\nocol}{N}\right)^{k} \\
 & \\
 & ~=~ 1 + {^{n+1} K_{1}^{0}}
         + \cond{x > 0} \left( 3 + \cond{n+1 \neq 0} + 2 \cdot \sum_{k={0}}^{\infty} \left(\frac{\nocol}{N}\right)^{k} \right) \\
 & \qquad - 2 \cdot \cond{x > 0} \cdot \cond{\cp{i} = 0} \cdot \sum_{k={0}}^{\infty} \left(\frac{\nocol}{N}\right)^{k} \\
 & \\
 & ~=~ 1 + {^{n+1} K_{0}^{0}} - 2 \cdot \cond{x > 0} \cdot \cond{\cp{i} = 0} \cdot \sum_{k={0}}^{\infty} \left(\frac{\nocol}{N}\right)^{k} \tag{Def. ${^{n+1} K_{0}^{0}}$}\\
 & \\
 & ~=~ I_{n+1} \tag{Def. of $I_{n}$}
\end{align*}

Now, by \autoref{thm:omega-inv}, we obtain
\begin{align*}
 I = \lim_{n \to \infty} I_\textnormal{n} ~\preceq~ \eet{\stmt_{\textnormal{out}}}{\zero} ~\preceq~ \lim_{n \to \infty} I_\textnormal{n} = I.
\end{align*}

\end{document}